\newtheorem{theorem}{Theorem}[section]
\newtheorem{lemma}[theorem]{Lemma}
\newtheorem{proposition}[theorem]{Proposition}
\newtheorem{corollary}[theorem]{Corollary}
\newdefinition{example}[theorem]{Example}
\newdefinition{definition}[theorem]{Definition}
\newdefinition{remark}[theorem]{Remark}
\newenvironment{proof}[1][Proof:]{\begin{trivlist}
\item[\hskip \labelsep {\bfseries #1}]}{\end{trivlist}}
\newcommand{\ifte}[3]{\mathsf{If\ }#1\mathsf{\ Then\ }#2\mathsf{\ Else\ }#3}
\newcommand{\Vd}{\mathcal{D}}
\newcommand{\Vu}{\mathcal{U}}
\newcommand{\eval}{e}
\newcommand{\val}{w}
\newcommand{\F}[1]{F{:}#1}
\newcommand{\T}[1]{T{:}#1}
\newcommand{\X}[2]{X_{#1}{:}#2}
\newcommand{\Y}[2]{Y_{#1}{:}#2}
\newcommand{\XX}[3]{X_{#1}^{#2}{:}#3}
\newcommand{\cut}{\mathcal{R}(\textsl{CUT\/})}
\newcommand{\reg}[1]{\mathcal{R}(#1)}
\newcommand{\abs}{\mathcal{R}(\textsl{ABS})}
\newcommand{\dom}{\textsl{dom}}
\newcommand{\Fs}{\F{}}
\newcommand{\Ts}{\T{}}
\newcommand{\C}{\theta}
\newcommand{\lthree}{\ensuremath{\textit{\L}_3}}
\begin{document}

\title{Bivalent semantics, generalized compositionality \\
       and analytic classic-like tableaux \\
       for finite-valued logics\tnoteref{t1}}

\tnotetext[t1]{The research reported in this paper falls within the scope of the EU FP7 Marie Curie PIRSES-GA-2012-318986 project \href{http://sqig.math.ist.utl.pt/GeTFun/}{GeTFun}: \textit{Generalizing Truth-Functionality}. The first author further acknowledges the support of FCT and EU FEDER via the project PEst-OE/EEI/LA0008/2013 of Instituto de Telecomunica\c c\~ oes.  The second author acknowledges partial support of CNPq.}

\author[lx]{Carlos Caleiro}
\ead{ccal@math.ist.utl.pt}
\address[lx]{SQIG-Instituto de Telecomunica\c c\~ oes and Dept.\ Mathematics, IST, U Lisboa, Portugal}

\author[rn]{Jo\~ao Marcos}
\ead{jmarcos@dimap.ufrn.br}
\address[rn]{LoLITA and Dept.\ of Informatics and Applied Mathematics, UFRN, Brazil}

\author[ve]{Marco Volpe}
\ead{marco.volpe@univr.it}
\address[ve]{Dipartimento di Informatica, Universit\`a di Verona, Italy}


\begin{abstract}
\noindent 
The paper is a contribution both to the theoretical foundations and to the actual construction of efficient automatizable proof procedures for non-classical logics.  We focus here on the case of finite-valued logics, and exhibit: (i)~a mechanism for producing a classic-like description of them in terms of an effective variety of bivalent semantics; (ii)~a mechanism for extracting, from the bivalent semantics so obtained, uniform (classically-labeled) cut-free standard analytic tableaux with possibly branching invertible rules and paired with proof strategies designed to guarantee termination of the associated proof procedure; (iii)~a mechanism to also provide, for the same logics, uniform cut-based tableau systems with linear rules.  The latter tableau systems are shown to be adequate even when restricted to analytic cuts, and they are also shown to polynomially simulate truth-tables, a feature that is not enjoyed by the former standard type of tableau systems (not even in the 2-valued case).  The results are based on useful generalizations of the notions of analyticity and compositionality, and illustrate a theory that applies to many other classes of non-classical logics.
\end{abstract}

\begin{keyword}
bivalent semantics, truth-functionality, compositionality, analyticity, tableaux, proof complexity.
\end{keyword}

\maketitle

\vspace{-4mm}
\section{Introduction} \label{aims}

\noindent
Our paper is a contribution to the modern study of deduction in many-valued logics, in line with the research from standard references such as~\cite{DBLP:books/el/RV01/BaazFS01,hah:AMVL:HPL}, and consummating the track of publications surveyed in~\cite{ccal:mar:survey12}.
The present paper deals with finite-valued logics --- logics whose connectives are semantically characterizable by truth-tables with a finite number of `algebraic truth-values'.  
We first recall that such logics may be alternatively characterized by way of bivalent semantics --- semantics with only two `logical values' (cf.~\cite{sus:75,ccal:car:con:mar:03a}).  Going beyond that, we show that such bivalent characterizations, based on a generalized notion of compositionality, can be produced in a constructive way, for any finite-valued logic.  Several technical problems that appear underway are shown to be circumventable.
Providing further evidence on how model-theoretic and proof-theoretic analyses have strong impact on each other, from our bivalent characterizations of finite-valued logics we show, in each case, how to extract adequate analytic classic-like tableau systems.  Analyticity, in these systems, is based on appropriate generalized versions of the subformula property and on the adoption, in each case, of convenient proof strategies.
While analytic tableaux for propositional logic are expected to yield decidability, there is no general reason to expect the associated decision procedure to be computationally feasible.  
In order to secure a measurable gain in proof complexity we show also how to extract, from our bivalent characterizations of finite-valued logics, alternative tableau systems that control the combinatorial explosion caused by intrinsic redundancies of usual analytic tableau methods. We show that these alternative systems can polynomially simulate truth-tables, the former thus not being `worse' than the latter.  Such cut-based tableaux generalize the so-called `KE system' for Classical Logic (cf.~\cite{dag:handbook:99}), in which all tableau rules are linear except for the (non-eliminable yet analytic) cut rule.

In Section \ref{TFvsBiv} we list the basic syntactical definitions about logics in general and the basic semantic definitions about finite-valued logics in particular, and contrast truth-functional semantics with classic-like (bivalent) semantics.  
Several well-known examples of truth-functional logics are introduced.  
Many-valued logics in general, and truth-functional logics in particular, are shown to be (non-constructively) reducible to bivalent semantics, alongside the lines of the so-called `Suszko's Thesis'.
To render such bivalent reduction constructive, for a given finite-valued truth-functional logic, a fine analysis of its expressiveness is due: in turn, we show how one may algorithmically check for sufficient expressiveness, and how one may generate upon demand a sufficiently expressive conservative extension of the given logic.
We then show how to produce an adequate classic-like characterization of any given finite-valued logic.  We also show that this characterization is rather robust: the collection of boolean statements that determines it, in the metalanguage, may be replaced by equivalent (and possibly more economical) collections of similar statements.
Our reductive mechanism gives rise to an effective variety of bivalent semantics, based on a generalization of the syntactical notion of subformula and a related broader take on the Principle of Compositionality of Meaning.

In Section \ref{TableauExtraction} we show how to deal with partial information and syntactical subtleties describing unobtainable semantic scenarios, that will lead to nonstandard additional closure rules in tableau systems.
Adequate classic-like tableaux are then shown to be extractible for each sufficiently expressive finite-valued logic.  
An extended notion of analyticity is guaranteed by a proof strategy to be coupled with a given proof system, based on an extended notion of formula complexity.

On the one hand, it is well known that proofs involving the cut rule (or, equivalently, modus ponens) can be dramatically shorter than the shortest cut-free proof of the same assertion (see, e.g.,~\cite{boolos:CUT}, and the discussion in \cite[Section 3.8]{dag:handbook:99}, where the introduction of cut-based KE tableau systems is motivated). On the other hand it is obvious that unrestricted use of cuts may lead to infinitary branching in proof search. Taking
those facts into account, restricted forms of cut have been investigated that imply gains in minimal proof size without rendering proof search unwieldy.  In particular, cut-based tableaux are based on a goal-directed form of employing analytic cuts, that is, cuts involving what we call generalized subformulas of formulas already to be found in a given branch.  Cut-based tableaux for Classical Logic are studied in~\cite{dag:mon:taming}.  In Section~\ref{CutBased} of the present paper we show how such systems may be uplifted to the realm of finite-valued logics.  Moreover, for Classical Logic it has been proved (cf.~\cite{dag:TabTT}) that (propositional) cut-based tableaux polynomially simulate the truth-table procedure while 
for some classes of formulas the shortest standard analytic tableaux may be exponentially larger than the truth-tables.
We extend these findings about proof complexity to finite-valued logics in general.

\section{Exploring the bivalence behind truth-functionality} \label{TFvsBiv}

\noindent
In what follows we propose a mechanism for producing a classic-like description of an arbitrary finite-valued logic in terms of an effective variety of bivalent semantics.  To accomplish such goal, we show how one may exploit the linguistic resources of a given logic, automatically checking for its sufficient expressiveness, and minimally extending it, in a conservative way, when necessary.

\subsection{Finite-valued logics}\label{sec:fvl}

\noindent 
Consider an alphabet consisting of a denumerable set~$\mathcal{A}$ of
\textsl{atomic variables} and a finite collection~$\Sigma$ of \textsl{connectives} (or \textsl{constructors}).  
By $\Sigma_k\subseteq\Sigma$ we will denote the collection of $k$-ary constructors in~$\Sigma$; the 0-ary connectives are also called \textsl{sentential constants}.
The set~$\mathcal{S}$ of \textsl{formulas}, as usual, is the carrier of the free $\Sigma$-algebra~$\mathbb{S}$ generated by~$\mathcal{A}$.
By $\varphi(q_1,\dots,q_k)$ we will denote a \textsl{statement-form} $\varphi\in\mathcal{S}$
written in the variables $q_1,\dots,q_k\in\mathcal{A}$; if $\psi=\varphi(\psi_1,\dots,\psi_k)$, for given $\psi_1,\dots,\psi_k\in\mathcal{S}$, we say that~$\psi$ is an \textsl{instance of}~$\varphi$.  By $\mathcal{S}(\varphi)$ we denote the set of all instances of~$\varphi$.
If~$\varphi\in\mathcal{S}$ contains some $k$-ary connective, for $k>1$, we call this formula \textsl{composite}; otherwise, that is, in case~$\varphi$ is either an atomic variable or a sentential constant, we call it \textsl{noncomposite}.  
The outermost constructor of a composite formula is called its \textsl{head} connective.  Formulas containing no atomic variables are called \textsl{ground}.  Given $\varphi={\odot}(\varphi_1,\varphi_2,\ldots,\varphi_k)$ in~$\mathcal{S}$, with ${\odot}\in\Sigma_k$, we call $\varphi_1,\ldots,\varphi_k\in\mathcal{S}$ the \textsl{immediate subformulas} of~$\varphi$.  
The set $\mathsf{sbf}(\varphi)$ of \textsl{subformulas of~$\varphi$} is obtained by closing~$\{\varphi\}$ under immediate subformulas, that is, it is the smallest set containing~$\varphi$ and the immediate subformulas of each element of~$\mathsf{sbf}(\varphi)$.  A \textsl{proper subformula of~$\varphi$} is any element of $\mathsf{sbf}(\varphi)\setminus\{\varphi\}$.  
These notions are extended from formulas to sets of formulas in the usual way.
A~canonical way of measuring the complexity of a given formula is by counting the nested occurrences of $k$-ary constructors in it, for $k>1$, that is, by inductively defining a mapping 
$\mathsf{dpth}:\mathcal{S}\longrightarrow\mathbb{N}$ such that:
$$\mathsf{dpth}(\varphi)=\left\{\begin{array}{ll}
				0 & \textrm{if }\varphi\textrm{ is noncomposite}\\[2mm]
				1{+}\mathop{\mathsf{Max}}\limits_{1\leq i\leq k}\mathsf{dpth}(\varphi_i) & \textrm{if }\varphi={\odot}(\varphi_1,\dots,\varphi_k)\\[-1mm]
				& \textrm{\quad for }{\odot}\in\Sigma_k\textrm{, }k\neq 0\textrm{ and }
				\varphi_1,\dots,\varphi_k\in\mathcal{S}
			\end{array}\right.
$$

In the present study, by
$\mathcal{V}_n=\{\frac{i}{n-1}:0\leq i< n\}$, where $n\in\mathbb{N}$, 
we will denote a set of
\textsl{truth-values}, partitioned into a set $\Vd_{m,n}\subseteq\mathcal{V}_n$ 
of \textsl{designated} values and a set $\Vu_{m,n}=\mathcal{V}_n\setminus\Vd_{m,n}$ of 
\textsl{undesignated} values.  
In what follows, we will often 
refer to~$0$ as~$F$ and to~$1$ as~$T$. 
In general, an (\textsl{$n$-valued}) \textsl{assignment} of truth-values to
the atomic variables is any mapping $\eval:\mathcal{A}\longrightarrow\mathcal{V}_n$, 
and a(n \textsl{$n$-valued}) \textsl{valuation} is any extension
$\val^\eval:\mathcal{S}\longrightarrow\mathcal{V}_n$ of such an assignment 
to the set of all formulas.  
Given some $\mathcal{R}\subsetneq\mathcal{S}$ and some valuation~$\val$, the restriction $\left.\val\right|_\mathcal{R}$ will be called a \textsl{partial valuation over $\mathcal{R}$}.
An \textsl{$n$-valent semantics} for~$\mathcal{S}$ based on~$\mathcal{V}_n$, 
then, is simply an arbitrary collection of $n$-valued valuations.  
In particular, we will call \textsl{bivalent} any semantics over
$\mathcal{V}_2=\{F,T\}$, and say it is \textsl{classic-like} in case $\Vd_{1,2}=\{T\}$; the
corresponding valuations are called \textsl{bivaluations}.
The canonical notion of \textsl{entailment}
$\models_\mathsf{Sem}\;\subseteq\mathsf{Pow}(\mathcal{S})\times\mathcal{S}$
associated to an $n$-valent semantics $\mathsf{Sem}$ and characterizing a 
\textsl{logic}~$\langle\mathcal{S},\models_\mathsf{Sem}\rangle$ is defined by setting, 
for arbitrary $\Gamma\cup\{\alpha\}\subseteq\mathcal{S}$, ($\Gamma\models_\mathsf{Sem}\alpha$) iff 
($\val[\Gamma]\subseteq\Vd_{m,n}$ implies $\val(\alpha)\in\Vd_{m,n}$, 
for every $\val:\mathcal{S}\longrightarrow\Vd_{m,n}\!\cup\Vu_{m,n}$ in $\mathsf{Sem}$), where $\val[\Gamma]=\{\val(\gamma):\gamma\in\Gamma\}$.
If $\varnothing\models_\mathsf{Sem}\alpha$, we say that~$\alpha$ is a \textsl{valid} formula.
Subscripts in the sets of truth-values will be dropped whenever there is no risk of ambiguity.

Now, in any $n$-valent semantics one can clearly notice a shade of bivalence resting upon the opposition between designated and undesignated truth-values. 
It is not difficult to see that we can take advantage of this in order to transform such an $n$-valent semantics into a classic-like bivalent semantics which is undistinguishable from the former semantics from the viewpoint of the associated notions of entailment. To see that, consider the total mapping $t^{m,n}:\mathcal{V}_n\longrightarrow\mathcal{V}_2$ such that $t^{m,n}(v)=T$ iff $v\in\Vd_{m,n}$.  Then:

\begin{definition}\label{dyadic}
Let $\mathcal{L}=\langle\mathcal{S},\models_\mathsf{Sem}\rangle$ be an $n$-valent logic 
with a semantics $\mathsf{Sem}$. 
For each valuation $\val:\mathcal{S}\longrightarrow\Vd_{m,n}\!\cup\Vu_{m,n}$, 
consider the bivaluation $b_\val=t^{m,n}\circ \val$.  
We call $\mathsf{Sem}_2=\{b_\val:\val\in\mathsf{Sem}\}$ the \textsl{$\mathsf{S}$-reduction} of $\mathsf{Sem}$.
\end{definition}

\begin{proposition}\label{SredPre}
Any $n$-valent logic $\mathcal{L}=\langle\mathcal{S},\models_\mathsf{Sem}\rangle$ can be characterized by its $\mathsf{S}$-reduction, 
in other words, $\mathsf{Sem}$ and $\mathsf{Sem}_2$ characterize the same logic~$\mathcal{L}$.
\end{proposition}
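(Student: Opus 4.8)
The plan is to show that the entailment relations $\models_{\mathsf{Sem}}$ and $\models_{\mathsf{Sem}_2}$ coincide, i.e.\ that for every $\Gamma \cup \{\alpha\} \subseteq \mathcal{S}$ we have $\Gamma \models_{\mathsf{Sem}} \alpha$ iff $\Gamma \models_{\mathsf{Sem}_2} \alpha$. The whole argument hinges on a single observation about the mapping $t^{m,n}:\mathcal{V}_n \longrightarrow \mathcal{V}_2$: by its very definition, $t^{m,n}(v) = T$ precisely when $v \in \Vd_{m,n}$. Consequently, for any valuation $\val \in \mathsf{Sem}$ and its induced bivaluation $b_\val = t^{m,n}\circ\val$, and for any formula $\beta$, we have the equivalence
\[
\val(\beta) \in \Vd_{m,n} \quad\Longleftrightarrow\quad b_\val(\beta) = T.
\]
I would state and verify this equivalence first, since it is the pivot of everything that follows.

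Next I would unfold both entailment relations against this equivalence. Recall that $\Gamma \models_{\mathsf{Sem}} \alpha$ means: for every $\val \in \mathsf{Sem}$, if $\val[\Gamma] \subseteq \Vd_{m,n}$ then $\val(\alpha) \in \Vd_{m,n}$. On the classic-like side, where $\Vd_{1,2} = \{T\}$, the relation $\Gamma \models_{\mathsf{Sem}_2} \alpha$ means: for every bivaluation $b \in \mathsf{Sem}_2$, if $b[\Gamma] \subseteq \{T\}$ then $b(\alpha) = T$. Using the boxed equivalence applied pointwise across the members of $\Gamma$ (so that $\val[\Gamma] \subseteq \Vd_{m,n}$ iff $b_\val[\Gamma] \subseteq \{T\}$) and then to $\alpha$ itself, the condition defining $\models_{\mathsf{Sem}}$ for a given $\val$ translates exactly into the condition defining $\models_{\mathsf{Sem}_2}$ for the corresponding $b_\val$.

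To close the loop I would invoke the fact that, by Definition~\ref{dyadic}, the assignment $\val \mapsto b_\val$ is a surjection from $\mathsf{Sem}$ onto $\mathsf{Sem}_2 = \{b_\val : \val \in \mathsf{Sem}\}$. Quantifying over all $\val \in \mathsf{Sem}$ is therefore the same as quantifying over all $b \in \mathsf{Sem}_2$: every constraint imposed by some member of $\mathsf{Sem}$ corresponds to a constraint imposed by its image in $\mathsf{Sem}_2$, and conversely every member of $\mathsf{Sem}_2$ arises as some $b_\val$. Hence the universally quantified conditions defining the two entailment relations are logically equivalent, and both directions of the biconditional follow simultaneously.

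I do not expect any genuine obstacle here, as the proposition is essentially a bookkeeping argument; the only point that deserves care is to keep the quantifier structure honest. In particular, one should note that the map $\val \mapsto b_\val$ need \emph{not} be injective (distinct many-valued valuations may collapse to the same bivaluation), but injectivity is irrelevant: surjectivity onto $\mathsf{Sem}_2$ is all that the translation of the quantifiers requires, and the pointwise equivalence handles the rest. This is the subtle spot to flag, lest one be tempted to claim a bijection that does not hold in general.
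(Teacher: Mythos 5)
Your proof is correct and takes essentially the same approach as the paper's: both pivot on the observation that $b_\val(\varphi)=t^{m,n}(\val(\varphi))=T$ iff $\val(\varphi)\in\Vd_{m,n}$, with the paper compressing the remaining quantifier bookkeeping into ``straightforward to check'' while you spell it out via the surjectivity of $\val\mapsto b_\val$ onto $\mathsf{Sem}_2$. Your explicit note that injectivity of this map is irrelevant is a sensible clarification but does not constitute a different argument.
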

\begin{proof}
It is straightforward to check that $\models_\mathsf{Sem}{=}\models_{\mathsf{Sem}_2}$, as a consequence of the fact that $b_\val(\varphi)\in\Vd_{1,2}$ iff $b_\val(\varphi)=t^{m,n}(\val(\varphi))=T$ iff $\val(\varphi)\in\Vd_{m,n}$, 
for any $n$-valent valuation $\val\in\mathsf{Sem}$.\qed
\end{proof}

A particularly interesting case of $n$-valent semantics, hereupon dubbed \textsl{truth-func\-tion\-al},
obtains when the semantics is presented by way of an appropriate $\Sigma$-algebra~$\mathbb{V}$ with carrier~$\mathcal{V}$, when we associate to each connective ${\odot}\in\Sigma_k$ an
operator $\widehat{{\odot}}:\mathcal{V}^k\longrightarrow\mathcal{V}$ in~$\mathbb{V}$,
and we collect in $\mathsf{Sem}$ the set of all homomorphisms
$\val:\mathbb{S}\longrightarrow\mathbb{V}$. 
Any such homomorphism may be construed as the free extension of some as\-sign\-ment $\eval:\mathcal{A}\longrightarrow\mathcal{V}$
into a valuation $\val^\eval:\mathbb{S}\longrightarrow\mathbb{V}$ by 
imposing that
$\val({\odot}(\varphi_1,\ldots,\varphi_k))=\widehat{{\odot}}(\val(\varphi_1),\ldots,\val(\varphi_k))$. Given a formula $\varphi(q_1,\dots,q_k)$ we will write
$\widehat{\varphi}(x_1,\dots,x_k)$ to denote the value
$\val(\varphi)$ assigned to $\varphi$ by any homomorphism
$\val:\mathbb{S}\longrightarrow\mathbb{V}$ 
such that $\val(q_r)=x_r$ for every $1\leq r\leq k$. 
One might say that such a truth-functional semantics is `compositional' in that the meaning it attributes to a composite formula depends (functionally) on the meaning of its immediate subformulas. 

\begin{definition}\label{nval}
A logic characterized by truth-functional
means, for a given~$\mathcal{V}_n$, is
called~$n$-\textsl{valued}. An $n$-valued
logic~$\mathcal{L}$ with an entailment relation
$\models_\mathsf{Sem}$ is said to be \textsl{genuinely}
$n$-\textsl{valued} in case there is no $n^\prime<n$ such
that~$\models_\mathsf{Sem}$ can be canonically obtained by way of
an $n^\prime$-valued truth-functional semantics.
\end{definition}

\begin{example}[Some well-known truth-functional logics]\label{exTFlogics}
Recall the set of truth-values~$\mathcal{V}_n=\{0,\frac{1}{n-1},\ldots,\frac{n-2}{n-1},1\}$, and consider initially the collection of binary connectives $\Sigma^a_2=\{\land,\lor\}$, 
that will be interpreted by setting~$\widehat{\land}=\lambda x\, y.\textsf{Min}(x,y)$ and $\widehat{\lor}=\lambda x\, y.\textsf{Max}(x,y)$.
Let's introduce a single unary connective through the collection $\Sigma^a_1=\{\neg\}$ and interpret this connective by setting $\widehat{\neg}=\lambda x.(1-x)$.
Assume $\Sigma^a=\Sigma^a_1\cup\Sigma^a_2$.
Classical Logic is obtained now if we fix $n=2$ and $\Vd_{1,n}=\{1\}$.  
Kleene Logic is obtained by fixing $n=3$ and $\Vd_{1,n}=\{1\}$, 
and Asenjo-Priest Logic again fixes $n=3$ but differs from Kleene in fixing $\Vu_{2,n}=\{0\}$.
Consider now an extra binary connective given by $\Sigma^b_2=\{\supset\}$, and assume that $\Sigma^b=\Sigma^a\cup\Sigma^b_2$.
For all logics below, we will consider $\varphi\equiv\psi$ as simply an abbreviation for $(\varphi\supset\psi)\land(\psi\supset\varphi)$.
To define the hierarchy of $n$-valued logics~\L$_n$, proposed by \L ukasiewicz, for $n\geq 2$, 
interpret~$\land$, $\lor$ and~$\neg$ as above, and interpret 
$\widehat{\supset}=\lambda x\, y.\textsf{Min}(1,1-x{+}y)$.  
To define the hierarchy of $n$-valued logics~G$_n$, proposed by G\"odel, for $n\geq 2$, interpret~$\land$ and~$\lor$ again as above, 
but now interpret~$\widehat{\neg}=\lambda x.(\ifte{x=0}{1}{0})$, 
and $\widehat{\supset}=\lambda x\,y.(\ifte{x\leq y}{1}{y})$.
For each \L$_n$ and~G$_n$ we fix~$\Vd_{1,n}=\{1\}$.  
Note that the interpretation of~$\supset$ for~\L$_n$ and for~G$_n$ coincide if $n=2$, and may be defined in that case by setting $\varphi\supset\psi$ as an abbreviation for $(\neg\varphi)\lor\psi$; this coincidence no longer obtains if $n>2$.  
Further, in the \L$_n$ it is enough to take $\Sigma^a_1\cup\Sigma^b_2$ as a choice of primitive connectives, given that $x\widehat{\lor}y=(x\widehat{\supset}y)\widehat{\supset}y$, and that $x\widehat{\land}y=\widehat{\neg}(\widehat{\neg}x\widehat{\lor}\widehat{\neg}y)$.
\end{example}

It shoud be clear that:

\begin{proposition}\label{Sred}
Any truth-functional logic can be characterized by a classic-like bivalent semantics.
\end{proposition}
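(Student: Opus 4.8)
The plan is to recognize that Proposition~\ref{Sred} is essentially a corollary of Proposition~\ref{SredPre}, once one notices that truth-functionality is merely a special case of $n$-valence. First I would observe that a truth-functional logic comes equipped with a semantics $\mathsf{Sem}$ consisting of all homomorphisms $\val:\mathbb{S}\longrightarrow\mathbb{V}$ into the interpreting $\Sigma$-algebra. Each such homomorphism is, by definition, an $n$-valued valuation, so $\mathsf{Sem}$ is in particular a collection of $n$-valued valuations, i.e.\ an $n$-valent semantics. Hence any truth-functional logic is an $n$-valent logic in the sense of Section~\ref{sec:fvl}.

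Next I would invoke Proposition~\ref{SredPre} directly: being $n$-valent, the logic is characterized by its $\mathsf{S}$-reduction $\mathsf{Sem}_2=\{b_\val:\val\in\mathsf{Sem}\}$, where $b_\val=t^{m,n}\circ\val$. It then remains only to check that $\mathsf{Sem}_2$ has the required shape. By construction, every $b_\val$ takes values in $\mathcal{V}_2=\{F,T\}$, so $\mathsf{Sem}_2$ is a bivalent semantics; and since $t^{m,n}(v)=T$ iff $v\in\Vd_{m,n}$, the value $T$ plays the role of the unique designated value, that is, $\Vd_{1,2}=\{T\}$. This is exactly the condition under which a bivalent semantics was declared classic-like, so $\mathsf{Sem}_2$ is a classic-like bivalent semantics characterizing the given logic.

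I do not expect any genuine obstacle in the argument itself: its content is entirely carried by Proposition~\ref{SredPre} together with the definitions, and the verification of the classic-like label is immediate from the defining property of $t^{m,n}$. The real difficulty lies not here but in making such a bivalent characterization \emph{effective} and usable --- that is, in replacing the abstract set $\mathsf{Sem}_2$ of bivaluations by an explicit, finitely describable collection of boolean conditions obeying a generalized compositionality principle. That constructive task, however, belongs to the later development and is not demanded by the present statement.
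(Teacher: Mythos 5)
Your proposal is correct and follows exactly the paper's route: Proposition~\ref{Sred} is obtained as a corollary of Proposition~\ref{SredPre} by observing that the set of homomorphisms $\val:\mathbb{S}\longrightarrow\mathbb{V}$ of a truth-functional logic is in particular an $n$-valent semantics with designated set $\Vd_{m,n}$. Your additional check that $\mathsf{Sem}_2$ is classic-like (i.e.\ that $\Vd_{1,2}=\{T\}$ by the defining property of $t^{m,n}$) is a point the paper leaves implicit, but it is the same argument.
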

\begin{proof}
This follows in fact as a corollary of Prop.~\ref{SredPre}, where we now start with a semantics $\mathsf{Sem}$ written in terms of $n$-valued homomorphisms~$\val:\mathbb{S}\longrightarrow\mathbb{V}$, where~$\Vd_{m,n}$ is fixed as the set of designated values in the carrier of~$\mathbb{V}$.
\qed
\end{proof}

\begin{remark}
\label{suszkothesis}
The idea that any semantics can be converted~/ reduced to a bivalent semantics is known as \textsl{Suszko's Thesis} (cf.~\cite{cal:car:con:mar:humbug:05,Shr:Wan:2011}).  
For the truth-func\-tion\-al case, the underlying intuition is that the `algebraic truth-values' from the carrier of~$\mathbb{V}$ should be distinguished from the `logical values' (namely, $F$ and~$T$: `the False' and `the True', according to Roman Suszko).  
The paradoxicality of such a reduction would seem to reside in regarding a logic at times as truth-functional and at other times simply as bivalent (cf.~\cite{mal:93,ccal:car:con:mar:03a}).  
One should be wary not to confuse though, on the one hand, a logic as a structure in which a set of formulas is endowed with a consequence relation enjoying certain properties, and, on the other hand, the variegated forms in which such a consequence relation may be characterized by semantical means (cf.~\cite{mar:09a:full}).
\end{remark}

The bivalent semantics produced by the instructions laid out in Def.~\ref{dyadic} is obviously classic-like, and if the input logic is genuinely $n$-valued, for $n>2$, the output semantics cannot be truth-functional.  
In particular, while a truth-functional characterization seems obviously attractive for its good behavior, it is not clear that the same can be said about the bivalent characterization thereby originated.  Is the latter set of bivaluations at least describable recursively, without resource to the original set of $n$-valued valuations?  
Can the reduction from $\mathsf{Sem}$ to $\mathsf{Sem}_2$ at least be done \textit{constructively}, in the truth-functional case?
Furthermore, must such reduction throw away for good the fundamental feature of \textit{compositionality}, together with truth-functionality? 
Will the meaning of a formula no longer be related to the meaning of its subformulas? 
The answer to the first two questions will be affirmative if we find a way of appropriately exploiting the original linguistic resources of the given logic, or else extend such resources conveniently in order to make the logic sufficiently expressive.  
The answer to the final two questions will be negative if we find a way of being more generous about the very meaning of compositionality.  
We will next discuss these issues, and show how they can be satisfactorily resolved to our benefit.

\subsection{Separation of truth-values}\label{sec:separation}

\noindent
An $n$-valued logic~$\mathcal{L}$ is said to be \textsl{functionally complete} if any operation $\widehat{\varphi}:(\mathcal{V}_n)^k\longrightarrow\mathcal{V}_n$ is the interpretation of some statement form $\varphi(q_1,\dots,q_k)$ expressible in the language of~$\mathcal{L}$.
Besides being 2-valued (thus, bivalent), Classical Logic is the only logic in Ex.~\ref{exTFlogics} that enjoys functional completeness.
The so-called Post Logics~P$_n^m$ are functionally complete genuinely $n$-valued logics with~$m$ designated values, for $n>2$ and $0<m<n$, and they may be defined by adding unary permutation operators to the~\L$_n$ and setting $\Vu=\{0,\frac{1}{n-1}\ldots,\frac{n-m-1}{n-1}\}$.  Of course, P$_2^1$ coincides with Classical Logic.  
Functional completeness is a rare property, enjoyed only by extremely expressive logics.

In producing an algorithmic version of the $\mathsf{S}$-reduction, which identifies every designated value as a `true' value and identifies every undesignated value as a `false' value, the challenge is to still be able to somehow recover information about the original `algebraic' values even after the classic-like reduction is produced.
In all cases, the idea will be to check whether a given logic is \textit{expressive enough} so as to allow for its original truth-values to be uniquely described by way of its original linguistic resources.
To that effect, we will look for a way of distinguishing each pair of
values of a genuinely $n$-valued logic~$\mathcal{L}$. %

\begin{definition}\label{separable}
Given $x, y\in \mathcal{V}_n$, we write $x\;\#\;y$ and
say that~$x$ and~$y$ are \textsl{separated} 
in case one value is designated and the other undesignated, that is,
in case $t(x)\neq t(y)$. 
We say that a one-variable formula $\theta^{xy}(p)$ of~$\mathcal{L}$
\textsl{distinguishes}~two truth-values~$x$ and~$y$ if
$\widehat{\theta^{xy}}(x)\;\#\;\widehat{\theta^{xy}}(y)$.
In that case we will also say that the values~$x$ and~$y$ 
of~$\mathcal{L}$ are \textsl{distinguishable}, as they may be separated using just the
linguistic resources of~$\mathcal{L}$. 
Finally, a logic~$\mathcal{L}$ is called \textsl{separable} in case its truth-values are pairwise 
distinguishable, that is, in case for any pair of distinct values 
$\langle x,y\rangle\in\mathcal{V}\times\mathcal{V}$ 
there exists a one-variable \textsl{separator} formula~$\theta^{xy}(p)$ that distinguishes~$x$ and~$y$. 
\end{definition}

Obviously, in functionally complete genuinely $n$-valued logics, by design, any pair of values is distinguishable --- thus, any such logic is separable.
For other logics, when the separation of all truth-values is at all possible, we will often assume some appropriate collection of one-variable separators to have been listed as a finite sequence~$\theta_{1},\ldots,\theta_{s}$, and we will further use~$\theta_0$ to denote the identity mapping $\textsf{id}=\lambda p.p$ 
(notice indeed that $\theta_0(p)$ by itself suffices to distinguish any pair of values $\langle x,y\rangle\in(\Vd\times\Vu)\cup(\Vu\times\Vd)$).
From here on, any such $\overline{\theta}=\langle\theta_r\rangle_{r=0}^s$ will be dubbed a \textsl{separating sequence} for the given logic. 

\begin{remark}
\label{sepbounds}
It is worth remarking that the length $s+1$ of the separating sequence $\langle\theta_r\rangle_{r=0}^s$ must be such that $\log_2(n)\leq s+1< n$ for any $n$-valued logic.  In fact, considering that the same suitably designed separator formula could be used to distinguish some pair of designated values and simultaneously also to distinguish some pair of undesignated values, it will be sufficient in the best scenario to have precisely $\log_2(\max(|\Vd|,|\Vu|))+1$ separator formulas in the separating sequence of a given $|\mathcal{V}|$-valued logic, where $\mathcal{V}=\Vd\cup\Vu$.
\end{remark}

\begin{definition}\label{binprint}
Fixed a separating sequence $\overline{\theta}=\langle\theta_r\rangle_{r=0}^s$ for a given $n$-valued logic~$\mathcal{L}$, the \textsl{binary print} of a value $z\in\mathcal{V}_n$
is the sequence
$\overline{\theta}(z)=\langle b_\val(\theta_{r}(p))\rangle_{r=0}^s$, where $\val(p)=z$.
We dub $\overline{\theta}[\mathcal{V}_n]$ the set of \textsl{obtainable} binary prints; intuitively, they are the binary prints that correspond and uniquely describe some actual truth-value from the given $n$-valued semantics.
\end{definition}

Notice that $\overline{\theta}(z)=\langle t({\widehat{\theta_r}}(z))\rangle_{r=0}^s$. 
More importantly, for each pair of distinct values 
$\langle x,y\rangle\in\mathcal{V}_n\times \mathcal{V}_n$ 
it is now obviously the case that
$\overline{\theta}({x})\neq \overline{\theta}({y})$.

\begin{example}[Some separable logics] \label{lukas}\label{ex:separatingTF}
Recall Ex.~\ref{exTFlogics}.  It should be clear that the two values of Classical Logic are separated by $\theta_0(p)=p$.  Also, the two undesignated values of Kleene Logic and the two designated values in Asenjo-Priest Logic are separated by adding $\theta_1(p)=\neg p$ to the separating sequence: indeed, such $\theta_1(p)$ helps in distinguishing the binary prints of~$0$ and~$\frac{1}{2}$ in Kleene, and in distinguishing the binary prints of~$\frac{1}{2}$ and~$1$ in Asenjo-Priest.
Consider now \L ukasiewicz logic~\L$_n$, with $n>2$.  In that case we have to devise a way of pairwise separating each of its $n-1$ undesignated values.  
For that purpose one may consider a collection of 
operators~$j^m_{\geq}$, for $0< m< n-1$, 
such that $\widehat{j^m_{\geq}}=\lambda z.(\ifte{z\geq \frac{m}{n-1}}{1}{0})$ 
---
it is worth noticing that such~$j^m_{\geq}$ operators may be defined as abbreviations using solely the connectives in~$\Sigma^b$ (cf.~\cite{ros:tur:52}).
Clearly a~$j^m_{\geq}$ operator separates the undesignated value~$\frac{m}{n-1}$ of~\L$_n$ from all the lower values.  Indeed, an appropriate separating sequence~$\overline{\theta}=\langle\theta_r\rangle_{r=0}^{n-2}$ for~\L$_n$ may be defined by setting $\theta_r=j_\geq^{(n-1)-r}$.  For such choice, we see that 
$\overline{\theta}\left(\frac{m}{n-1}\right)$, the $(n-1)$-long binary print of~$\frac{m}{n-1}$, will be an $(n-1-m)$-long sequence of~$F$s, followed by an $m$-long sequence of~$T$s.
\end{example}

A word is due here with respect to the general problem of distinguishing truth-values. 
It turns out that not every $n$-valued logic~$\mathcal{L}$ is separable, even if~$\mathcal{L}$ is genuinely $n$-valued, as illustrated below in Ex.~\ref{godel} --- the original language of the logic~$\mathcal{L}$ may simply fail to be \textit{sufficiently expressive}. 
This fact would seem to pose a limitation to the methods proposed in the present paper. However, this is by no means a serious limitation.
Indeed, as we will show in what follows, it is not difficult to see that a clever search may be employed to efficiently decide the separability of any given finite-valued logic, and a simple procedure may be devised to output a separating sequence in case it exists.

\begin{remark}
\label{cleversearch}
Let $\mathcal{F}_n$ denote the set of all unary operations on~$\mathcal{V}_n$. 
To decide whether a given $n$-valued logic $\mathcal{L}$ is separable, it suffices to compute the set of all unary functions $f:\mathcal{V}_n\longrightarrow\mathcal{V}_n$ that are definable by the connectives in~$\Sigma$. Since this set must be finite (there are only $n^n$ functions in $\mathcal{F}_n$), one can then test each of the definable functions on the pairs of values that demand separation. 

Note that the definable unary functions are precisely those that can be expressed by $\widehat{\alpha}$ where $\alpha(p)\in\mathcal{S}$ is a formula written with at most one variable. This gives us a simple way of computing the set of all definable unary functions, using Kleene's fixed-point theorem~\cite{kle:lfp}, as the least fixed-point of the operator $\mu:2^{\mathcal{F}_n}\longrightarrow 2^{\mathcal{F}_n}$ defined by
$\mu(H)=H\cup\{\mathsf{id}\}\cup\{\widehat{c}:c\in\Sigma_0\}\cup
                   \{\lambda x.\widehat{{\odot}}(h_1(x),\dots,h_k(x)):{\odot}\in\Sigma_k\mbox{, for }k\neq 0\mbox{, and }h_1,\dots,h_k\in H\}.$
This operator is clearly Scott-continuous as any function in $\mu(H)$ depends only on finitely many elements of $H$, and thus the set of definable unary functions is given in particular by the least~$m$ such that $\mu^m(\varnothing)=\mu^{m+1}(\varnothing)$.  Obviously, $m\leq n^n$.
\end{remark}

We will consider in what follows the extension of an $n$-valued logic by the addition of connectives with $n$-valued interpretations.
In other words, let~$\mathcal{L}$ be an $n$-valued logic given by means of the collection of all homomorphisms from the algebra of formulas $\mathbb{S}$ to an $n$-valued $\Sigma$-algebra~$\mathbb{V}$, and consider an extension $\mathcal{L}^{+}$ of~$\mathcal{L}$, defined over the extended algebra of formulas~$\mathbb{S}^{+}$ obtained from the extended set of connectives~$\Sigma^{+}$, and characterized by the collection $\mathsf{Sem}^{+}$ of all homomorphisms to a properly extended $n$-valued algebra $\mathbb{V}^{+}$.  It is clear that~$\mathcal{L}^{+}$ is always a \textsl{conservative extension} of~$\mathcal{L}$ in the sense that $\Gamma\models_\mathsf{Sem}\alpha$ if and only if $\Gamma\models_{\mathsf{Sem}^{+}}\alpha$ for every pair $\Gamma\cup\{\alpha\}\subseteq\mathcal{S}$. 
We will see next how one such conservative extension can be built in order to upgrade a nonseparable logic~$\mathcal{L}$ into a separable logic~$\mathcal{L}^{+}$.

When $\mathcal{L}$ is genuinely $n$-valued, and determined by a set $\Vd_{m,n}\subseteq\mathcal{V}_n$ of designated values, the structure $\langle\mathbb{V},\Vd_{m,n}\rangle$ is often dubbed a \textsl{logical matrix} (cf.~\cite{Wojcicki88}).
It should be clear that the \textsl{Leibniz congruence} (in the sense of~\cite{BP89}) of any such logical matrix is the identity. 
That is to say that the matrix is \textsl{simple}, meaning that every non-trivial congruence of the algebra~$\mathbb{V}$ must equate designated with undesignated values. 
Indeed, if that were not the case, then one could use any such non-trivial congruence to quotient~$\mathbb{V}$ (and~$\Vd_{m,n}$) and obtain a truth-functional semantics for the logic~$\mathcal{L}$ with less than $n$ truth-values. It turns out that such property can be used to compute a convenient extension of the primitive collection of connectives of~$\mathcal{L}$ whenever this logic is not separable. 

\begin{proposition}\label{prop:genuine}
Every genuinely $n$-valued logic has a separable genuinely $n$-valued conservative extension.
\end{proposition}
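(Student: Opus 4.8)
The plan is to render $\mathcal{L}$ separable by conservatively enlarging its language with sentential constants naming every algebraic value, thereby turning the \emph{unary polynomial functions} of $\mathbb{V}$ --- the term functions of a single variable obtained by freezing all other arguments to fixed parameters --- into genuine one-variable term functions, and then to extract the separators directly from the simplicity of the logical matrix established just above.

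First I would, for each $a\in\mathcal{V}_n$, adjoin a fresh $0$-ary connective $\underline{a}$, setting $\Sigma^{+}=\Sigma\cup\{\underline{a}:a\in\mathcal{V}_n\}$ and interpreting each $\underline{a}$ by the constant value $a$; this yields an $n$-valued $\Sigma^{+}$-algebra $\mathbb{V}^{+}$ extending $\mathbb{V}$. Let $\mathcal{L}^{+}=\langle\mathcal{S}^{+},\models_{\mathsf{Sem}^{+}}\rangle$ be the logic determined by all homomorphisms $\mathbb{S}^{+}\longrightarrow\mathbb{V}^{+}$. Since this adjoins only connectives with $n$-valued interpretations, the discussion immediately preceding the proposition shows that $\mathcal{L}^{+}$ is an $n$-valued \emph{conservative} extension of $\mathcal{L}$.

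Next I would derive separability of $\mathcal{L}^{+}$ from the fact, recorded above, that the genuineness of $\mathcal{L}$ forces the matrix $\langle\mathbb{V},\Vd_{m,n}\rangle$ to be simple, i.e.\ its Leibniz congruence to be the identity. By the standard characterization of the Leibniz congruence (in the sense of~\cite{BP89}), two values are Leibniz-related precisely when no unary polynomial separates them; the Leibniz congruence being the identity therefore means that for every pair of distinct values $x,y\in\mathcal{V}_n$ there exist a formula $\varphi(p,q_1,\dots,q_j)$ and parameters $a_1,\dots,a_j\in\mathcal{V}_n$ with $\widehat{\varphi}(x,a_1,\dots,a_j)\;\#\;\widehat{\varphi}(y,a_1,\dots,a_j)$. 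I would then absorb the parameters into the new constants by putting $\theta^{xy}(p):=\varphi(p,\underline{a_1},\dots,\underline{a_j})$, a one-variable formula of $\mathcal{L}^{+}$ whose interpretation coincides with that unary polynomial; hence $\widehat{\theta^{xy}}(x)\;\#\;\widehat{\theta^{xy}}(y)$, so $x$ and $y$ are distinguishable in $\mathcal{L}^{+}$. As the pair was arbitrary, $\mathcal{L}^{+}$ is separable.

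It remains to verify that $\mathcal{L}^{+}$ is still \emph{genuinely} $n$-valued, and this --- rather than the construction itself --- is the step I expect to need the most care. That $\mathcal{L}^{+}$ is $n$-valued is clear, as $\mathbb{V}^{+}$ carries exactly $n$ values. For the converse, suppose $\models_{\mathsf{Sem}^{+}}$ could be obtained from some $n'$-valued truth-functional semantics $\mathsf{Sem}'$ over a $\Sigma^{+}$-algebra $\mathbb{V}'$ with $n'<n$. Passing to the $\Sigma$-reduct $\mathbb{V}'|_{\Sigma}$ gives an $n'$-valued $\Sigma$-algebra; since the adjoined constants receive fixed interpretations in $\mathbb{V}'$, every homomorphism $\mathbb{S}\longrightarrow\mathbb{V}'|_{\Sigma}$ extends uniquely to a $\Sigma^{+}$-homomorphism into $\mathbb{V}'$, so the reduct semantics consists exactly of the restrictions to $\mathcal{S}$ of the valuations in $\mathsf{Sem}'$. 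Consequently the reduct semantics induces on $\mathcal{S}$ the restriction of $\models_{\mathsf{Sem}^{+}}$, which by conservativeness coincides with $\models_{\mathsf{Sem}}$; this would exhibit an $n'$-valued truth-functional characterization of $\mathcal{L}$ with $n'<n$, contradicting its genuineness. The delicate point is precisely this reduct argument --- ensuring that an $n'$-valued semantics for the richer language still characterizes the poorer logic --- together with the routine check that the Leibniz congruence may indeed be described using only unary polynomials, which is exactly what makes the constants device succeed.
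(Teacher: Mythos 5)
Your proof is correct, and it follows the paper's broad strategy --- use genuineness to obtain simplicity of the matrix $\langle\mathbb{V},\Vd_{m,n}\rangle$, produce for each indistinguishable pair a separator that is a unary polynomial (a term function with frozen parameters), name the parameters by fresh sentential constants, and get conservativeness from the discussion preceding the proposition --- but it discharges the central step by a different lemma. Where you invoke the Blok--Pigozzi characterization of the Leibniz congruence via unary polynomials \cite{BP89}, so that triviality of that congruence immediately yields, for each $x\neq y$, a formula $\varphi(p,q_1,\dots,q_j)$ and parameters $a_1,\dots,a_j$ with $\widehat{\varphi}(x,a_1,\dots,a_j)\;\#\;\widehat{\varphi}(y,a_1,\dots,a_j)$, the paper argues by hand: it considers the congruence generated by identifying $x$ with $y$, extracts a witnessing formula whose argument positions are either held fixed or swap $x$ and $y$ in a possibly mixed pattern, and then runs an explicit case analysis (its equations (2.1)--(2.3)) to turn that mixed pattern into a genuinely unary separator. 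Your route is shorter at the cost of citing the polynomial characterization as a black box; the paper's is self-contained and more parsimonious, introducing per pair either one new unary connective or only the constants actually needed, whereas you adjoin a constant for every truth-value (harmless for the statement). Finally, you verify explicitly that the extension remains genuinely $n$-valued via the reduct argument, a point on which the paper's proof is entirely silent; your instinct that this is the delicate step is sound, since simplicity or separability of a matrix does not by itself imply genuineness (classical logic, genuinely $2$-valued, is also determined by a reduced four-element Boolean matrix), so the conservativeness-plus-reduct argument you supply is exactly the justification needed, and in this respect your write-up is more complete than the paper's.
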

\begin{proof}
In the following, we shall employ~$\overrightarrow{v}^{(m)}$ as notation for a list $\underbrace{v,\dots,v}_{\textrm{$m$ times}}$.

Let $x$ and $y$ be two truth-values that cannot be distinguished by formulas of~$\mathcal{L}$. Then, by identifying the two, one generates a non-trivial congruence of the algebra~$\mathbb{V}$ that must therefore also identify a designated with an undesignated value.
So, there must exist a formula $\varphi(q_1,\dots,q_k)\in\mathbb{S}$ in $k>1$ variables and values $l_1,\dots,l_k,r_1,\dots,r_k\in\mathcal{V}_n$
such that:
\begin{equation}
\label{eq:eq1}
\widehat{\varphi}(l_1,\dots,l_k)\;\#\;\widehat{\varphi}(r_1,\dots,r_k)
\end{equation}
where, for each $1\leq i\leq k$, either 
$(a)$ 
$l_i=r_i$, or 
$(b)$ 
$\{l_i,r_i\}=\{x,y\}$.
Of course option $(b)$ must be satisfied at least once for $1\leq i\leq k$. Let us assume, without loss of generality, that option $(a)$ is satisfied for $1\leq i\leq j$ for some $j<k$. We have, then
\begin{equation}
\label{eq:eq2}
\widehat{\varphi}(r_1,\dots,r_j,l_{j{+}1},\dots,l_k)\;\#\;\widehat{\varphi}(r_1,\dots,r_j,r_{j{+}1},\dots,r_k).
\end{equation}
Now, if $l_{j{+}1}=\dots=l_k$ then $r_{j{+}1}=\dots=r_k$. Assuming, without loss of generality, that $l_{j{+}i}=x$ and $r_{j{+}i}=y$, for $0<i<k-j$, 
we thus have
\begin{equation}
\label{eq:eq21}
\widehat{\varphi}(r_1,\dots,r_j,\overrightarrow{x}^{(k-j)})\;\#\;\widehat{\varphi}(r_1,\dots,r_j,\overrightarrow{y}^{(k-j)}).\tag{2.1}
\end{equation}
In this case, one may distinguish $x$ and $y$ by introducing a unary connective~$\theta$ such that $\widehat{\theta}=\lambda z.\widehat{\varphi}(r_1,\dots,r_j,\overrightarrow{z}^{(k-j)})$. Alternatively, one could introduce (only the necessary) sentential constants $a_1,\dots,a_j$ such that $\widehat{a_i}=r_i$ for $1\leq i\leq j$, defining the separator $\theta=\lambda q.\varphi(a_1,\dots,a_j,\overrightarrow{q}^{(k-j)})$.

Otherwise, assume, again without loss of generality, that $l_{j{+}1}=\dots=l_p=x$ and $l_{p{+}1}=\dots=l_k=y$ for some $j<p<k$. Of course, 
one will then have $r_{j{+}1}=\dots=r_p=y$ and $r_{p{+}1}=\dots=r_k=x$. The situation is described by
\begin{equation}
\label{eq:eq22}
\widehat{\varphi}(r_1,\dots,r_j,\overrightarrow{x}^{(p-j)},\overrightarrow{y}^{(k-p)})\;\#\;\widehat{\varphi}(r_1,\dots,r_j,\overrightarrow{y}^{(p-j)},\overrightarrow{x}^{(k-p)}).\tag{2.2}
\end{equation}
Take the expression $\widehat{\varphi}(r_1,\dots,r_j,\overrightarrow{y}^{(k-j)})$. Clearly, its value must be separated from one of the two expressions in (\ref{eq:eq22}). Assume, yet again without loss of generality, that it is separated from the first expression, i.e., 
\begin{equation}
\widehat{\varphi}(r_1,\dots,r_j,\overrightarrow{x}^{(p-j)},\overrightarrow{y}^{(k-p)})\;\#\;\widehat{\varphi}(r_1,\dots,r_j,\overrightarrow{y}^{(p-j)},\overrightarrow{y}^{(k-p)}).\tag{2.3}
\end{equation}

In this case, we can separate $x$ and $y$ by introducing a unary connective~$\theta$ such that $\widehat{\theta}=\lambda z.\widehat{\varphi}(r_1,\dots,r_j,\overrightarrow{z}^{(p-j)},\overrightarrow{y}^{(k-p)})$. Alternatively, one could introduce the sentential constants $a_1,\dots,a_j,a_y$ such that $\widehat{a_i}=r_i$ for $1\leq i\leq j$ and $\widehat{a_y}=y$, and define the separator $\theta=\lambda q.\varphi(a_1,\dots,a_j,\overrightarrow{q}^{(p-j)},\overrightarrow{a_y}^{(k-p)})$.\qed
\end{proof}

\begin{example}[Separating with the help of a conservative extension]\label{godel}
In Re\-mark~\ref{cleversearch} we have seen a fixed-point procedure that may be used now to show that G\"odel logics (introduced in Ex.~\ref{exTFlogics}) are not separable when they involve more than three truth-values.  
For instance, it is easy to see that in~G$_4$ there are precisely six different definable unary operations, and none of them distinguishes the undesignated values~$\frac{1}{3}$ and~$\frac{2}{3}$ from one another.
Thus, we here will directly follow one of the two strategies employed in the proof of Prop.~\ref{prop:genuine} and consider the conservative extension of each logic~G$_n$ obtained by the addition to~$\Sigma^b$ of the family of sentential constants $\Sigma_0=\{a_m\}_{0<m<n-1}$, to be interpreted by setting $\widehat{a_m}=\frac{m}{n-1}$. In the extended logic~G$^+_n$, we may now introduce a family of unary operators~$\Sigma_1=\{k_m\}_{0<m<n-1}$ interpreted to such an effect that 
$\widehat{k_m}=\lambda z.\left(\ifte{z=\frac{m}{n-1}}{1}{(\ifte{z>\frac{m}{n-1}}{\frac{m}{n-1}}{z})}\right)$. 
Such operators $k_m$ are easily definable, e.g., by $\lambda p.(a_m\equiv p)$ or $\lambda p.(p\equiv a_m)$.
Obviously, given that $t(\widehat{k_m}(z))=T$ iff $z=\frac{m}{n-1}$, these unary operators may be used to produce an appropriate separating sequence for~G$^+_n$: just define~$\overline{\theta}=\langle\theta_r\rangle_{r=0}^{n-2}$ by setting $\theta_r=k_{r}$ for $r>0$. The resulting binary print $\overline{\theta}(0)$ will consist only of $F$s, while $\overline{\theta}(1)$ will have exactly one $T$, in the first position. Each $\overline{\theta}\left(\frac{m}{n-1}\right)$, for $0<m<n-1$, will also have exactly one $T$, in position $m+1$.
\end{example}

As we shall see in Remark~\ref{primseps}, a suitable conservative extension may be useful even if the logic at hand is already separable from the start.

\subsection{Classic-like characterization of finite-valued logics}\label{sec:classiclike}

\noindent 
Assuming, henceforth, that we are dealing with a separable $n$-valued truth-func\-tion\-al logic $\mathcal{L}$ characterized by a semantics~$\mathsf{Sem}$, let us proceed toward providing a constructive description of its $\mathsf{S}$-reduction $\mathsf{Sem}_2$ produced by Def.~\ref{dyadic}. For that purpose we will adopt a classic metalanguage: we shall use $\&$ to represent conjunction, $\mid\mid$ to represent disjunction, $\Longrightarrow$ to represent implication, $\top$ to represent truth, and $\divideontimes$ to represent an absurd. We shall also consider \textsl{labeled formulas} of the form $\X{}\varphi$ where $X\in\{F,T\}$ and $\varphi$ is a formula of $\mathcal{L}$. When convenient, we shall write $X^c$ to denote the \textsl{conjugate} of~$X$, defined by setting $F^c=T$ and $T^c=F$. We shall say that a bivaluation $b$ \textsl{satisfies} $\X{}\varphi$ if $b(\varphi)=X$. Analogously, we shall say that an $n$-valued valuation~$\val$ \textsl{satisfies} a labeled formula $\X{}\varphi$
if the corresponding bivaluation~$b_\val$ does, that is, if $b_\val(\varphi)=t(\val(\varphi))=X$. The extension of both notions of satisfaction to statements of the classical metalanguage is straightforward (that is, we assume $\&$, $\mid\mid$, $\Longrightarrow$, $\top$ and~$\divideontimes$ have the expected boolean interpretations).

We will describe the bivalent non-truth-functional semantics $\mathsf{Sem}_2$ by taking advantage of the truth-value separation apparatus developed above. Let us assume that $\overline{\theta}=\langle\theta_r\rangle_{r=0}^s$ is a separating sequence for~$\mathcal{L}$. As we have seen, $\overline{\theta}$~associates a different binary print to each of the~$n$ truth-values in $\mathcal{V}_n$. We can use an appropriate meta-linguistic statement to capture the fact that, in a given situation, the value of a formula~$\varphi$ corresponds to a certain binary print $\overline{X}=\langle X_r\rangle_{r=0}^s$:
\begin{equation}
{\X{0}\varphi} {\;\;\&\;\;} {\X{1}\theta_1(\varphi)} {\;\;\&\;\;} \dots {\;\;\&\;\;} {\X{s}\theta_s(\varphi)}. \tag{${\textsl{V}(\varphi\,;\overline{X})}$}\label{eq:value}
\end{equation}

\noindent
In general, given binary prints $\overline{X}_1,\dots,\overline{X}_k$ and formulas $\varphi_1,\dots,\varphi_k$, we write:
\begin{equation}
\textsl{V}(\varphi_1\,;\overline{X}_1){\;\;\&\;\;}\dots{\;\;\&\;\;}\textsl{V}(\varphi_k\,;\overline{X}_k). 
\tag{${\textsl{V}(\varphi_1,\dots,\varphi_k\,;\overline{X}_1,\dots,\overline{X}_k)}$}
\label{eq:tuples}
\end{equation}

\noindent
Obviously, given $z\in \mathcal{V}_n$, the statement ${\textsl{V}(\varphi\,;\overline{\theta}(z))}$ will capture the fact that the value of $\varphi$ is precisely $z$.
This means in particular that we can characterize the $2^{s{+}1}-n$ sequences of $F$ and $T$ of length $s{+}1$ that are unobtainable. 
This fact can be captured, for each such sequence $\overline{X}\notin\overline{\theta}[\mathcal{V}_n]$, by the following meta-linguistic statement over an arbitrary~$\varphi\in\mathcal{S}$:
\begin{equation}
{\textsl{V}(\varphi\,;\overline{X})} {\quad\Longrightarrow\quad} \divideontimes. \tag{$\textsl{U}\overline{X}$}\label{eq:useless}
\end{equation}

Recall that for each connective
${\odot}\in\Sigma_k$ there is an associated operator
$\widehat{{\odot}}:(\mathcal{V}_n)^k\longrightarrow\mathcal{V}_n$ in the algebra of truth-values. Given $X\in\{F,T\}$ and a separating formula~$\theta_r$ with $0\leq r\leq s$, let $R_X^{\theta_r{\odot}}$ be the set $\{\overline{x}\in(\mathcal{V}_n)^k: t(\widehat{\theta_r}(\widehat{{\odot}}(\overline{x})))=X\}$, that is, the set of all tuples of values in $\mathcal{V}_n$ that the subformulas $\varphi_1,\dots,\varphi_k$ may be assigned in order to guarantee that the bivalent value of the composite formula $\theta_r({\odot}(\varphi_1,\dots,\varphi_k))$ is~$X$. Each such tuple $\overline{x}=\langle{x_1,\dots,x_k}\rangle\in R_X^{\theta_r{\odot}}$ is characterized by the statement ${\textsl{V}(\varphi_1,\dots,\varphi_k\,;\overline{\theta}(x_1),\dots,\overline{\theta}(x_k))}$.
%
%
Thus, the complete behavior of the formula $\theta_r({\odot}(\varphi_1,\dots,\varphi_k))$ is captured by meta-linguistic statements of the form:
\begin{equation}
{\X{}\theta_r({\odot}(\varphi_1,\dots,\varphi_k))} {\;\Longrightarrow\;} ({\mid\mid}_{\overline{x}\in R_X^{\theta_r{\odot}}}
{\textsl{V}(\varphi_1,\dots,\varphi_k\,;\overline{\theta}(x_1),\dots,\overline{\theta}(x_k))}). \tag{$\textsl{B}_X^{\theta_r{\odot}}$}\label{eq:behavior}
\end{equation}

\begin{remark}
\label{complementaryBHV}
It should be clear that $R_F^{\theta_r{\odot}}$ and $R_T^{\theta_r{\odot}}$ are such that $R_F^{\theta_r{\odot}}\cup R_T^{\theta_r{\odot}}=(\mathcal{V}_n)^k$ and 
$R_F^{\theta_r{\odot}}\cap R_T^{\theta_r{\odot}}=\varnothing$. 
Hence, the right-hand sides of $\textsl{B}_F^{\theta_r{\odot}}$ and $\textsl{B}_T^{\theta_r{\odot}}$ are complementary, taking into account the unobtainable binary prints, i.e., a bivaluation satisfying statements $\textsl{U}\overline{X}$ for all $\overline{X}\in(\{F,T\}^{s{+}1}\setminus\overline{\theta}[\mathcal{V}_n])$ will satisfy the right-hand side of $R_F^{\theta_r{\odot}}$ if and only if it does not satisfy the right-hand side of $R_T^{\theta_r{\odot}}$. This means also that the meta-linguistic implication in each $\textsl{B}_X^{\theta_r{\odot}}$ statement is actually an equivalence.
Note that it may occur that $R_X^{\theta_r{\odot}}=\varnothing$, for some $X\in\{F,T\}$, and thus $R_{X^c}^{\theta_r{\odot}}=(\mathcal{V}_n)^k$. 
That happens, for instance, when ${\odot}\in\Sigma_0$, or when $\widehat{{\odot}}[(\mathcal{V}_n)^k]\subseteq\Vd_{m,n}$ or $\widehat{{\odot}}[(\mathcal{V}_n)^k]\subseteq\Vu_{m,n}$, for ${\odot}\in\Sigma_k$ and $k>0$. 
In such circumstances, the right-hand side of one of the two $\textsl{B}_X^{\theta_r{\odot}}$ statements with $X\in\{F,T\}$ will be tautological, and the other will be absurd.
\end{remark}


\begin{definition}\label{bivstatements}
The set $\mathcal{B}(\mathcal{L},\overline{\theta})$ of \textsl{bivalent statements} associated to a given separable finite-valued logic $\mathcal{L}$, fixed a separating sequence $\overline{\theta}=\langle\theta_r\rangle_{r=0}^s$, is formed by all instances of:
\begin{itemize}
\item $\textsl{U}\overline{X}$, for each $\overline{X}\in(\{F,T\}^{s{+}1}\setminus\overline{\theta}[\mathcal{V}_n])$, and 
\hfill{($\textsl{U}$-statements)}
\item $\textsl{B}_X^{\theta_r{\odot}}$, for each $X\in\{F,T\}$, $0\leq r\leq s$ and ${\odot}\in\Sigma$.
\hfill{($\textsl{B}$-statements)}
\end{itemize}
\end{definition}

\begin{remark}
\label{DNF}
Note that all the bivalent statements employed to characterize a finite-valued logic have left-hand sides that are conjunctions of labeled formulas, and right-hand sides that are in disjunctive normal form.
\end{remark}


\noindent
The following result guarantees the adequacy of our bivalent characterization.

\begin{proposition}\label{oksem}
$\mathsf{Sem}_2$ is the set of all bivaluations that satisfy $\mathcal{B}(\mathcal{L},\overline{\theta})$.
\end{proposition}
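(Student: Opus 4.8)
The plan is to prove the set equality by double inclusion: \emph{soundness}, that every $b_\val\in\mathsf{Sem}_2$ satisfies $\mathcal{B}(\mathcal{L},\overline{\theta})$, and \emph{completeness}, that every bivaluation satisfying $\mathcal{B}(\mathcal{L},\overline{\theta})$ is of the form $b_\val$ for some homomorphism $\val\in\mathsf{Sem}$. The single identity I will lean on throughout is $b_\val(\theta_r(\varphi))=t(\widehat{\theta_r}(\val(\varphi)))$, which says that the print recorded along the separating sequence at a formula $\varphi$ is exactly $\overline{\theta}(\val(\varphi))$, and hence always obtainable.

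For soundness I would fix $b_\val\in\mathsf{Sem}_2$ and check each family of statements. For a $\textsl{U}$-statement I will argue that its antecedent $\textsl{V}(\varphi\,;\overline{X})$ can never hold under $b_\val$: satisfying it would force the print at $\varphi$ to equal the unobtainable $\overline{X}$, whereas that print is $\overline{\theta}(\val(\varphi))\in\overline{\theta}[\mathcal{V}_n]$; so the implication to $\divideontimes$ holds vacuously. For a $\textsl{B}_X^{\theta_r{\odot}}$ statement I will assume its antecedent $\X{}{\theta_r({\odot}(\varphi_1,\dots,\varphi_k))}$ and put $x_i=\val(\varphi_i)$; since $\val$ is a homomorphism the head value is $t(\widehat{\theta_r}(\widehat{{\odot}}(x_1,\dots,x_k)))=X$, so $\langle x_1,\dots,x_k\rangle\in R_X^{\theta_r{\odot}}$, and the matching disjunct $\textsl{V}(\varphi_1,\dots,\varphi_k\,;\overline{\theta}(x_1),\dots,\overline{\theta}(x_k))$ holds because the print at each $\varphi_i$ is $\overline{\theta}(x_i)$. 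This makes the consequent true.

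For completeness I would fix a bivaluation $b$ satisfying all of $\mathcal{B}(\mathcal{L},\overline{\theta})$ and reconstruct an $n$-valued valuation. First I will use the $\textsl{U}$-statements to see that for every formula $\varphi$ the print $\langle b(\theta_r(\varphi))\rangle_{r=0}^s$ is obtainable --- otherwise $b$ would satisfy some $\textsl{U}$-antecedent and be forced to satisfy $\divideontimes$, which is impossible. Since distinct truth-values carry distinct binary prints, there is then a unique $z\in\mathcal{V}_n$ realizing this print, and I will define $\val(\varphi)$ to be that $z$. Agreement $b=b_\val$ is then immediate on the $\theta_0=\textsf{id}$ coordinate: $b(\varphi)=b(\theta_0(\varphi))$ is the $0$-th entry of $\overline{\theta}(\val(\varphi))$, namely $t(\val(\varphi))=b_\val(\varphi)$.

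The crux, and the step I expect to be the main obstacle, is verifying that this $\val$ is a homomorphism, i.e.\ $\val({\odot}(\varphi_1,\dots,\varphi_k))=\widehat{{\odot}}(x_1,\dots,x_k)$ with $x_i=\val(\varphi_i)$. By injectivity of $z\mapsto\overline{\theta}(z)$ it suffices to match prints coordinatewise, i.e.\ to show $b(\theta_r({\odot}(\varphi_1,\dots,\varphi_k)))=t(\widehat{\theta_r}(\widehat{{\odot}}(x_1,\dots,x_k)))$ for each $r$. Here the $\textsl{B}$-statements are exactly what is needed: letting $X$ be the left-hand value, $b$ satisfies the antecedent of $\textsl{B}_X^{\theta_r{\odot}}$ and hence its consequent, so some $\langle x_1',\dots,x_k'\rangle\in R_X^{\theta_r{\odot}}$ has the print at each $\varphi_i$ equal to $\overline{\theta}(x_i')$; but that print is $\overline{\theta}(x_i)$ by construction, so uniqueness forces $x_i'=x_i$, and membership in $R_X^{\theta_r{\odot}}$ delivers $t(\widehat{\theta_r}(\widehat{{\odot}}(x_1,\dots,x_k)))=X$. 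The delicate bookkeeping is precisely this interplay: the $\textsl{U}$-statements rule out phantom prints so that $\val$ can be read off unambiguously, while the $\textsl{B}$-statements, together with print-injectivity, force $\val$ to commute with each $\widehat{{\odot}}$. Once $\val$ is known to be a homomorphism, $\val\in\mathsf{Sem}$ and $b=b_\val\in\mathsf{Sem}_2$, closing the second inclusion.
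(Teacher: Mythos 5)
Your proposal is correct and takes essentially the same approach as the paper's proof: the same verification of the $\textsl{U}$- and $\textsl{B}$-statements for soundness, and the same completeness construction that reads off a unique $n$-valued valuation from the (necessarily obtainable) binary prints and then checks it is a homomorphism. The only difference is cosmetic: in the homomorphism step you apply $\textsl{B}_X^{\theta_r{\odot}}$ forwards (modus ponens, with print-injectivity identifying the witnessing tuple), whereas the paper rules out the value $X^c$ contrapositively via the complementary statement $\textsl{B}_{X^c}^{\theta_r{\odot}}$ --- both hinge on exactly the same injectivity of binary prints.
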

\begin{proof}
First, observe that if $\val:\mathbb{S}\longrightarrow\mathbb{V}\in\mathsf{Sem}$ then~$b_\val$ (and~$\val$) satisfies the bivalent statements associated to $\mathcal{L}$ almost by construction.  Indeed, given $\varphi\in\mathcal{S}$ and an unobtainable binary sequence~$\overline{X}$, then 
of course $\overline{\theta}(\val(\varphi))\neq\overline{X}$.  This implies that $b_\val(\theta_r(\varphi))\neq X_r$ for some $0\leq r\leq s$, thus~$b_\val$ fails to satisfy ${\X{r}\theta_r(\varphi)}$ and by consequence it fails to satisfy the meta-linguistic conjunction on the left-hand side of $\textsl{U}\overline{X}$. Given $X\in\{F,T\}$, $0\leq r\leq s$ and ${\odot}\in\Sigma$, assume that $b_\val$ satisfies the left-hand side of $\textsl{B}_X^{\theta_r{\odot}}$, that is, assume~$b_\val$ satisfies $\X{}\theta_r({\odot}(\varphi_1,\dots,\varphi_k))$. Such assumption means that 
$b_\val(\theta_r({\odot}(\varphi_1,\dots,\varphi_k)))=t(\widehat{\theta_r}(\widehat{{\odot}}(\val(\varphi_1),\dots,\val(\varphi_k))))=X$ and therefore 
$\langle \val(\varphi_1),\dots,\val(\varphi_k)\rangle\in R_X^{\theta_r{\odot}}$. Thus, $b_\val$ satisfies $\textsl{V}(\varphi_1,\dots,\varphi_k\,;\overline{\theta}(\val(\varphi_1)),\ldots,\overline{\theta}(\val(\varphi_k)))$, consequently satisfying the disjunction on the right-hand side of $\textsl{B}_X^{\theta_r{\odot}}$.

Conversely, suppose that a bivaluation $b:\mathcal{S}\longrightarrow\mathcal{V}_2$ satisfies all the bivalent statements associated to $\mathcal{L}$. For each $\varphi\in\mathcal{S}$, due to the fact that $b$ satisfies all the statements $\textsl{U}\overline{X}$ for unobtainable $\overline{X}$, it is clear that the sequence $\overline{X}_\varphi=\langle b(\varphi),b(\theta_1(\varphi)),\dots,b(\theta_s(\varphi))\rangle$ must be obtainable. Thus, we can define an $n$-valuation $\val_b:\mathcal{S}\longrightarrow\mathcal{V}_n$ by setting $\val_b(\varphi)$ to be the unique truth-value in~$\mathcal{V}_n$ whose binary print $\overline{\theta}(\val_b(\varphi))$ is $\overline{X}_\varphi$. Clearly, $b_{\val_b}=t^{m,n}\circ\val_b=b$ and then we are just left with proving that $w_b$ is a homomorphism between the $\Sigma$-algebras~$\mathbb{S}$ and~$\mathbb{V}$. 
Let ${\odot}\in\Sigma$ be an arbitrary connective and $\varphi_1,\dots,\varphi_k\in\mathcal{S}$, and recall that $b$ satisfies the statements $\textsl{B}_X^{\theta_r{\odot}}$ for each $X\in\{F,T\}$ and $0\leq r\leq s$. For $\overline{x}=\langle w_b(\varphi_1),\dots,w_b(\varphi_k)\rangle$, it must be the case that either we have both $\overline{x}\in R_F^{\theta_r{\odot}}$ and $\overline{x}\notin R_T^{\theta_r{\odot}}$, or else we have both $\overline{x}\notin R_F^{\theta_r{\odot}}$ and $\overline{x}\in R_T^{\theta_r{\odot}}$. If $\overline{x}\in R_X^{\theta_r{\odot}}$ then $t(\widehat{\theta_r}(\widehat{{\odot}}(\overline{x})))=X$. Moreover, $b$ cannot satisfy the disjunction on the right-hand side of $\textsl{B}_{X^c}^{\theta_r{\odot}}$, 
and thus it also does not satisfy its left-hand side. Hence, it must be the case that $b$ satisfies the left-hand side of $\textsl{B}_{X}^{\theta_r{\odot}}$, that is, $b(\theta_r({\odot}(\varphi_1,\dots,\varphi_k)))=X$ as well. But this means that the binary print $\overline{\theta}(\widehat{{\odot}}(\overline{x}))$ coincides with $\overline{X}_{{\odot}(\varphi_1,\dots,\varphi_k)}$, and thus $\widehat{{\odot}}(\overline{x})$ is the unique value whose binary print is $\overline{X}_{{\odot}(\varphi_1,\dots,\varphi_k)}$. We conclude that $\val_b({\odot}(\varphi_1,\dots,\varphi_k))=\widehat{{\odot}}(\overline{x})=\widehat{{\odot}}(w_b(\varphi_1),\dots,w_b(\varphi_k))$, thus $w_b\in\mathsf{Sem}$.\qed
\end{proof}

\begin{example}[Bivalent characterization of \L$_3$]\label{lukas2}
Let us return to the example of \L$_3$, separated by $\overline{\theta}=\langle \mathsf{id},\theta\rangle$, where $\theta=\lambda p.(\neg p\supset p)$ is a possible definition of the unary operator $j^1_\geq$ (i.e., the separator~$\theta_1$ mentioned in Ex.~\ref{ex:separatingTF}, whose subscript we drop here). Note that the binary print $\langle T,F\rangle$ is unobtainable, whereas $\overline{\theta}(0)=\langle F,F\rangle$, $\overline{\theta}(\frac{1}{2})=\langle F,T\rangle$ and $\overline{\theta}(1)=\langle T,T\rangle$.
The bivalent statements in $\mathcal{B}(\textsl{\L$_3$},\langle p,\theta(p)\rangle)$ are shown in Table~\ref{biv:L3}.
\begin{table}[htbp]\small
	\begin{center}
\scalebox{.70}{
\begin{tabular}{|c|lcl|}
\hline
 ($\textsl{U}\langle T,F\rangle$) & $({\T\varphi} {\;\;\&\;\;} {\F\theta(\varphi)})$ & $\Longrightarrow$ & $\divideontimes$\\
\hline
 ($\textsl{B}_F^{\,\neg}$) & ${\F\neg\varphi}$ & $\Longrightarrow$ & $({\F\varphi} {\;\;\&\;\;} {\T\theta(\varphi)}) \mid\mid ({\T\varphi} {\;\;\&\;\;} {\T\theta(\varphi)})$ \\
\hline
 ($\textsl{B}_T^{\,\neg}$) & ${\T\neg\varphi}$ & $\Longrightarrow$ & $({\F\varphi} {\;\;\&\;\;} {\F\theta(\varphi)})$ \\
\hline
 ($\textsl{B}_F^{\,\theta\neg}$) & ${\F\theta(\neg\varphi)}$ & $\Longrightarrow$ & $({\T\varphi} {\;\;\&\;\;} {\T\theta(\varphi)})$ \\
\hline
 ($\textsl{B}_T^{\,\theta\neg}$) & ${\T\theta(\neg\varphi)}$ & $\Longrightarrow$ & $({\F\varphi} {\;\;\&\;\;} {\F\theta(\varphi)}) \mid\mid ({\F\varphi} {\;\;\&\;\;} {\T\theta(\varphi)})$ \\
\hline
 ($\textsl{B}_F^{\,\supset}$) & ${\F\varphi\supset\psi}$ & $\Longrightarrow$ & $({\F\varphi} {\;\;\&\;\;} {\T\theta(\varphi)} {\;\;\&\;\;} {\F\psi} {\;\;\&\;\;} {\F\theta(\psi)}) \mid\mid ({\T\varphi} {\;\;\&\;\;} {\T\theta(\varphi)} {\;\;\&\;\;} {\F\psi} {\;\;\&\;\;} {\F\theta(\psi)})\mid\mid$\\
  & & & $({\T\varphi} {\;\;\&\;\;} {\T\theta(\varphi)} {\;\;\&\;\;} {\F\psi} {\;\;\&\;\;} {\T\theta(\psi)})$\\
\hline
 ($\textsl{B}_T^{\,\supset}$) & ${\T\varphi\supset\psi}$ & $\Longrightarrow$ & $({\F\varphi} {\;\;\&\;\;} {\F\theta(\varphi)} {\;\;\&\;\;} {\F\psi} {\;\;\&\;\;} {\F\theta(\psi)}) \mid\mid ({\F\varphi} {\;\;\&\;\;} {\F\theta(\varphi)} {\;\;\&\;\;} {\F\psi} {\;\;\&\;\;} {\T\theta(\psi)})\mid\mid$\\
  & & & $({\F\varphi} {\;\;\&\;\;} {\F\theta(\varphi)} {\;\;\&\;\;} {\T\psi} {\;\;\&\;\;} {\T\theta(\psi)})\mid\mid ({\F\varphi} {\;\;\&\;\;} {\T\theta(\varphi)} {\;\;\&\;\;} {\F\psi} {\;\;\&\;\;} {\T\theta(\psi)})\mid\mid$\\
  & & & $({\F\varphi} {\;\;\&\;\;} {\T\theta(\varphi)} {\;\;\&\;\;} {\T\psi} {\;\;\&\;\;} {\T\theta(\psi)})\mid\mid ({\T\varphi} {\;\;\&\;\;} {\T\theta(\varphi)} {\;\;\&\;\;} {\T\psi} {\;\;\&\;\;} {\T\theta(\psi)})$\\
\hline
($\textsl{B}_F^{\,\theta\supset}$) & ${\F\theta(\varphi\supset\psi)}$ & $\Longrightarrow$ & $({\T\varphi} {\;\;\&\;\;} {\T\theta(\varphi)} {\;\;\&\;\;} {\F\psi} {\;\;\&\;\;} {\F\theta(\psi)})$ \\
\hline
($\textsl{B}_T^{\,\theta\supset}$) & ${\T\theta(\varphi\supset\psi)}$ & $\Longrightarrow$ & $({\F\varphi} {\;\;\&\;\;} {\F\theta(\varphi)} {\;\;\&\;\;} {\F\psi} {\;\;\&\;\;} {\F\theta(\psi)}) \mid\mid ({\F\varphi} {\;\;\&\;\;} {\F\theta(\varphi)} {\;\;\&\;\;} {\F\psi} {\;\;\&\;\;} {\T\theta(\psi)})\mid\mid$\\
  & & & $({\F\varphi} {\;\;\&\;\;} {\F\theta(\varphi)} {\;\;\&\;\;} {\T\psi} {\;\;\&\;\;} {\T\theta(\psi)})\mid\mid ({\F\varphi} {\;\;\&\;\;} {\T\theta(\varphi)} {\;\;\&\;\;} {\F\psi} {\;\;\&\;\;} {\F\theta(\psi)})\mid\mid$\\
  & & & $({\F\varphi} {\;\;\&\;\;} {\T\theta(\varphi)} {\;\;\&\;\;} {\F\psi} {\;\;\&\;\;} {\T\theta(\psi)})\mid\mid ({\F\varphi} {\;\;\&\;\;} {\T\theta(\varphi)} {\;\;\&\;\;} {\T\psi} {\;\;\&\;\;} {\T\theta(\psi)})\mid\mid$\\
  & & & $({\T\varphi} {\;\;\&\;\;} {\T\theta(\varphi)} {\;\;\&\;\;} {\F\psi} {\;\;\&\;\;} {\T\theta(\psi)})\mid\mid ({\T\varphi} {\;\;\&\;\;} {\T\theta(\varphi)} {\;\;\&\;\;} {\T\psi} {\;\;\&\;\;} {\T\theta(\psi)})$\\
\hline
($\textsl{B}_F^{\,\lor}$) & ${\F\varphi\lor\psi}$ & $\Longrightarrow$ & $({\F\varphi} {\;\;\&\;\;} {\F\theta(\varphi)} {\;\;\&\;\;} {\F\psi} {\;\;\&\;\;} {\F\theta(\psi)})\mid\mid ({\F\varphi} {\;\;\&\;\;} {\F\theta(\varphi)} {\;\;\&\;\;} {\F\psi} {\;\;\&\;\;} {\T\theta(\psi)})\mid\mid$\\
  & & & $({\F\varphi} {\;\;\&\;\;} {\T\theta(\varphi)} {\;\;\&\;\;} {\F\psi} {\;\;\&\;\;} {\F\theta(\psi)})\mid\mid ({\F\varphi} {\;\;\&\;\;} {\T\theta(\varphi)} {\;\;\&\;\;} {\F\psi} {\;\;\&\;\;} {\T\theta(\psi)})$\\
\hline
($\textsl{B}_T^{\,\lor}$) & ${\T\varphi\lor\psi}$ & $\Longrightarrow$ & $({\F\varphi} {\;\;\&\;\;} {\F\theta(\varphi)} {\;\;\&\;\;} {\T\psi} {\;\;\&\;\;} {\T\theta(\psi)})\mid\mid ({\F\varphi} {\;\;\&\;\;} {\T\theta(\varphi)} {\;\;\&\;\;} {\T\psi} {\;\;\&\;\;} {\T\theta(\psi)})\mid\mid$\\
  & & & $({\T\varphi} {\;\;\&\;\;} {\T\theta(\varphi)} {\;\;\&\;\;} {\F\psi} {\;\;\&\;\;} {\F\theta(\psi)})\mid\mid ({\T\varphi} {\;\;\&\;\;} {\T\theta(\varphi)} {\;\;\&\;\;} {\F\psi} {\;\;\&\;\;} {\T\theta(\psi)})\mid\mid$\\
  & & & $({\T\varphi} {\;\;\&\;\;} {\T\theta(\varphi)} {\;\;\&\;\;} {\T\psi} {\;\;\&\;\;} {\T\theta(\psi)})$\\
\hline
($\textsl{B}_F^{\,\theta\lor}$) & ${\F\theta(\varphi\lor\psi)}$ & $\Longrightarrow$ & $({\F\varphi} {\;\;\&\;\;} {\F\theta(\varphi)} {\;\;\&\;\;} {\F\psi} {\;\;\&\;\;} {\F\theta(\psi)})$\\
\hline
($\textsl{B}_T^{\,\theta\lor}$) & ${\T\theta(\varphi\lor\psi)}$ & $\Longrightarrow$ & $({\F\varphi} {\;\;\&\;\;} {\F\theta(\varphi)} {\;\;\&\;\;} {\F\psi} {\;\;\&\;\;} {\T\theta(\psi)}) \mid\mid ({\F\varphi} {\;\;\&\;\;} {\F\theta(\varphi)} {\;\;\&\;\;} {\T\psi} {\;\;\&\;\;} {\T\theta(\psi)})\mid\mid$\\
  & & & $({\F\varphi} {\;\;\&\;\;} {\T\theta(\varphi)} {\;\;\&\;\;} {\F\psi} {\;\;\&\;\;} {\F\theta(\psi)})\mid\mid ({\F\varphi} {\;\;\&\;\;} {\T\theta(\varphi)} {\;\;\&\;\;} {\F\psi} {\;\;\&\;\;} {\T\theta(\psi)})\mid\mid$\\
  & & & $({\F\varphi} {\;\;\&\;\;} {\T\theta(\varphi)} {\;\;\&\;\;} {\T\psi} {\;\;\&\;\;} {\T\theta(\psi)})\mid\mid ({\T\varphi} {\;\;\&\;\;} {\T\theta(\varphi)} {\;\;\&\;\;} {\F\psi} {\;\;\&\;\;} {\F\theta(\psi)})\mid\mid$\\
  & & & $({\T\varphi} {\;\;\&\;\;} {\T\theta(\varphi)} {\;\;\&\;\;} {\F\psi} {\;\;\&\;\;} {\T\theta(\psi)})\mid\mid ({\T\varphi} {\;\;\&\;\;} {\T\theta(\varphi)} {\;\;\&\;\;} {\T\psi} {\;\;\&\;\;} {\T\theta(\psi)})$\\
\hline
($\textsl{B}_F^{\,\land}$) & ${\F\varphi\land\psi}$ & $\Longrightarrow$ & $({\F\varphi} {\;\;\&\;\;} {\F\theta(\varphi)} {\;\;\&\;\;} {\F\psi} {\;\;\&\;\;} {\F\theta(\psi)}) \mid\mid ({\F\varphi} {\;\;\&\;\;} {\F\theta(\varphi)} {\;\;\&\;\;} {\F\psi} {\;\;\&\;\;} {\T\theta(\psi)})\mid\mid$\\
  & & & $({\F\varphi} {\;\;\&\;\;} {\F\theta(\varphi)} {\;\;\&\;\;} {\T\psi} {\;\;\&\;\;} {\T\theta(\psi)})\mid\mid ({\F\varphi} {\;\;\&\;\;} {\T\theta(\varphi)} {\;\;\&\;\;} {\F\psi} {\;\;\&\;\;} {\F\theta(\psi)})\mid\mid$\\
  & & & $({\F\varphi} {\;\;\&\;\;} {\T\theta(\varphi)} {\;\;\&\;\;} {\F\psi} {\;\;\&\;\;} {\T\theta(\psi)})\mid\mid ({\F\varphi} {\;\;\&\;\;} {\T\theta(\varphi)} {\;\;\&\;\;} {\T\psi} {\;\;\&\;\;} {\T\theta(\psi)})\mid\mid$\\
  & & & $({\T\varphi} {\;\;\&\;\;} {\T\theta(\varphi)} {\;\;\&\;\;} {\F\psi} {\;\;\&\;\;} {\F\theta(\psi)})\mid\mid ({\T\varphi} {\;\;\&\;\;} {\T\theta(\varphi)} {\;\;\&\;\;} {\F\psi} {\;\;\&\;\;} {\T\theta(\psi)})$\\
\hline
($\textsl{B}_T^{\,\land}$) & ${\T\varphi\land\psi}$ & $\Longrightarrow$ & $({\T\varphi} {\;\;\&\;\;} {\T\theta(\varphi)} {\;\;\&\;\;} {\T\psi} {\;\;\&\;\;} {\T\theta(\psi)})$\\
\hline
($\textsl{B}_F^{\,\theta\land}$) & ${\F\theta(\varphi\land\psi)}$ & $\Longrightarrow$ & $({\F\varphi} {\;\;\&\;\;} {\F\theta(\varphi)} {\;\;\&\;\;} {\F\psi} {\;\;\&\;\;} {\F\theta(\psi)})\mid\mid ({\F\varphi} {\;\;\&\;\;} {\F\theta(\varphi)} {\;\;\&\;\;} {\F\psi} {\;\;\&\;\;} {\T\theta(\psi)})\mid\mid$\\
  & & & $({\F\varphi} {\;\;\&\;\;} {\F\theta(\varphi)} {\;\;\&\;\;} {\T\psi} {\;\;\&\;\;} {\T\theta(\psi)})\mid\mid ({\F\varphi} {\;\;\&\;\;} {\T\theta(\varphi)} {\;\;\&\;\;} {\F\psi} {\;\;\&\;\;} {\F\theta(\psi)})\mid\mid$\\
  & & & $({\T\varphi} {\;\;\&\;\;} {\T\theta(\varphi)} {\;\;\&\;\;} {\F\psi} {\;\;\&\;\;} {\F\theta(\psi)})$\\
\hline
($\textsl{B}_T^{\,\theta\land}$) & ${\T\theta(\varphi\land\psi)}$ & $\Longrightarrow$ & $({\F\varphi} {\;\;\&\;\;} {\T\theta(\varphi)} {\;\;\&\;\;} {\F\psi} {\;\;\&\;\;} {\T\theta(\psi)})\mid\mid ({\F\varphi} {\;\;\&\;\;} {\T\theta(\varphi)} {\;\;\&\;\;} {\T\psi} {\;\;\&\;\;} {\T\theta(\psi)})\mid\mid$\\
  & & & $({\T\varphi} {\;\;\&\;\;} {\T\theta(\varphi)} {\;\;\&\;\;} {\F\psi} {\;\;\&\;\;} {\T\theta(\psi)})\mid\mid ({\T\varphi} {\;\;\&\;\;} {\T\theta(\varphi)} {\;\;\&\;\;} {\T\psi} {\;\;\&\;\;} {\T\theta(\psi)})$\\
\hline
\end{tabular}
}
\end{center}
\caption{$\mathcal{B}(\textsl{\L$_3$},\langle \mathsf{id},\lambda p.(\neg p\supset p)\rangle)$: the bivalent characterization of \L$_3$ separated by $\langle \mathsf{id},\theta\rangle$.}
\label{biv:L3}
\end{table}
\end{example}
\begin{remark}\label{simplification}
The set of bivalent statements associated to a logic can often be simplified, without any danger of spoiling the result of Prop.~\ref{oksem}, nor any of the subsequent results. 
To start with, it may happen that a statement is simply tautological, as already explained in Remark~\ref{complementaryBHV}, in which case it can be simply omitted. 
The example of \L$_3$ above does not contain statements of that kind, but several such statements appear in connection with G$_4$ (see Ex.~\ref{systems-and-tableaux} below).
Still, even a nontautological statement can often be substantially shortened.
Consider for instance $\textsl{B}_F^{\,\supset}$ from the example above, namely, 

\begin{center}
\scalebox{.9}{
\begin{tabular}{c}
${\F\varphi\supset\psi}\Longrightarrow({\F\varphi} {\;\;\&\;\;} {\T\theta(\varphi)} {\;\;\&\;\;} {\F\psi} {\;\;\&\;\;} {\F\theta(\psi)}) \mid\mid$\\ 
$\hspace*{2.3cm}({\T\varphi} {\;\;\&\;\;} {\T\theta(\varphi)} {\;\;\&\;\;} {\F\psi} {\;\;\&\;\;} {\F\theta(\psi)})\mid\mid$\\
$\hspace*{2.1cm}({\T\varphi} {\;\;\&\;\;} {\T\theta(\varphi)} {\;\;\&\;\;} {\F\psi} {\;\;\&\;\;} {\T\theta(\psi)}).$
\end{tabular}
}
\end{center}

\noindent
Clearly, the first two disjuncts on the right-hand side are classically equivalent to 
$({\T\theta(\varphi)} {\;\;\&\;\;} {\F\psi} {\;\;\&\;\;} {\F\theta(\psi)})$, as either $\F\varphi$ or $\T\varphi$ must be satisfied by any given bivaluation. Similarly, the last two disjuncts are equivalent to $({\T\varphi} {\;\;\&\;\;}{\T\theta(\varphi)} {\;\;\&\;\;} {\F\psi})$. 
Each of these new expressions can be further simplified by taking into account the statement
$\textsl{U}\langle T,F\rangle$. As the binary print $\langle T,F\rangle$ is unobtainable, we thus conclude, in the former case, that ${\F\theta(\psi)}$ must imply ${\F\psi}$. Analogously, in the latter case, we conclude that ${\T\varphi}$ must imply ${\T\theta(\varphi)}$. Thus, $\textsl{B}_F^{\,\supset}$ may be equivalently stated as

\begin{center}
\scalebox{.9}{
\begin{tabular}{c}
	${\F\varphi\supset\psi}\Longrightarrow({\T\varphi} {\;\;\&\;\;} {\F\psi}) \mid\mid
	({\T\theta(\varphi)} {\;\;\&\;\;} {\F\theta(\psi)}).$
\end{tabular}
}
\end{center}


Such a simplification strategy may be applied, using boolean reasoning and the unobtainable binary prints as premises, to reach a streamlined version of the right-hand side of each statement. The only general restriction that we must impose is that the right-hand sides of our statements remain in disjunctive normal form and use only labeled formulas already occurring on the right-hand side of the original statement. Note that this streamlining procedure can be systematized by means of Karnaugh maps, and even automated by using the Quine-McCluskey algorithm, or the Espresso heuristic~\cite{Brayton:1984}.

Notation-wise, we will not distinguish a statement from a convenient simplification. 
Note, at any rate, that none of the results in this paper depend on (or is affected by) performing such a simplification.
\end{remark}

\begin{example}[A streamlined bivalent characterization of \L$_3$]\label{lukas3}
Sim\-pli\-fy\-ing the bivalent statements from Ex.~\ref{lukas2} we may obtain the equivalent list of statements, also dubbed $\mathcal{B}(\textsl{\L$_3$},\langle p,\theta(p)\rangle)$, shown in Table~\ref{stream:L3}.
\begin{table}[htbp]\small
\begin{center}
\scalebox{.7}{
\begin{tabular}{|r|lcl|}
\hline
 ($\textsl{U}\langle T,F\rangle$) & $({\T\varphi} {\;\;\&\;\;} {\F\theta(\varphi)})$ & $\Longrightarrow$ & $\divideontimes$\\
\hline
 ($\textsl{B}_F^{\,\neg}$)& ${\F\neg\varphi}$ & $\Longrightarrow$ & ${\T\theta(\varphi)}$ \\
\hline
 ($\textsl{B}_T^{\,\neg}$) & ${\T\neg\varphi}$ & $\Longrightarrow$ & ${\F\theta(\varphi)}$ \\
\hline
 ($\textsl{B}_F^{\,\theta\neg}$) & ${\F\theta(\neg\varphi)}$ & $\Longrightarrow$ & ${\T\varphi}$ \\
\hline
 ($\textsl{B}_T^{\,\theta\neg}$) & ${\T\theta(\neg\varphi)}$ & $\Longrightarrow$ & ${\F\varphi}$ \\
\hline
 ($\textsl{B}_F^{\,\supset}$) & ${\F\varphi\supset\psi}$ & $\Longrightarrow$ & $({\T\varphi} {\;\;\&\;\;} {\F\psi}) \mid\mid ({\T\theta(\varphi)} {\;\;\&\;\;} {\F\theta(\psi)})$\\
\hline
 ($\textsl{B}_T^{\,\supset}$) & ${\T\varphi\supset\psi}$ & $\Longrightarrow$ & $({\F\varphi} {\;\;\&\;\;} {\T\theta(\psi)}) \mid\mid {\F\theta(\varphi)} \mid\mid {\T\psi}$\\
\hline
($\textsl{B}_F^{\,\theta\supset}$) & ${\F\theta(\varphi\supset\psi)}$ & $\Longrightarrow$ & $({\T\varphi} {\;\;\&\;\;} {\F\theta(\psi)})$ \\
\hline
($\textsl{B}_T^{\,\theta\supset}$) & ${\T\theta(\varphi\supset\psi)}$ & $\Longrightarrow$ & ${\F\varphi} \mid\mid {\T\theta(\psi)}$\\
\hline
($\textsl{B}_F^{\,\lor}$) & ${\F\varphi\lor\psi}$ & $\Longrightarrow$ & $({\F\varphi} {\;\;\&\;\;}  {\F\psi})$\\
\hline
($\textsl{B}_T^{\,\lor}$) & ${\T\varphi\lor\psi}$ & $\Longrightarrow$ & ${\T\varphi}\mid\mid {\T\psi}$\\
\hline
($\textsl{B}_F^{\,\theta\lor}$) & ${\F\theta(\varphi\lor\psi)}$ & $\Longrightarrow$ & $({\F\theta(\varphi)} {\;\;\&\;\;} {\F\theta(\psi)})$\\
\hline
($\textsl{B}_T^{\,\theta\lor}$) & ${\T\theta(\varphi\lor\psi)}$ & $\Longrightarrow$ & ${\T\theta(\varphi)}\mid\mid {\T\theta(\psi)}$\\
\hline
 ($\textsl{B}_F^{\,\land}$) & ${\F\varphi\land\psi}$ & $\Longrightarrow$ & ${\F\varphi} \mid\mid {\F\psi}$\\
\hline
($\textsl{B}_T^{\,\land}$) & ${\T\varphi\land\psi}$ & $\Longrightarrow$ & $({\T\varphi} {\;\;\&\;\;} {\T\psi})$\\
\hline
($\textsl{B}_F^{\,\theta\land}$) & ${\F\theta(\varphi\land\psi)}$ & $\Longrightarrow$ & ${\F\theta(\varphi)}\mid\mid {\F\theta(\psi)}$\\
\hline
($\textsl{B}_T^{\,\theta\land}$) & ${\T\theta(\varphi\land\psi)}$ & $\Longrightarrow$ & $({\T\theta(\varphi)} {\;\;\&\;\;} {\T\theta(\psi)})$\\
\hline
\end{tabular}
}
\end{center}
\caption{Streamlined $\mathcal{B}(\textsl{\L$_3$},\langle \mathsf{id},\lambda p.(\neg p\supset p)\rangle)$.}
\label{stream:L3}
\end{table}
\end{example}

\subsection{Compositionality generalized}\label{sec:effectiveness}

\noindent
Separators play a crucial role in our development. Now, while the original truth-functional semantics of our finite-valued logics was based, as it has been already mentioned, on a straightforward notion of `compositionality', one might contend that the bivalent semantics which we can associate to the same logics are in fact based on a \textit{generalized notion of compositionality} according to which the value of a formula is to be (uniquely) determined from the values of separators applied to its immediate subformulas. In order to burst life into this idea we must first understand how to adequately explore the structure of formulas.

Let's start by upgrading some terminology from Section~\ref{sec:fvl} to take the separators from $\overline{\theta}=\langle\theta_r\rangle_{r=0}^s$ into account.  We call~$\varphi\in\mathcal{S}$ a \textsl{basic} formula if $\varphi=\theta_r(\psi)$ for some noncomposite formula~$\psi$ and some $0\leq r\leq s$; other formulas are called \textsl{nonbasic}.  To be sure, basic formulas are precisely those that may be obtained by applying a separator to either an atomic variable or a sentential constant; note in particular that atomic variables are indeed basic formulas, given our convention to set $\theta_0=\mathsf{id}$. 
Given $0\leq r\leq s$ and a connective ${\odot}\in\Sigma_k$ with $k\neq 0$, recall from Section~\ref{sec:fvl} that by $\mathcal{S}(\theta_r({\odot}(q_1,\dots,q_k)))$ we denoted the set of all instances of the statement-form $\theta_r({\odot}(q_1,\dots,q_k))$ --- here we will denote this more simply as $\mathcal{S}(\theta_r{\odot})$. Instead of $\mathcal{S}(\theta_0{\odot})$ we will also simply write $\mathcal{S}({\odot})$, given that $\theta_0(p)=p$.
Clearly, the family $\{\mathcal{S}({\odot})\}_{{\odot}\in\Sigma\setminus\Sigma_0}$ constitutes a partition of the set of composite formulas.

Let $\varphi$ be a nonbasic formula. 
Given $0\leq r\leq s$, whenever $\varphi\in\mathcal{S}(\theta_r{\odot})$ for some ${\odot}\in\Sigma$ we shall say that
$\theta_r{\odot}$ is a \textsl{fit for}~$\varphi$. 
As it happens, there may be $r_1\neq r_2$ and ${\odot}_1\neq{\odot}_2$ such that both $\theta_{r_1}{\odot}$ and $\theta_{r_2}{\odot}$ are fit for~$\varphi$.  In general:

\begin{lemma}\label{lemma:intersect}
Let $0\leq r_1,r_2\leq s$ and ${\odot}_1,{\odot}_2\in\Sigma$. In case $\mathcal{S}(\theta_{r_1}{\odot}_1)\cap\mathcal{S}(\theta_{r_2}{\odot}_2)\neq\varnothing$ then
exactly one of the following three situations must occur:
\begin{enumerate}[(1)]
\item $r_1=r_2$ and ${\odot}_1={\odot}_2$; or


\item $\mathcal{S}(\theta_{r_1}{\odot}_1)\subsetneq\mathcal{S}(\theta_{r_2}{\odot}_2)$ (or the other way around,  $\mathcal{S}(\theta_{r_2}{\odot}_2)\subsetneq\mathcal{S}(\theta_{r_1}{\odot}_1)$); or

\item $\mathcal{S}(\theta_{r_1}{\odot}_1)\cap\mathcal{S}(\theta_{r_2}{\odot}_2)$ is a singleton set, whose sole formula, \\
dubbed $\iota(\theta_{r_1}{\odot}_1,\theta_{r_2}{\odot}_2)$, is a ground formula.
\end{enumerate}
\end{lemma}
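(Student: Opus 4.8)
The plan is to read the two statement-forms as terms of the absolutely free $\Sigma$-algebra $\mathbb{S}$ and to analyse their common instances. A nonempty intersection supplies a formula $\varphi$ with two parsings $\varphi=\theta_{r_1}({\odot}_1(\psi_1,\dots,\psi_{k_1}))=\theta_{r_2}({\odot}_2(\chi_1,\dots,\chi_{k_2}))$, and the organizing principle is that, $\mathbb{S}$ being free, the set of all common instances is exactly the set of instances of a most general common instance. Before the main argument I would record two preliminary facts. First, every separator contains the variable~$p$: this holds for $\theta_0=\mathsf{id}$, and each $\theta_r$ with $r\geq 1$ distinguishes some pair of values, so $\widehat{\theta_r}$ is non-constant and $\theta_r$ cannot be ground. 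Consequently each $\mathcal{S}(\theta_r{\odot})$ is infinite (since ${\odot}$ has arity $\geq 1$, a genuine argument variable survives), and the assignment $(\theta_r,{\odot})\mapsto\theta_r({\odot}(q_1,\dots,q_k))$ is injective, so that two of our sets coincide iff the corresponding $r$'s and ${\odot}$'s coincide. These facts reduce the statement to the following Core Claim: whenever a common instance exists, either one of the two sets contains the other, or the intersection is a single ground formula. The trichotomy then follows, with $\iota(\theta_{r_1}{\odot}_1,\theta_{r_2}{\odot}_2)$ naming the sole formula in case~(3); mutual containment is equality, hence (1) by injectivity, one-sided containment is necessarily proper, hence (2), and exclusivity is automatic because (1) and (2) produce infinite intersections while (3) produces a singleton.

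To prove the Core Claim I would work positionally inside a fixed common instance $\varphi$. Mark the sets $H_1$ and $H_2$ of positions where the subformulas ${\odot}_1(\psi_1,\dots,\psi_{k_1})$ and ${\odot}_2(\chi_1,\dots,\chi_{k_2})$ sit, i.e.\ the images in $\varphi$ of the variable-holes of $\theta_{r_1}$ and $\theta_{r_2}$. The decisive structural feature is uniform: strictly above a hole lies frame material fixed by the separator, at the hole sits the corresponding ${\odot}_i$-node, and directly beneath that node the argument variables occur as leaves, with no further frame below them in the pattern. Since $\mathbb{S}$ is free, matching $\varphi$ against both parsings is governed by the free-algebra unifier, which handles transparently the case in which $p$ occurs several times in a separator, so that the same block recurs at several holes and all its copies must agree.

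The crux is an impossibility: one cannot simultaneously have some ${\odot}_1$-block nested strictly inside an argument of some ${\odot}_2$-block and some ${\odot}_2$-block nested strictly inside an argument of some ${\odot}_1$-block. Chasing positions, the first nesting puts a hole $h_2\in H_2$ strictly below a child of a hole $h_1\in H_1$, and the second puts a hole $h_1'\in H_1$ strictly below a child of $h_2$; composing these inequalities forces $h_1'$ to be a proper descendant of $h_1$, which is absurd because holes are leaves of the separator's own tree and below a hole of $\theta_{r_1}$ lies only argument material, never another frame-leaf. With this configuration excluded, the blocks are comparably nested, and two benign possibilities remain: either the holes of one separator refine those of the other, so that one statement-form is an instance of the other and the corresponding instance-set contains the intersection; or a frame-constant of one separator is confronted by an argument-block of the other, whereupon the equalities forced by the repeated occurrences of $p$ pin every surviving variable to a constant, leaving a unique ground common instance.

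The main obstacle is precisely this Core Claim, and within it the bookkeeping caused by the two features a naive linear intuition would ignore: the non-linearity coming from repeated occurrences of $p$ in a separator, and the presence of sentential constants inside separators, which is exactly what makes case~(3) genuinely occur (already for a separator such as $\lambda p.(a_m\equiv p)$ from Ex.~\ref{godel}). Making precise the dichotomy ``comparably nested frames yield subsumption, otherwise the solution collapses to a single ground formula'', and verifying that the surviving formula really is ground, is where the delicate work lies; once the impossible nesting is ruled out, reading off which of (1), (2) or (3) obtains is routine. I would carry out the nesting analysis either directly, as above, or equivalently by an induction on $\min(\mathsf{dpth}(\theta_{r_1}),\mathsf{dpth}(\theta_{r_2}))$ comparing the two separators hole by hole.
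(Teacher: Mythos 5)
Your proposal has two genuine gaps. The first is the restriction to connectives of arity $\geq 1$: the parenthetical ``since ${\odot}$ has arity $\geq 1$'' is not an observation but an unwarranted assumption, since the lemma quantifies over all ${\odot}_1,{\odot}_2\in\Sigma$, including sentential constants, and that case is not marginal --- it is precisely where the paper's intersection formulas come from (in G$^+_4$ one has $\iota(\theta_1 a_2,\theta_2 a_1)=(a_1\equiv a_2)$ with $a_1,a_2\in\Sigma_0$). For $0$-ary connectives your preliminary facts fail outright: $\mathcal{S}(\theta_1 a_2)=\{a_1\equiv a_2\}=\mathcal{S}(\theta_2 a_1)$ is a singleton, and the two sets coincide although the pairs $(r,{\odot})$ differ, so the map $(\theta_r,{\odot})\mapsto\theta_r({\odot}(q_1,\dots,q_k))$ is not injective. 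Consequently your derivation of the trichotomy from the Core Claim (``mutual containment is equality, hence (1) by injectivity'') and your exclusivity argument (``(1) and (2) produce infinite intersections while (3) produces a singleton'') both break down exactly on the case the lemma was designed to isolate.

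The second gap is more serious: the concluding dichotomy of your Core Claim is false as stated, because positional refinement of holes does not imply that one statement-form is an instance of the other. Take binary $f,g$, unary $g'$, a constant $a$, and the syntactically legitimate separators $\theta_{r_2}(q)=f(q,q)$ and $\theta_{r_1}(p)=f(g(p,g'(a)),g(g'(a),p))$, with ${\odot}_1=g'$ and ${\odot}_2=g$. Every hole of $\theta_{r_1}$ sits strictly inside an argument of a block at a hole of $\theta_{r_2}$ and never conversely, so you are squarely in your ``holes refine'' case; yet neither pattern is an instance of the other (the two $g$-arguments of $f$ in $\theta_{r_1}(g'(q_1))$ are distinct terms, while $\theta_{r_2}$ forces them to coincide), and the intersection is the singleton ground formula $f(g(g'(a),g'(a)),g(g'(a),g'(a)))$ --- the lemma's case (3) arising inside your ``containment'' case, with both connectives of arity $\geq 1$. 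What refinement of positions gives you is only that the subterm of $\theta_{r_1}$ rooted at each hole $h$ of $\theta_{r_2}$ is some $\delta_h(p)$; subsumption additionally requires all these $\delta_h$ to be the \emph{same} term, and when they are not, the non-linearity of $\theta_{r_2}$ forces equations that can collapse everything to a single ground instance. So refinement and ``frame-constant confrontation'' are not alternatives --- they co-occur in this example --- and the case split must be drawn along uniformity of the $\delta_h$'s, which is exactly the ``delicate work'' you defer. The paper sidesteps all of this by a different factorization: it first analyses $\mathcal{S}(\theta_{r_1})\cap\mathcal{S}(\theta_{r_2})$, where Robinson unification of the two one-variable terms $\theta_{r_1}(p)$ and $\theta_{r_2}(q)$ yields directly that the mgu is either a one-sided substitution (giving proper containment) or grounds both variables (giving a unique ground common instance), and only then lifts the trichotomy to the $\theta_r{\odot}$-patterns by comparing head connectives. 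If you want to salvage your positional analysis you would in effect have to reprove this one-variable mgu fact by hand; adopting the paper's factorization is the cleaner repair.
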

\begin{proof}
We first show that if $\mathcal{S}(\theta_{r_1})\cap\mathcal{S}(\theta_{r_2})\neq\varnothing$ then exactly one of the following three situations must occur:
\begin{enumerate}[(a)]
\item $r_1=r_2$; or
\item $\mathcal{S}(\theta_{r_1})\subsetneq\mathcal{S}(\theta_{r_2})$ (or the other way around, $\mathcal{S}(\theta_{r_2})\subsetneq\mathcal{S}(\theta_{r_1})$); or
\item $\mathcal{S}(\theta_{r_1})\cap\mathcal{S}(\theta_{r_2})$ is a singleton set, whose sole formula is a ground formula.
\end{enumerate}
Let $r_1\neq r_2$. We use Robinson's unification algorithm~\cite{rob:resolution:65} on the pair formed by $\theta_{r_1}(p)$ and $\theta_{r_2}(q)$ with $p,q\in\mathcal{A}$, $p\neq q$. Clearly, the pair is unifiable, and the algorithm outputs a most general unifier that will convey either a substitution of~$p$ by some $\delta(q)\in\mathcal{S}$ (or the other way around, a substitution of $q$ by some $\delta(p)\in\mathcal{S}$), or else a substitution of $p$ and $q$ by some ground formulas $\alpha_p,\alpha_q\in\mathcal{S}$. 
In the latter case, we can conclude that $\theta_{r_1}(p)$ and $\theta_{r_2}(q)$ have exactly one common instance $\theta_{r_1}(\alpha_p)=\theta_{r_2}(\alpha_q)$, which is ground.
In the former cases, assuming without loss of generality that the most general unifier conveys the substitution of $q$ by $\delta(p)$, we then have that 
$\theta_{r_1}(p)=\theta_{r_2}(\delta(p))$ and thus $\mathcal{S}(\theta_{r_1})\subseteq\mathcal{S}(\theta_{r_2})$. The inclusion is proper as $\delta(p)\neq p$, or else we would have $r_1=r_2$.

Our main result follows easily, now. Let $\varphi\in\mathcal{S}(\theta_{r_1}{\odot}_1)\cap\mathcal{S}(\theta_{r_2}{\odot}_2)$. Clearly, one also has $\varphi\in\mathcal{S}(\theta_{r_1})\cap\mathcal{S}(\theta_{r_2})\neq\varnothing$. If (a) is the case then $r_1=r_2$ and it is immediate that also ${\odot}_1={\odot}_2$, and we are in situation~(1). If $r_1\neq r_2$, then either (b) or (c) must be the case. If (c) is the case then $\varphi$ must be the sole formula in the intersection, and is therefore a ground formula, so we are in situation~(3). Otherwise, (b) must be the case, and $\mathcal{S}(\theta_{r_1})\subsetneq\mathcal{S}(\theta_{r_2})$. Thus, we have $\theta_{r_1}(p)=\theta_{r_2}(\delta(p))$, where $\delta(p)=\otimes(\delta_1(p),\dots,\delta_k(p))$ for some connective $\otimes\in\Sigma_k$ with $k\neq 0$ and one-variable formulas $\delta_1,\dots,\delta_k\in\mathcal{S}$.
Hence, we have  $\varphi=\theta_{r_1}({\odot}_1(\varphi_1,\dots,\varphi_{k_1}))=
\theta_{r_2}(\otimes(\delta_1({\odot}_1(\varphi_1,\dots,\varphi_{k_1})),\dots,\delta_k({\odot}_1(\varphi_1,\dots,\varphi_{k_1}))))=
\theta_{r_2}({\odot}_2(\psi_1,\dots,\psi_{k_2}))$. In particular, this implies that $\otimes={\odot}_2$ and $k=k_2$. Thus, $\theta_{r_1}({\odot}_1(q_1,\dots,q_{k_1}))=
\theta_{r_2}({\odot}_2(\delta_1({\odot}_1(q_1,\dots,q_{k_1})),\dots,\delta_k({\odot}_1(q_1,\dots,q_{k_1}))))$ and therefore $\mathcal{S}(\theta_{r_1}{\odot}_1)\subsetneq\mathcal{S}(\theta_{r_2}{\odot}_2)$, so we are in situation~(2).
\qed	
\end{proof}

Suppose that for a nonbasic formula $\varphi\in\mathcal{S}$ we have $\varphi=\theta_{r_1}({\odot}_1(\varphi_1,\dots,\varphi_{k_1}))$ and $\varphi=\theta_{r_2}({\odot}_2(\psi_1,\dots,\psi_{k_2}))$ with $r_1\neq r_2$, that is, suppose that
$\theta_{r_1}{\odot}$ and $\theta_{r_2}{\odot}$ are both fit for~$\varphi$ despite the fact that~$\theta_{r_1}$ are~$\theta_{r_2}$ are distinct separators.
Then, $\varphi\in\mathcal{S}(\theta_{r_1}{\odot}_1)\cap\mathcal{S}(\theta_{r_2}{\odot}_2)\neq\varnothing$ and we can examine the situation in the 
light of Lemma~\ref{lemma:intersect}. 
If $\mathcal{S}(\theta_{r_1}{\odot}_1)\cap\mathcal{S}(\theta_{r_2}{\odot}_2)$ is a singleton then $\varphi=\iota(\theta_{r_1}{\odot}_1,\theta_{r_2}{\odot}_2)$, and we dub $\varphi$ an \textsl{intersection formula}. 
Otherwise, without loss of generality, we have that $\mathcal{S}(\theta_{r_1}{\odot}_1)\subsetneq\mathcal{S}(\theta_{r_2}{\odot}_2)$.
In this case, we say that ${\theta}_{r_1}{\odot}_1$ is \textsl{more concrete} than ${\theta}_{r_2}{\odot}_2$, or that ${\theta}_{r_2}{\odot}_2$ is \textsl{more general} than ${\theta}_{r_1}{\odot}_1$. 
\begin{example}[Formula fitting and intersections]
\label{synt-coincidence-1}
Let us return to the case of \L$_3$, in the streamlined form given in Ex.~\ref{lukas3}, Table~\ref{stream:L3}. Recall that we proposed for this logic the separating sequence $\overline{\theta}=\langle \mathsf{id},\theta\rangle$ with $\theta=\lambda p.(\neg p\supset p)$. It is easy to see that both $\theta\land$ and $\supset$ are fit for a formula of the form $\theta(\varphi\land\psi)=\neg (\varphi\land\psi)\supset (\varphi\land\psi)$, as~$\theta$~itself has~$\supset$ as head connective. Clearly, $\theta\land$ is more concrete and~$\supset$ is more general. Of course, a formula of the form $\varphi\lor\psi$ has a unique fit~$\lor$.

Consider now G$^+_4$ as introduced in Ex.~\ref{godel}. Recall that in Ex.~\ref{ex:partialunobtainable} we used the separating sequence $\overline{\theta}=\langle \mathsf{id},\theta_1,\theta_2\rangle$ with $\theta_1=\lambda p.(a_1\equiv p)$ and $\theta_2=\lambda p.(p\equiv a_2)$. The asymmetric definitions of~$\theta_1$ and~$\theta_2$ were not without a purpose, as they allow us to illustrate at this point the existence of an intersection formula, namely $\iota(\theta_1 a_2,\theta_2 a_1)=(a_1\equiv a_2)$.
\end{example}


It is easy to see that the concreteness/generality order allows us to define the \textsl{most concrete} combination of separator and connective that fits any given composite formula that is not an intersection formula: for each such formula $\varphi$, the set of all fits for $\varphi$ (obviously non-empty and finite) is totally ordered by the concreteness/generality order, and therefore a minimum and a maximum exist. In particular, if $\theta_r{\odot}$ is the minimum (most concrete) fit for~$\varphi$ then we say that~$\varphi$ is a \textsl{proper $\theta_r{\odot}$-formula}.

Given a nonbasic proper $\theta_r{\odot}$-formula $\varphi=\theta_r{\odot}(\varphi_1,\ldots,\varphi_k)$, where ${\odot}\in\Sigma_k$, we call each $\theta_t\varphi_1,\ldots,\theta_t\varphi_k\in\mathcal{S}$, with $0\leq t\leq r$, a \textsl{generalized immediate subformula} of~$\varphi$.  The set $\mathsf{gsbf}(\varphi)$ of \textsl{generalized subformulas of~$\varphi$} is obtained by closing~$\{\varphi\}$ under generalized immediate subformulas, and the \textsl{proper generalized subformulas of~$\varphi$}, $\mathsf{pgsbf}(\varphi)$, are the elements of $\mathsf{gsbf}(\varphi)\setminus\{\varphi\}$.
The generalized notion of compositionality that will be presupposed in what follows demands a measure of formula complexity that is more fine-grained than the canonical measure given by \textsf{dpth}, and that takes into account both proper $\theta_r{\odot}$-formulas and generalized subformulas.

\begin{definition}\label{cplx}
The \textsl{generalized notion of formula complexity} is given by the function $\mathsf{cplx}:\mathcal{S}\longrightarrow\mathbb{N}$ such that:
$$\mathsf{cplx}(\varphi)=\left\{\begin{array}{ll}
				0 & \textrm{if }\varphi\textrm{ is basic or an intersection formula}\\[2mm]
				1+\mathop{\mathsf{Max}}\limits_{0\leq t\leq s,1\leq i\leq k}\!\!\!\mathsf{cplx}(\theta_t(\varphi_i))\hspace{-2mm} & \textrm{if $\varphi=\theta_r({\odot}(\varphi_1,\dots,\varphi_k))$ is a }\\[-1mm]
				& \textrm{proper $\theta_r{\odot}$-formula, }\\ 
				& \textrm{\quad for }{\odot}\in\Sigma_k\textrm{, $k\neq 0$, } 0\leq r\leq s,\\
				& \textrm{\quad and }\varphi_1,\dots,\varphi_k\in\mathcal{S}\\
			\end{array}\right.
$$			
\end{definition}

Note that this complexity function is well-defined precisely because $\mathsf{cplx}(\varphi)$ is completely determined from the values of $\mathsf{cplx}(\theta_r\psi)$ with $0\leq r\leq s$ and $\psi$ is an immediate subformula of $\varphi$, and not only we have finitely many separators
but also the number of immediate subformulas of $\varphi$ is finite.
With respect to this generalized notion of complexity, formulas of complexity~$0$ will be called \textsl{simple}, and formulas of positive complexity will be called \textsl{analyzable}.  
While many usual deductive formalisms capitalize on the so-called `subformula property', based on the truth-functional principle according to which the behavior of a formula is to be uniquely determined 
from the values of its less complex components, the same idea will later be explored in terms of the generalized notion of formula complexity and its associated generalized subformula property that takes separators into account as prefixes that help in internalizing `without additional cost' important semantic information at the syntactical level.

\begin{example}[Generalized complexity]
\label{synt-coincidence-2}
Returning to the example of \L$_3$ separated by $\overline{\theta}=\langle \mathsf{id},\theta\rangle$ with $\theta=\lambda p.(\neg p\supset p)$ from Ex.~\ref{synt-coincidence-1}, we see that formulas like~$p$ or~$\theta(p)$ are simple. Consider now $\theta(\varphi\land\psi)$ and $\varphi\land\psi$. 
Despite the fact that $\mathsf{dpth}(\theta(\varphi\land\psi)) >  \mathsf{dpth}(\varphi\land\psi)$, both formulas have the same complexity   $\mathsf{cplx}(\theta(\varphi\land\psi))=\mathsf{cplx}(\varphi\land\psi)=1+\mathsf{Max}(\mathsf{cplx}(\varphi),\mathsf{cplx}(\theta(\varphi)),\mathsf{cplx}(\psi),\mathsf{cplx}(\theta(\psi)))$.
\end{example}

\begin{remark}\label{primseps}
Note that if~$\varphi$ is a proper ${\odot}$-formula then there can be no separator~$\theta_r$ with $r\neq 0$ that is a fit for~$\varphi$. 
Any other syntactic coincidences besides those that happen between $\theta_r$ with $r\neq 0$ and the head connective of~$\theta_r$ may in fact be considered a nuisance to the purity of our method. This includes the case when~$\theta_{r_1}$ and~$\theta_{r_2}$ have common instances with $r_1\neq r_2$ and $r_1,r_2\neq 0$, and in particular intersection formulas. However, we must mention that there is a simple way of avoiding such intricacies. The trick (cf.~\cite{mar:V2V}) is to require all separators to be primitive unary connectives of the logic.  When that is not originally the case, one may simply work with a suitable conservative extension of the given logic.  In the case of such an extension, there will obviously be a single fit for each formula~$\varphi$, what simplifies somewhat the calculation of $\mathsf{cplx}$.
\end{remark}

As we shall see later on, yet another issue that must be taken into account for the development of successful tableaux is the possibility of matching the same signed formula to the premises of distinct (therefore applicable) rules. The most usual form in which this problem presents itself will be dealt with in Def.~\ref{analytic}, by way of an analytic proof strategy intended to guarantee termination of the proof procedure.  For now, to deal with the exceptional cases mentioned in Lemma~\ref{lemma:intersect}, we need to devote some attention to the case of intersection formulas.

Given a ground formula~$\varphi$, it is clearly the case that $\widehat{\varphi}\in\mathcal{V}_n$ and thus either $\widehat{\varphi}\in\Vd_{m,n}$ or $\widehat{\varphi}\in\Vu_{m,n}$.  
We may capture the fact that $t(\widehat{\varphi})=X$, for $X\in\{F,T\}$, by the following bivalent `$G$-statement':
\begin{equation}
\XX{}{c}\varphi {\quad\Longrightarrow\quad} \divideontimes. \tag{$\textsl{G}(\varphi)$}\label{eq:ground}
\end{equation}

In what follows, recall that $\mathsf{Sem}$ collects all 
$n$-valuations characterizing~$\mathcal{L}$.

\begin{lemma}\label{okground}
Let $\val\in\mathsf{Sem}$ and let $\varphi\in\mathcal{S}$ be a ground formula. Then, $\val$ sat\-is\-fies $\textsl{G}(\varphi)$.
\end{lemma}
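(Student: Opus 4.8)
The plan is to unwind the definition of what it means for an $n$-valued valuation to satisfy the statement $\textsl{G}(\varphi)$, and then to exploit the fact that ground formulas are evaluated uniformly by every homomorphism. Recall that $\textsl{G}(\varphi)$ is the metalinguistic implication $\XX{}{c}\varphi \Longrightarrow \divideontimes$, in which $X$ abbreviates $t(\widehat{\varphi})$. Since the consequent $\divideontimes$ is never satisfied, $\val$ satisfies this implication exactly when it fails to satisfy the antecedent $\XX{}{c}\varphi$; by the definition of satisfaction for an $n$-valued valuation, this amounts to showing that $b_\val(\varphi) \neq X^c$, equivalently that $b_\val(\varphi) = X$.

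First I would observe that, since $\varphi$ contains no atomic variables, its value under any homomorphism $\val \colon \mathbb{S} \longrightarrow \mathbb{V}$ depends only on the interpretations of the connectives in the algebra $\mathbb{V}$, and not on any underlying assignment of values to variables. A straightforward induction on the structure of $\varphi$ then gives $\val(\varphi) = \widehat{\varphi}$ for every $\val \in \mathsf{Sem}$; that is, all valuations agree on ground formulas, which is precisely what makes the abbreviation $X = t(\widehat{\varphi})$ well defined in the first place.

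With this in place the conclusion is immediate: one computes $b_\val(\varphi) = t(\val(\varphi)) = t(\widehat{\varphi}) = X$, using the definition of $b_\val$ together with the previous observation and the choice of $X$. Hence $b_\val(\varphi) = X \neq X^c$, so $\val$ does not satisfy the antecedent $\XX{}{c}\varphi$, and the implication $\textsl{G}(\varphi)$ is satisfied vacuously.

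I do not anticipate any real difficulty in this argument. The only point deserving a moment's care is the bookkeeping around the metalinguistic implication, namely the observation that an implication whose consequent is the absurd $\divideontimes$ is satisfied precisely when its antecedent is left unsatisfied. The genuine mathematical content is confined to the remark that every homomorphism assigns the same value to a ground formula, and the rest is a direct unwinding of definitions.
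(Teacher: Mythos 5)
Your proof is correct and follows essentially the same route as the paper's: both arguments note that $t(\val(\varphi))=t(\widehat{\varphi})=X\neq X^c$, so the antecedent $\XX{}{c}\varphi$ is unsatisfied and the implication holds vacuously. Your added remark that $\val(\varphi)=\widehat{\varphi}$ for ground $\varphi$ by induction on structure simply makes explicit a step the paper treats as immediate.
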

\begin{proof}
Note that $\val$ does not satisfy the left-hand side of $\textsl{G}(\varphi)$, as $t(\val(\varphi))=t(\widehat{\varphi})=X$, and $X\neq X^c$.\qed
\end{proof}

\begin{example}[Statements for intersection formulas]
\label{synt-coincidence-3}
Returning to G$^+_4$, from Ex.~\ref{synt-coincidence-1}, recall that $\iota(\theta_1 a_2,\theta_2 a_1)=(a_1\equiv a_2)$ is an intersection formula that takes an undesignated value, that is, such that $t\left(\frac{1}{3}\widehat{\equiv} \frac{2}{3}\right)=F$. The corresponding \textsl{G}-statement is 
\begin{equation*}
{\T {a_1\equiv a_2}} \quad\Longrightarrow\quad \divideontimes.
\end{equation*}
\end{example}

While \textit{decidability} of a given finite-valued truth-functional semantics follows straightforwardly using the truth-tabular method, it is not at all obvious that a similar result applies to logics defined by a non-truth-functional bivalent semantics.  However, for semantics defined as collections of bivalent statements as in Def.~\ref{bivstatements}, one may easily devise a decision procedure after showing that the value of any composite formula is uniquely determined by the value of its generalized subformulas.

Let us detail what we have just said.  We will call a semantics $\mathsf{Sem}$ \textsl{effective} if for determining satisfiability of a given set of formulas~$\Gamma$ it suffices to collect all partial valuations over the proper generalized subformulas of~$\Gamma$.  In particular, the value of any formula~$\varphi$ with $\mathsf{cplx}(\varphi)>0$ will be a function of the formulas in $\mathsf{pgsbf}(\varphi)$. 
Alternatively, for effectiveness one might require that any partial valuation~$\left.\val\right|_\mathcal{R}$ defined over a collection of formulas~$\mathcal{R}$ closed under generalized subformulas should be extendable into a full valuation~$\val\in\mathsf{Sem}$ (cf.~\cite{Avron:aNDVoNCN}).

The above notion of effectiveness relies on the intuition that there should be a computable set $\Gamma^\star$ that collects the formulas that `really matter' in determining the value of~$\Gamma$.  Typically, $\Gamma^\star$ denotes the set of atomic variables occurring in~$\Gamma$, or the set of proper subformulas of~$\Gamma$; here we consider instead the set of proper generalized subformulas of~$\Gamma$.
In fact, to endow an effective semantics with a decidability procedure, in general we only need $\mathsf{pgsbf}$ to be such that: 
$(a)$ $\mathsf{pgsbf}(\varphi)$ is finite, for every $\varphi$; $(b)$~$\mathsf{pgsbf}(\varphi)=\varnothing$ if $\varphi$ is a basic formula; $(c)$~$\mathsf{pgsbf}(\varphi)\subsetneq\mathsf{pgsbf}(\psi)$ if $\varphi\in\mathsf{pgsbf}(\psi)$.
In that case, everything boils down to noticing that the value of a (generalized) composite formula~$\varphi$ is a function of the values of its (generalized) atomic content (a specific subset of $\mathsf{pgsbf}(\varphi)$).


It is not hard to show that $\mathsf{Sem}_2$, as obtained from our algorithm, is effective.  In fact, the value of a composite formula~$\varphi$ is not only calculable from the value of its generalized subformulas, but it can be computed by at most $\mathsf{cplx}(\varphi)$-nested uses of the bivalent statements.  More precisely:

\begin{proposition}\label{effective}
Let $b\in\mathsf{Sem}_2$ and $\varphi(p_1,\dots,p_k)\in\mathcal{S}$. Then:
\begin{itemize}
\item the value $b(\varphi)$ is uniquely determined from the values $b(\theta_r(p_i))$ for all $0\leq r\leq s$ and $1\leq i\leq k$;
\item the value $b(\varphi)$ can be computed using the $\mathcal{B}(\mathcal{L},\overline{\theta})$ statements.
\end{itemize}
\end{proposition}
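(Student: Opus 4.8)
The plan is to prove the two bullets separately: the first by a direct appeal to the construction underlying Prop.~\ref{oksem}, and the second by induction on the generalized complexity $\mathsf{cplx}(\varphi)$. For the first bullet, I would start from $b\in\mathsf{Sem}_2$, so that by Prop.~\ref{oksem} $b$ satisfies every statement in $\mathcal{B}(\mathcal{L},\overline{\theta})$. The proof of that proposition already exhibits the unique $n$-valuation $\val_b\in\mathsf{Sem}$ with $b_{\val_b}=b$, characterized by the property that $\val_b(\psi)$ is the unique truth-value whose binary print is $\langle b(\theta_r(\psi))\rangle_{r=0}^s$. Because $b$ satisfies the $\textsl{U}$-statements this sequence is obtainable, and by separability (injectivity of $z\mapsto\overline{\theta}(z)$) it pins down a single value; hence each $\val_b(p_i)$ is recovered from the inputs $b(\theta_r(p_i))$, $0\le r\le s$. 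Since $\val_b$ is a homomorphism, $\val_b(\varphi)=\widehat{\varphi}(\val_b(p_1),\dots,\val_b(p_k))$ depends only on these, and $b(\varphi)=t(\val_b(\varphi))$. Chaining the dependencies shows that $b(\varphi)$ is a function of the $b(\theta_r(p_i))$.

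For the second bullet I would argue by induction on $\mathsf{cplx}(\varphi)$, reading each $\textsl{B}$-statement as an equivalence — legitimate for $b$ by Remark~\ref{complementaryBHV}, since $b$ also satisfies the $\textsl{U}$-statements, so that exactly one of $\textsl{B}_F^{\theta_r{\odot}}$, $\textsl{B}_T^{\theta_r{\odot}}$ has its right-hand side satisfied. In the inductive step $\varphi$ is analyzable, hence a proper $\theta_r{\odot}$-formula $\theta_r({\odot}(\varphi_1,\dots,\varphi_k))$; its generalized immediate subformulas $\theta_t(\varphi_i)$ all satisfy $\mathsf{cplx}(\theta_t(\varphi_i))<\mathsf{cplx}(\varphi)$ by Def.~\ref{cplx}, so by the induction hypothesis each $b(\theta_t(\varphi_i))$ is computable from the $b(\theta_r(p_j))$. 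These are precisely the labeled formulas occurring on the right-hand side of $\textsl{B}_X^{\theta_r{\odot}}$, so evaluating that right-hand side (a boolean combination in disjunctive normal form, by Remark~\ref{DNF}) yields $b(\varphi)$ in one further use of a bivalent statement, which accounts for the announced bound of $\mathsf{cplx}(\varphi)$ nested uses.

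The base case $\mathsf{cplx}(\varphi)=0$, where $\varphi$ is simple, is where I expect the only real friction. If $\varphi=\theta_r(p_i)$ is basic and built on a variable, its value is among the given inputs and nothing is to be computed. The remaining simple formulas — ground basic formulas $\theta_r(c)$ with $c\in\Sigma_0$, and intersection formulas — are ground, so $\mathsf{cplx}$ fails to descend through them and the main induction gives no leverage. I would dispatch these by an auxiliary induction on $\mathsf{dpth}$: for a sentential constant the $0$-ary statements pin the value of $\theta_t(c)$ outright (one of $R_X^{\theta_t c}$ is empty, forcing the conjugate value), and for a ground composite formula one picks any fit and appeals to $\textsl{B}_X^{\theta_t{\odot}}$, reducing to ground immediate subformulas of strictly smaller depth.

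The main obstacle is thus bookkeeping this base case cleanly: justifying that the value of an intersection formula may be obtained from the $\textsl{B}$-statements alone — without recourse to the $\textsl{G}$-statements, which lie outside $\mathcal{B}(\mathcal{L},\overline{\theta})$ — that the choice of fit is immaterial because $b$ satisfies all the $\textsl{B}$-statements simultaneously, and that the two nested inductions (on $\mathsf{cplx}$, and on $\mathsf{dpth}$ within the ground base case) interlock without circularity, the $\mathsf{dpth}$-recursion always terminating at sentential constants.
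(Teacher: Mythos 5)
Your proposal is correct, and its skeleton coincides with the paper's own proof: the first bullet is obtained exactly as you do, as a corollary of the proof of Prop.~\ref{oksem} (the unique $\val_b\in\mathsf{Sem}$ with $b=t\circ\val_b$ is recovered from the binary prints of the $p_i$, and homomorphism does the rest), and the second bullet proceeds by induction on $\mathsf{cplx}$, using the complementarity of Remark~\ref{complementaryBHV} to read each $\textsl{B}$-statement as an equivalence, with the same cases $\theta_r(p_i)$ and $\theta_r(a)$, $a\in\Sigma_0$, in the base. The one genuine divergence is the intersection-formula case: the paper dispatches it in one line by appealing to the statement $\textsl{G}(\varphi)$, which --- as you correctly flag --- lies outside $\mathcal{B}(\mathcal{L},\overline{\theta})$ as defined in Def.~\ref{bivstatements} (the $\textsl{G}$-statements only enter the tableau set $\mathcal{B}_\mathcal{T}(\mathcal{L},\overline{\theta})$ later); this is harmless for determinacy, since $b=b_{\val_b}$ satisfies $\textsl{G}(\varphi)$ by Lemma~\ref{okground}, but it does not literally establish computability ``using the $\mathcal{B}(\mathcal{L},\overline{\theta})$ statements.'' Your auxiliary induction repairs this, and it is sound provided it is formulated (as your phrase ``reducing to ground immediate subformulas of strictly smaller depth'' suggests) as computing the \emph{entire binary print} $\langle b(\theta_t(\psi))\rangle_{t=0}^{s}$ of each ground formula $\psi$ by induction on $\mathsf{dpth}(\psi)$: the descent must go to the immediate subformulas $\psi_i$, whose depths do strictly decrease, and not to the generalized subformulas $\theta_t(\psi_i)$ individually, whose depths need not decrease at all (indeed that failure of $\mathsf{dpth}$ is the very reason $\mathsf{cplx}$ was introduced). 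So the paper's proof buys brevity at the cost of a small mismatch with the proposition as stated, while yours buys a proof of the literal claim --- $\textsl{U}$- and $\textsl{B}$-statements suffice --- at the cost of one extra, carefully interlocked induction.
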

\begin{proof}
The first result is a corollary of the proof of Prop.~\ref{oksem}. Indeed, $b=t\circ\val$ for some valuation $\val\in\mathsf{Sem}$ such that, for each $1\leq i\leq k$,  $\val(p_i)=x_i$ is the unique value such that $\overline{\theta}(x_i)=\langle b(p_i),b(\theta_1(p_i)),\dots,b(\theta_s(p_i))\rangle$. Therefore, $b(\varphi)=t(\val(\varphi))=t(\widehat{\varphi}(x_1,\dots,x_n))$. We show next, by induction on the complexity~\textsf{cplx} of~$\varphi$, how $b(\varphi)$ can be computed using the bivalent statements.

If $\varphi=\theta_r(p_i)$ for some $0\leq r\leq s$ and $1\leq i\leq k$, then $b(\theta_r(p_i))$ is given. Alternatively, if $\varphi=\theta_r(a)$ for some $a\in\Sigma_0$ and $0\leq r\leq s$, then, as explained in Remark~\ref{complementaryBHV}, one of the rules $B^{\theta_r a}_X$ has $\top$ on the right-hand side, while the complementing rule $B^{\theta_r a}_{X^c}$ has $\divideontimes$ on the right-hand side. Clearly, $b(\theta_r(a))=X$. Otherwise, if $\varphi=\iota(\theta_{r_1}{\odot}_1,\theta_{r_2}{\odot}_2)$ is an intersection formula then~$b$ satisfies the $\textsl{G}(\varphi)$ statement with $\XX{}{c}\varphi$ on the left-hand side, and $b(\iota(\theta_{r_1}{\odot}_1,\theta_{r_2}{\odot}_2))=X$.
Suppose now that $\varphi=\theta_r({\odot}(\varphi_1(p_1,\dots,p_k),\dots,\varphi_m(p_1,\dots,p_k)))$ is an analyzable proper $\theta_r{\odot}$-formula.
By induction hypothesis, we can compute $\overline{X}_{\varphi_j}=\langle b(\theta_0(\varphi_j)),\dots,b(\theta_t(\varphi_j))\rangle$ for each $1\leq j\leq m$, 
thus determining 
a unique vector $\overline{x}=\langle x_1,\dots,x_m\rangle\in(\mathcal{V}_n)^m$ such that $\overline{\theta}(x_j)=\overline{X}_{\varphi_j}$. Now, if $\overline{x}\in R^{\theta_r{\odot}}_X$ then~$b$ satisfies a disjunct from the right-hand side of the rule, and so $b(\varphi)=X$.\qed
\end{proof}

Decidability is an obvious corollary of the above result.  Generalizations of the usual truth-tabular method in terms of the so-called \textit{quasi matrix procedure} (cf.~\cite{dac:alv:1977}) may be developed in order to take generalized subformulas appropriately into account (cf.~the `dyadic semantics' in~\cite{ccal:car:con:mar:03a}).  Instead of doing just that, in the next sections we will show instead how our bivalent semantics may be utilized in associating adequate analytic proof procedures to the same logics, that will at once guarantee decidability and also exhibit other pleasant computational features.

\section{Uniform Analytic Classic-Like Branching Tableaux} \label{TableauExtraction}

\noindent
The results from the preceding section will now be used in showing how the classic-like semantics associated to a given finite-valued logic by means of bivalent statements may be exploited so as to devise an adequate classic-like tableau-based proof formalism for the logic.
Before accomplishing that task, however, we first have to discuss a couple of technical issues related to the characterization of the unobtainable semantic scenarios (which will opportunely give rise to nonstandard tableau closure rules). 
Ultimately, to guarantee also termination in the practice of tableau development (thus determining a decision procedure), we resort in fact to a suitable generalization of analyticity.  To address efficiency aspects, at a later stage we will also consider a reformulation of the standard branching tableaux in terms of linear tableaux (with analytic cuts).

\subsection{Dealing with partial information and with intersection formulas}\label{subsec:partialinfo}

\noindent 
Recall from Def.~\ref{bivstatements} the description of the collection $\mathcal{B}(\mathcal{L},\overline{\theta})$ of bivalent statements ($\textsl{U}$-statements and $\textsl{B}$-statements) associated to a given separable finite-valued logic~$\mathcal{L}$ with separating sequence $\overline{\theta}=\langle\theta_r\rangle_{r=0}^s$.  To formulate our method for associating also a classic-like tableau system to~$\mathcal{L}$ 
we must first take a closer look at the unobtainable binary prints.


Intuitively, the binary prints in $\{F,T\}^{s+1}\setminus\overline{\theta}[\mathcal{V}_n]$
bring about information that does not represent any of the original truth-values of~$\mathcal{L}$.  
As we have shown, such unobtainable binary prints are conveniently expressed by the bivalent $\textsl{U\/}$-statements in $\mathcal{B}(\mathcal{L},\overline{\theta})$. 
However, for each unobtainable $\overline{X}\in(\{F,T\}^{s+1}\setminus\overline{\theta}[\mathcal{V}_n])$, the statement $\textsl{U\/}\overline{X}$ is, in general, too coarse for our purposes.
What we seek is to identify information ---even if partial--- about a binary print that leads forcibly and unambiguously to an unobtainable state-of-affairs. 

\begin{definition}\label{partial}
Let a \textsl{partial binary print} be any sequence $\overline{Y}\in\{F,T,{\uparrow}\}^{s+1}$ where the symbol $\uparrow$ stands for undefinedness (this definition
includes, of course, the \textsl{total} binary prints in $\{F,T\}^{s+1}$). By  $\dom(\overline{Y})$ we denote the set $\{0 \leq r\leq s:Y_r\neq {\uparrow}\}$.
Given two partial binary prints~$\overline{Y}$ and~$\overline{Z}$, we say that~$\overline{Y}$ \textsl{extends}~$\overline{Z}$ if
$\dom(\overline{Z})\subseteq \dom(\overline{Y})$ and $Z_r=Y_r$ for every $r\in \dom(\overline{Z})$.
A partial binary print $\overline{Y}\in\{F,T,\uparrow\}^{s+1}$ is \textsl{unobtainable} if all of its $2^{s+1-|\dom(\overline{Y})|}$ possible total extensions are unobtainable. 
Further, an unobtainable partial binary print~$\overline{Y}$ is said to be \textsl{minimal} if it is not an extension of another unobtainable partial binary print. 
\end{definition}

Extending notation from Section~\ref{sec:classiclike} to cover also partial binary prints, given $\overline{Y}\in\{F,T,{\uparrow}\}^{s+1}$, we will write
\begin{equation}
({\&}_{r\in\dom(\overline{Y})} \Y{r}\theta_r(\varphi)). \tag{${\textsl{V}(\varphi\,;\overline{Y})}$}\label{eq:value2}
\end{equation}
to say that such partial binary print describes (more economically) the original truth-value of~$\varphi$.
As before, we may capture the fact that $\overline{Y}$ is unobtainable by the following statement involving an arbitrary $\varphi\in\mathcal{S}$:
\begin{equation}
{\textsl{V}(\varphi\,;\overline{Y})} {\quad\Longrightarrow\quad} \divideontimes. \tag{$\textsl{U\/}\overline{Y}$}\label{eq:partial}
\end{equation}

\begin{lemma}\label{okpartial}
Let $b:\mathcal{S}\longrightarrow\{F,T\}$ be a bivaluation. 
The (total) binary print $\overline{X}=\langle{b(\varphi),b(\theta_1(\varphi)),\dots,b(\theta_s(\varphi))}\rangle$ is obtainable 
if and only if $b$ satisfies $\textsl{U\/}\overline{Y}$ for every minimal unobtainable partial binary print~$\overline{Y}$.
\end{lemma}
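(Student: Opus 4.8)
The plan is to prove both directions by contraposition, exploiting the fact that $b$ satisfies $\textsl{U\/}\overline{Y}$ precisely when $b$ \emph{fails} to satisfy $\textsl{V}(\varphi\,;\overline{Y})$, i.e.\ precisely when it is \emph{not} the case that $b(\theta_r(\varphi))=Y_r$ for every $r\in\dom(\overline{Y})$. Writing $\overline{X}=\langle b(\theta_r(\varphi))\rangle_{r=0}^s$ (recall $\theta_0=\mathsf{id}$), this last condition says exactly that the total binary print $\overline{X}$ does \emph{not} extend $\overline{Y}$. So the whole lemma reduces to the purely combinatorial claim that $\overline{X}$ is obtainable if and only if $\overline{X}$ extends no minimal unobtainable partial binary print. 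First I would record the two elementary facts that make this work: that the extension relation is a partial order on the finite set $\{F,T,{\uparrow}\}^{s+1}$ (transitivity of domain inclusion together with agreement on the smaller domain), and that a total binary print, viewed as a partial print of full domain, is its own unique total extension --- so it is unobtainable as a partial print exactly when it is unobtainable as a total print in the sense of $\overline{\theta}[\mathcal{V}_n]$.

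For the forward direction I would assume $\overline{X}$ obtainable and suppose, toward a contradiction, that $b$ does not satisfy $\textsl{U\/}\overline{Y}$ for some minimal unobtainable partial binary print $\overline{Y}$. Then $\overline{X}$ extends $\overline{Y}$; but $\overline{X}$ is a total extension of the unobtainable print $\overline{Y}$, so by the very definition of unobtainability $\overline{X}$ is itself unobtainable, contradicting the hypothesis. Hence $b$ satisfies every such $\textsl{U\/}\overline{Y}$.

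For the converse I would argue contrapositively: assume $\overline{X}$ is unobtainable. Then $\overline{X}$ is itself an unobtainable partial print, so the set $\mathcal{M}$ of unobtainable partial prints that $\overline{X}$ extends is nonempty; since the ambient set is finite, $\mathcal{M}$ has an element $\overline{Y}$ whose domain is minimal under inclusion among members of $\mathcal{M}$ (note that any two members of $\mathcal{M}$ agree with $\overline{X}$, hence with each other, where both are defined, so on $\mathcal{M}$ the extension relation coincides with domain inclusion). The key step is to check that this $\overline{Y}$ is minimal in the global sense of Definition~\ref{partial}: if $\overline{Y}$ extended a strictly smaller unobtainable print $\overline{Y}'$, then by transitivity $\overline{X}$ would extend $\overline{Y}'$, placing $\overline{Y}'$ in $\mathcal{M}$ and contradicting the minimality of $\overline{Y}$ inside $\mathcal{M}$. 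Since $\overline{X}$ extends this globally minimal unobtainable $\overline{Y}$, we have $b(\theta_r(\varphi))=Y_r$ for all $r\in\dom(\overline{Y})$, so $b$ satisfies $\textsl{V}(\varphi\,;\overline{Y})$ and therefore fails $\textsl{U\/}\overline{Y}$, as required.

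The only delicate point --- and the one I would write out most carefully --- is this final reduction: the passage from ``minimal within the down-set of $\overline{X}$'' to ``minimal among all unobtainable partial prints.'' Everything else is bookkeeping with the definitions of extension and unobtainability, and the finiteness of $\{F,T,{\uparrow}\}^{s+1}$ is precisely what guarantees that a minimal element exists and keeps the whole argument elementary.
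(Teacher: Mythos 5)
Your proof is correct and follows essentially the same route as the paper's: both directions are handled by contraposition, using that any total extension of an unobtainable partial print is itself unobtainable (forward direction), and that an unobtainable total print must extend some \emph{minimal} unobtainable partial print (converse). The only difference is one of detail: you explicitly prove the existence and global minimality of that minimal print $\overline{Y}$ (via finiteness of $\{F,T,\uparrow\}^{s+1}$ and the observation that minimality within the down-set of $\overline{X}$ implies minimality in the sense of Definition~\ref{partial}), a step the paper simply asserts with ``let $\overline{Y}$ be one (of the possibly many) minimal partial binary prints extended by $\overline{X}$.''
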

\begin{proof}
%
Suppose that $b$ does not satisfy $\textsl{U\/}\overline{Y}$ for some minimal unobtainable partial binary print $\overline{Y}$. Thus, $b$ satisfies the left-hand side of $\textsl{U\/}\overline{Y}$, and therefore $\overline{X}$ must extend $\overline{Y}$. Hence, as $\overline{Y}$ is unobtainable and $\overline{X}$ is total, it must be the case that $\overline{X}$ is also unobtainable.

Assume now $\overline{X}$ is unobtainable and let $\overline{Y}$ be one (of the possibly many) min\-imal partial binary print extended by~$\overline{X}$. Then, it is clear that $b$ satisfies the left-hand side of $\textsl{U\/}\overline{Y}$ and so~$\overline{Y}$ is unobtainable.\qed
\end{proof}

The latter result means that, in general, one may replace the $\textsl{U\/}$-statements concerning total unobtainable binary prints by the $\textsl{U\/}$-statements for minimal unobtainable binary prints. 

\begin{example}[Unobtainable binary prints]\label{ex:partialunobtainable}
Consider first the case of \L$_3$, from Ex.~\ref{lukas2} and~\ref{lukas3}. Recall that using the separating sequence $\overline{\theta}=\langle \mathsf{id},\theta\rangle$, where $\theta=\lambda p.(\neg p\supset p)$ defines the unary operator $j^1_\geq$ of Ex.~\ref{ex:separatingTF}, we get the binary prints $\overline{\theta}(0)=\langle F,F\rangle$, $\overline{\theta}(\frac{1}{2})=\langle F,T\rangle$ and $\overline{\theta}(1)=\langle T,T\rangle$. The only unobtainable binary print $\langle T,F\rangle$ is therefore also minimal.

Consider now the case of G$^+_4$, from Ex.~\ref{godel}, where we add to~G$_4$ the sentential constants~$a_1$ and $a_2$ such that $\widehat{a_1}=\frac{1}{3}$ and $\widehat{a_2}=\frac{2}{3}$, separated by $\overline{\theta}=\langle \mathsf{id},\theta_1,\theta_2\rangle$, where $\theta_1=\lambda p.(a_1\equiv p)$ and $\theta_2=\lambda p.(p\equiv a_2)$. Recall that $\theta_1$ and $\theta_2$ define the unary operators $k_1$ and $k_2$, respectively. We get $\overline{\theta}(0)=\langle F,F,F\rangle$, $\overline{\theta}(\frac{1}{3})=\langle F,T,F\rangle$, $\overline{\theta}(\frac{2}{3})=\langle F,F,T\rangle$ and $\overline{\theta}(1)=\langle T,F,F\rangle$. The remaining four binary prints $\langle T,T,F\rangle,\langle F,T,T\rangle,\langle T,F,T\rangle,\langle T,T,T\rangle$ are unobtainable, but are not minimal. It is clear that any binary print with more than one~$T$ is unobtainable. Thus, the minimal unobtainable (strictly partial, in this case) binary prints are $\langle T,T,\uparrow\rangle,\langle T,\uparrow,T\rangle,\langle \uparrow,T,T\rangle$. For the sake of the illustration, the $\textsl{U\/}$-statements that originate from these partial binary prints are listed in Table~\ref{tab:ubps}.
\end{example}
\begin{table}[htbp]\small
\begin{center}
\scalebox{.7}{
\begin{tabular}{|r|lcl|}
\hline
 ($\textsl{U\/}\langle T,T,\uparrow\rangle$) & $({\T\varphi} {\;\;\&\;\;} {\T\theta_1(\varphi)})$ & $\Longrightarrow$ & $\divideontimes$\\
\hline
 ($\textsl{U\/}\langle T,\uparrow,T\rangle$) & $({\T\varphi} {\;\;\&\;\;} {\T\theta_2(\varphi)})$ & $\Longrightarrow$ & $\divideontimes$\\
\hline
 ($\textsl{U\/}\langle\uparrow,T,T\rangle$) & $({\T\theta_1(\varphi)} {\;\;\&\;\;} {\T\theta_2(\varphi)})$ & $\Longrightarrow$ & $\divideontimes$\\
\hline
\end{tabular}
}
\end{center}
\caption{$\textsl{U\/}$-statements for G$^+_4$ separated by $\langle \mathsf{id},\theta_1,\theta_2\rangle$.}
\label{tab:ubps}
\end{table}
Taking minimal unobtainable partial binary prints into account will prove essential for guaranteeing completeness of our tableau systems, later on.  


\subsection{Tableaux from bivaluations}\label{subsec:branchtableaux}

\noindent
We are almost ready for defining an appropriate classic-like deductive characterization of the bivalent semantics of the logic at hand.

\begin{definition}\label{tabstatements}
The set $\mathcal{B}_\mathcal{T}(\mathcal{L},\overline{\theta})$ of \textsl{classic-like tableau statements} associated to $\mathcal{L}$, fixed a separating sequence $\overline{\theta}=\langle\theta_r\rangle_{r=0}^s$, is formed by all instances of the following {$\textsl{U\/}$-statements}, {$\textsl{B}$-statements} and {$\textsl{G}$-statements}:
\begin{itemize}
\item $\textsl{U\/}\overline{Y}$, for each minimal unobtainable partial binary print $\overline{Y}$;  
\item $\textsl{B\/}_X^{\theta_r{\odot}}$, for each $X\in\{F,T\}$, $0\leq r\leq s$ and ${\odot}\in\Sigma$; and
\item $\textsl{G}(\iota(\theta_{r_1}{\odot}_1,\theta_{r_2}{\odot}_2))$, for each intersection formula $\iota(\theta_{r_1}{\odot}_1,\theta_{r_2}{\odot}_2)$ with $0\leq r_1,r_2\leq s$ and ${\odot}_1,{\odot}_2\in\Sigma$.
\end{itemize}
\end{definition}

In the following, recall that $\mathsf{Sem}_2$ is the bivalent semantics of the $n$-valued logic $\mathcal{L}$ produced by Def.~\ref{dyadic} and used in Prop.~\ref{Sred}.

\begin{proposition}\label{ok}
$\mathsf{Sem}_2$ is the set of all bivaluations that satisfy $\mathcal{B}_\mathcal{T}(\mathcal{L},\overline{\theta})$.
\end{proposition}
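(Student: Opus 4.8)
The plan is to reduce everything to the already-established Proposition~\ref{oksem}, which identifies $\mathsf{Sem}_2$ with the set of bivaluations satisfying the collection $\mathcal{B}(\mathcal{L},\overline{\theta})$. Since $\mathcal{B}_\mathcal{T}(\mathcal{L},\overline{\theta})$ and $\mathcal{B}(\mathcal{L},\overline{\theta})$ share exactly the same $\textsl{B}$-statements, it suffices to show that a bivaluation $b$ satisfies $\mathcal{B}_\mathcal{T}(\mathcal{L},\overline{\theta})$ if and only if it satisfies $\mathcal{B}(\mathcal{L},\overline{\theta})$; the equality $\mathsf{Sem}_2=\{b:b\text{ satisfies }\mathcal{B}_\mathcal{T}(\mathcal{L},\overline{\theta})\}$ then follows at once. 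There are only two discrepancies to account for: $\mathcal{B}$ carries the $\textsl{U}$-statements for \emph{total} unobtainable binary prints whereas $\mathcal{B}_\mathcal{T}$ carries those for \emph{minimal partial} unobtainable binary prints, and $\mathcal{B}_\mathcal{T}$ additionally includes the $\textsl{G}$-statements for intersection formulas. I would treat these two discrepancies separately.

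For the $\textsl{U}$-statements, I would first unfold what it means for a bivaluation to satisfy such a statement: $b$ satisfies all the total $\textsl{U}\overline{X}$ statements precisely when, for every $\varphi\in\mathcal{S}$, the associated total binary print $\langle b(\varphi),b(\theta_1(\varphi)),\dots,b(\theta_s(\varphi))\rangle$ is obtainable; likewise $b$ satisfies all the minimal partial $\textsl{U}\overline{Y}$ statements precisely when these same binary prints avoid every minimal unobtainable partial print. Lemma~\ref{okpartial}, applied to each $\varphi$ individually, says exactly that these two conditions are equivalent, so the total and the minimal-partial $\textsl{U}$-statements cut out the same set of bivaluations. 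This is the step carrying the real content, and I expect it to be the main (though mild) obstacle: the care required is purely in correctly reading the universal quantification over $\varphi$ implicit in satisfying a statement schema, and in lifting the per-formula Lemma~\ref{okpartial} to the global claim.

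For the $\textsl{G}$-statements, I would argue that they impose no additional constraint on the bivaluations already pinned down by the $\textsl{B}$- and $\textsl{U}$-statements. In the direction from $\mathcal{B}_\mathcal{T}$ to $\mathcal{B}$ this is immediate, since discarding statements can only enlarge the satisfier set: a bivaluation satisfying $\mathcal{B}_\mathcal{T}$ satisfies the $\textsl{B}$-statements and, by the previous paragraph, the total $\textsl{U}$-statements, hence satisfies $\mathcal{B}$ and so lies in $\mathsf{Sem}_2$ by Proposition~\ref{oksem}. In the converse direction I would take $b$ satisfying $\mathcal{B}$; by Proposition~\ref{oksem} we have $b=b_\val$ for some $\val\in\mathsf{Sem}$, and Lemma~\ref{okground} guarantees that $\val$, and therefore $b_\val$, satisfies $\textsl{G}(\varphi)$ for every ground formula $\varphi$, in particular for each intersection formula $\iota(\theta_{r_1}{\odot}_1,\theta_{r_2}{\odot}_2)$. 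Combining this with the $\textsl{U}$-equivalence established above, $b$ satisfies every class of statements in $\mathcal{B}_\mathcal{T}$. The two inclusions together yield $\mathsf{Sem}_2=\{b:b\text{ satisfies }\mathcal{B}_\mathcal{T}(\mathcal{L},\overline{\theta})\}$, completing the proof.
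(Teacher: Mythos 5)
Your proposal is correct and follows essentially the same route as the paper, whose proof simply declares the result ``immediate from Prop.~\ref{oksem}, and Lemmas~\ref{okpartial} and \ref{okground}'' --- exactly the three ingredients you combine, with the $\textsl{U}$-statement equivalence handled by Lemma~\ref{okpartial} and the $\textsl{G}$-statements discharged by Lemma~\ref{okground} (intersection formulas being ground by Lemma~\ref{lemma:intersect}). Your write-up merely makes explicit the per-formula quantification and the two inclusions that the paper leaves implicit.
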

\begin{proof}
Immediate from Prop.~\ref{oksem}, and Lemmas~\ref{okpartial} and \ref{okground}.\qed
\end{proof}

To formulate our classic-like tableau systems for separable finite-valued logics we need one final ingredient, captured by the statement below:
\begin{equation}
(\F\varphi {\;\;\&\;\;} \T\varphi) {\quad\Longrightarrow\quad} \divideontimes. \tag{$\textsl{ABS\/}$}\label{eq:absurd}
\end{equation}
\noindent 
Notice indeed that the left-hand side of such $\textsl{ABS\/}$-statement is satisfied by no bivaluation, given the functional character of valuations in general.

We are now ready to define our classically-labeled
tableau system for $\mathcal{L}$. As customary, we will represent (branching) \textsl{tableau rules} by
$$\frac{H_1,\dots,H_n}{C_{1,1},\dots,C_{1,n_1}\mid{\;\;\dots\;\;}\mid C_{k,1},\dots,C_{k,n_k}}$$  
\noindent 
where $H_1,\dots,H_n$ are the \textsl{premises} and $C_{1,1},\dots,C_{1,n_1}\mid\dots\mid C_{k,1},\dots,C_{k,n_k}$ is the \textsl{conclusion} of the rule, where each list $C_{i,1},\dots,C_{i,n_i}$, for $1\leq i\leq k$, represents a \textsl{branch}. In our setting, as all the $H$s and $C$s are classically-labeled formulas, we will denote such a rule by $$\mathcal{R}((H_1\&\dots\& H_n)\Longrightarrow
(C_{1,1}\&\dots\& C_{1,n_1})\mid\mid{\;\;\dots\;\;}\mid\mid (C_{k,1}\&\dots\& C_{k,n_k})
).$$

\noindent 
Note that this notation univocally associates a rule to each bivalent statement whose left-hand side is a conjunction and whose right-hand side is in disjunctive normal form (recall Remarks~\ref{DNF} and \ref{simplification}).

\begin{definition}\label{tableau}
The \textsl{classic-like tableau system ${\mathcal{T}}(\mathcal{L},\overline{\theta})$ associated to $\mathcal{L}$ (and~$\overline{\theta}$)} is composed of the rules $\mathcal{R}(S)$ for $S\in\mathcal{B}_\mathcal{T}(\mathcal{L},\overline{\theta})$, 
plus $\mathcal{R}(\textsl{ABS\/})$.
\end{definition}

We will call \textsl{closure rules} all those rules whose conclusion contain the single branch~$\divideontimes$, and nothing else. Clearly, this includes $\mathcal{R}(\textsl{ABS\/})$, as well as all the rules $\mathcal{R}(\textsl{U\/}\overline{Y})$, for minimally unobtainable~$\overline{Y}$, and also the rules for intersection formulas $\textsl{G}(\iota(\theta_{r_1}{\odot}_1,\theta_{r_2}{\odot}_2))$. Note that a closure rule may also result from $\mathcal{R}(\textsl{B\/}_X^{\theta_r{\odot}})$, with $X\in\{F,T\}$, $0\leq r\leq s$ and ${\odot}\in\Sigma_k$, in case the signed formula $\X{}\theta_r({\odot}(\varphi_1,\dots,\varphi_k))$ is unsatisfiable, that is, when $R_X^{\theta_r{\odot}}=\varnothing$ (see Remark~\ref{complementaryBHV}). 

\begin{example}[Classic-like tableau systems for \L$_3$ and for G$^+_4$]
\label{systems-and-tableaux}
Let's go back to \L$_3$, in the streamlined form given in Ex.~\ref{lukas3} and Table~\ref{stream:L3}, and using the separating sequence $\overline{\theta}=\langle \mathsf{id},\theta\rangle$ with $\theta=\lambda p.(\neg p\supset p)$.  In that case Def.~\ref{tableau} outputs the tableau system ${\mathcal{T}}(\textrm{\L$_3$},\overline{\theta})$ consisting of the rules below.

\begin{center}
\scalebox{1.1}{
\begin{tabular}[htbp]{c}
$\textrm{\tiny $\mathcal{R}(\textsl{ABS\/})$}
\frac{{\F\varphi},{\T\varphi}}{\divideontimes}\quad\quad\quad\quad
\textrm{\tiny $\mathcal{R}(\textsl{U\/}(T,F))$}\frac{{\T\varphi},{\F\theta(\varphi)}}{\divideontimes}$\\[2.5mm]
\end{tabular}
}

\scalebox{1.1}{
\begin{tabular}[htbp]{c}
$\textrm{\tiny $\mathcal{R}(\textsl{B}_F^{\,\neg})$}
\frac{{\F\neg\varphi}}{{\T\theta(\varphi)}}\quad\quad
\textrm{\tiny $\mathcal{R}(\textsl{B}_T^{\,\neg})$}
\frac{{\T\neg\varphi}}{{\F\theta(\varphi)}}\quad\quad
\textrm{\tiny $\mathcal{R}(\textsl{B}_F^{\,\theta\neg})$}
\frac{{\F\theta(\neg\varphi)}}{{\T\varphi}}\quad\quad
\textrm{\tiny $\mathcal{R}(\textsl{B}_T^{\,\theta\neg})$}
\frac{{\T\theta(\neg\varphi)}}{{\F\varphi}}$\\[2.5mm]
\end{tabular}
}

\scalebox{1.1}{
\begin{tabular}[htbp]{c}
$\textrm{\tiny $\mathcal{R}(\textsl{B}_F^{\,\supset})$}
\frac{{\F\varphi\supset\psi}}{{\T\varphi},{\F\psi}\mid {\T\theta(\varphi)},{\F\theta(\psi)}}\quad\quad\quad
\textrm{\tiny $\mathcal{R}(\textsl{B}_T^{\,\supset})$}
\frac{{\T\varphi\supset\psi}}{{\F\varphi},{\T\theta(\psi)}\mid{\F\theta(\varphi)}\mid {\T\psi}}$\\[2.5mm]
\end{tabular}
}

\scalebox{1.1}{
\begin{tabular}[htbp]{c}
$
\textrm{\tiny $\mathcal{R}(\textsl{B}_F^{\,\lor})$}
\frac{{\F\varphi\lor\psi}}{{\F\varphi},{\F\psi}}\quad\quad
\textrm{\tiny $\mathcal{R}(\textsl{B}_T^{\,\lor})$}
\frac{{\T\varphi\lor\psi}}{{\T\varphi}\mid {\T\psi}}
\quad\quad
\textrm{\tiny $\mathcal{R}(\textsl{B}_F^{\,\land})$}
\frac{{\F\varphi\land\psi}}{{\F\varphi}\mid{\F\psi}}\quad\quad
\textrm{\tiny $\mathcal{R}(\textsl{B}_T^{\,\land})$}
\frac{{\T\varphi\land\psi}}{{\T\varphi},{\T\psi}}
$\\[2.5mm]
\end{tabular}
}

\scalebox{1.1}{
\begin{tabular}[htbp]{c}
$
\textrm{\tiny $\mathcal{R}(\textsl{B}_F^{\,\theta\supset})$}
\frac{{\F\theta(\varphi\supset\psi)}}{{\T\varphi},{\F\theta(\psi)}}
\quad\quad
\textrm{\tiny $\mathcal{R}(\textsl{B}_F^{\,\theta\lor})$}
\frac{{\F\theta(\varphi\lor\psi)}}{{\F\theta(\varphi)},{\F\theta(\psi)}}
\quad\quad
\textrm{\tiny $\mathcal{R}(\textsl{B}_F^{\,\theta\land})$}
\frac{{\F\theta(\varphi\land\psi)}}{{\F\theta(\varphi)}\mid{\F\theta(\psi)}}
$
\\[2.5mm]
\end{tabular}
}

\scalebox{1.1}{
\begin{tabular}[htbp]{c}
$
\textrm{\tiny $\mathcal{R}(\textsl{B}_T^{\,\theta\supset})$}
\frac{{\T\theta(\varphi\supset\psi)}}{{\F\varphi}\mid {\T\theta(\psi)}}
\quad\quad
\textrm{\tiny $\mathcal{R}(\textsl{B}_T^{\,\theta\lor})$}
\frac{{\T\theta(\varphi\lor\psi)}}{{\T\theta(\varphi)}\mid {\T\theta(\psi)}}
\quad\quad
\textrm{\tiny $\mathcal{R}(\textsl{B}_T^{\,\theta\land})$} 
\frac{{\T\theta(\varphi\land\psi)}}{{\T\theta(\varphi)},{\T\theta(\psi)}}
$  
\end{tabular}}
\end{center}
Let us now return to G$^+_4$, from Ex.~\ref{godel} and~\ref{ex:partialunobtainable}, where we added to G$_4$ the sentential constants~$a_1$ and~$a_2$, and employed the separating sequence $\overline{\theta}=\langle \mathsf{id},\theta_1,\theta_2\rangle$, with $\theta_1=\lambda p.(a_1\equiv p)$ and $\theta_2=\lambda p.(p\equiv a_2)$. According to Def.~\ref{tableau}, the corresponding classic-like (streamlined) tableau system ${\mathcal{T}}(\textrm{G$^+_4$},\overline{\theta})$ is composed of the rules below.
\begin{center}
\scalebox{1.1}{
\begin{tabular}[htbp]{c}
$\textrm{\tiny $\mathcal{R}(\textsl{ABS})$}
\frac{{\F\varphi},{\T\varphi}}{\divideontimes}\quad\quad
\textrm{\tiny $\mathcal{R}(\textsl{U\/}\langle T,T,\uparrow\rangle)$}
\frac{{\T\varphi},{\T\theta_1(\varphi)}}{\divideontimes}\quad\quad
\textrm{\tiny $\mathcal{R}(\textsl{U\/}\langle T,\uparrow,T\rangle)$}
\frac{{\T\varphi},{\T\theta_2(\varphi)}}{\divideontimes}\quad\quad$\\[2.5mm]
\end{tabular}
}

\scalebox{1.1}{
\begin{tabular}[htbp]{c}
$\textrm{\tiny $\mathcal{R}(\textsl{U\/}\langle\uparrow,T,T\rangle)$}
\frac{{\T\theta_1(\varphi)},{\T\theta_2(\varphi)}}{\divideontimes}\quad\quad
\textrm{\tiny $\mathcal{R}(\textsl{B}^{\theta_1 a_1}_F)$}
\frac{{\F\theta_1(a_1)}}{\divideontimes}\quad\quad
\textrm{\tiny $\mathcal{R}(\textsl{B}^{\theta_2 a_2}_F)$}
\frac{{\F\theta_2(a_2)}}{\divideontimes}\quad\quad$\\[2.5mm]
\end{tabular}
}

\scalebox{1.1}{
\begin{tabular}[htbp]{c}
$\textrm{\tiny $\mathcal{R}(\textsl{B}^{a_1}_T)$}
\frac{{\T a_1}}{\divideontimes}\quad\quad
\textrm{\tiny $\mathcal{R}(\textsl{B}^{a_2}_T)$}
\frac{{\T a_2}}{\divideontimes}\quad\quad
\textrm{\tiny $\mathcal{R}(\textsl{B}^{\theta_1\neg}_T)$}
\frac{{\T\theta_1(\neg\varphi)}}{\divideontimes}\quad\quad
\textrm{\tiny $\mathcal{R}(\textsl{B}^{\theta_2\neg}_T)$}
\frac{{\T\theta_2(\neg\varphi)}}{\divideontimes}$\\[2.5mm]
\end{tabular}
}

\scalebox{1.1}{
\begin{tabular}[htbp]{c}
$\textrm{\tiny $\mathcal{R}(\textsl{G}(a_1\equiv a_2))$}
\frac{{\T a_1\equiv a_2}}{\divideontimes}\quad\quad
\textrm{\tiny $\mathcal{R}(\textsl{B}^{\land}_F)$}
\frac{{\F\varphi\land\psi}}{{\F\varphi}\mid {\F\psi}}\quad\quad
\textrm{\tiny $\mathcal{R}(\textsl{B}^{\lor}_F)$}
\frac{{\F\varphi\lor\psi}}{{\F\varphi},{\F\psi}}\quad\quad
\textrm{\tiny $\mathcal{R}(\textsl{B}^{\land}_T)$}
\frac{{\T\varphi\land\psi}}{{\T\varphi},{\T\psi}}$\\[2.5mm]
\end{tabular}
}

\scalebox{1.1}{
\begin{tabular}[htbp]{c}
$\textrm{\tiny $\mathcal{R}(\textsl{B}^{\lor}_T)$}
\frac{{\T\varphi\lor\psi}}{{\T\varphi}\mid{\T\psi}}\quad\quad
\textrm{\tiny $\mathcal{R}(\textsl{B}^{\neg}_F)$}
\frac{{\F\neg\varphi}}{{\T\varphi}\mid{\T\theta_1(\varphi)}\mid {\T\theta_2(\varphi)}}\quad\quad
\textrm{\tiny $\mathcal{R}(\textsl{B}^{\neg}_T)$}
\frac{{\T\neg\varphi}}{{\F\varphi},{\F\theta_1(\varphi)},{\F\theta_2(\varphi)}}$\\[2.5mm]
\end{tabular}
}

\scalebox{1.1}{
\begin{tabular}[htbp]{c}
$\textrm{\tiny $\mathcal{R}(\textsl{B}^{\supset}_F)$}
\frac{{\F\varphi\supset\psi}}{{\T\varphi},{\F\psi}\mid{\T\theta_2(\varphi)},{\F\psi},{\F\theta_2(\psi)}\mid {\T\theta_1(\varphi)},{\F\psi},{\F\theta_1(\psi)},{\F\theta_2(\psi)}}$\\[2.5mm]
\end{tabular}
}

\scalebox{1.1}{
\begin{tabular}[htbp]{c}
$\textrm{\tiny $\mathcal{R}(\textsl{B}^{\supset}_T)$}
\frac{{\T\varphi\supset\psi}}{{\T\psi}\mid{\F\varphi},{\T\theta_2(\psi)}\mid {\F\varphi},{\F\theta_1(\varphi)},{\F\theta_2(\varphi)}\mid{\T\theta_1(\varphi)},{\T\theta_1(\psi)}}$\\[2.5mm]
\end{tabular}
}

\scalebox{1.1}{
\begin{tabular}[htbp]{c}
$\textrm{\tiny $\mathcal{R}(\textsl{B}^{\theta_1\supset}_F)$}
\frac{{\F\theta_1(\varphi\supset\psi)}}{{\F\theta_1(\psi)}\mid {\F\varphi},{\F\theta_2(\varphi)}}\quad\quad
\textrm{\tiny $\mathcal{R}(\textsl{B}^{\theta_2\supset}_F)$}
\frac{{\F\theta_2(\varphi\supset\psi)}}{{\F\varphi}\mid {\F\theta_2(\psi)}}\quad\quad
\textrm{\tiny $\mathcal{R}(\textsl{B}^{\theta_2\supset}_T)$}
\frac{{\T\theta_2(\varphi\supset\psi)}}{{\T\varphi},{\T\theta_2(\psi)}}$\\[2.5mm]
\end{tabular}
}

\scalebox{1.1}{
\begin{tabular}[htbp]{c}
$\textrm{\tiny $\mathcal{R}(\textsl{B}^{\theta_1\supset}_T)$}
\frac{{\T\theta_1(\varphi\supset\psi)}}{{\T\varphi},{\T\theta_1(\psi)}\mid {\T\theta_2(\varphi)},{\T\theta_1(\psi)}}\quad\quad
\textrm{\tiny $\mathcal{R}(\textsl{B}^{\theta_2\lor}_F)$}
\frac{{\F\theta_2(\varphi\lor\psi)}}{{\F\theta_2(\varphi)},{\F\theta_2(\psi)}\mid{\T\varphi},{\T\psi}}$\\[2.5mm]
\end{tabular}
}

\scalebox{1.1}{
\begin{tabular}[htbp]{c}
$\textrm{\tiny $\mathcal{R}(\textsl{B}^{\theta_1\lor}_F)$}
\frac{{\F\theta_1(\varphi\lor\psi)}}{{\F\theta_1(\varphi)},{\F\theta_1(\psi)}\mid{\T\varphi},{\T\theta_2(\varphi)}\mid {\T\psi},{\T\theta_2(\psi)}}$\\[2.5mm]
\end{tabular}
}

\scalebox{1.1}{
\begin{tabular}[htbp]{c}
$\textrm{\tiny $\mathcal{R}(\textsl{B}^{\theta_1\land}_F)$}
\frac{{\F\theta_1(\varphi\land\psi)}}{{\F\theta_1(\varphi)},{\F\theta_1(\psi)}\mid{\F\varphi},{\F\theta_1(\varphi)},{\F\theta_2(\varphi)}\mid {\F\psi},{\F\theta_1(\psi)},{\F\theta_2(\psi)}}$\\[2.5mm]
\end{tabular}
}

\scalebox{1.1}{
\begin{tabular}[htbp]{c}
$\textrm{\tiny $\mathcal{R}(\textsl{B}^{\theta_2\land}_F)$}
\frac{{\F\theta_2(\varphi\land\psi)}}{{\F\theta_2(\varphi)},{\F\theta_2(\psi)}\mid{\F\varphi},{\F\theta_2(\varphi)}\mid {\F\psi},{\F\theta_2(\psi)}}$\\[2.5mm]
\end{tabular}
}

\scalebox{1.1}{
\begin{tabular}[htbp]{c}
$\textrm{\tiny $\mathcal{R}(\textsl{B}^{\theta_1\land}_T)$}
\frac{{\T\theta_1(\varphi\land\psi)}}{{\T\varphi},{\T\theta_1(\psi)}\mid{\T\theta_1(\varphi)},{\T\psi}\mid {\T\theta_1(\varphi)},{\T\theta_1(\psi)}\mid {\T\theta_1(\varphi)},{\T\theta_2(\psi)}\mid {\T\theta_2(\varphi)},{\T\theta_1(\psi)}}$\\[2.5mm]
\end{tabular}
}

\scalebox{1.1}{
\begin{tabular}[htbp]{c}
$\textrm{\tiny $\mathcal{R}(\textsl{B}^{\theta_1\lor}_T)$}
\frac{{\T\theta_1(\varphi\lor\psi)}}{{\F\varphi},{\F\theta_2(\varphi)},{\T\theta_1(\psi)}\mid{\T\theta_1(\varphi)}, {\F\psi},{\F\theta_2(\psi)}}$\\[2.5mm]
\end{tabular}
}

\scalebox{1.1}{
\begin{tabular}[htbp]{c}
$\!\!\textrm{\tiny $\mathcal{R}(\textsl{B}^{\theta_2\land}_T)$}
\frac{{\T\theta_2(\varphi\land\psi)}}{{\T\varphi},{\T\theta_2(\psi)}\mid{\T\theta_2(\varphi)},{\T\psi}\mid {\T\theta_2(\varphi)},{\T\theta_2(\psi)}}\;\;
\textrm{\tiny $\mathcal{R}(\textsl{B}^{\theta_2\lor}_T)$}
\frac{{\T\theta_2(\varphi\lor\psi)}}{{\F\varphi},{\T\theta_2(\psi)}\mid{\T\theta_2(\varphi)},{\F\psi}}$\\[2.5mm]
\end{tabular}}
\end{center}
In accordance with our streamlining procedure, described in Remark~\ref{simplification}, we have chosen to omit above all tautological rules, namely those corresponding to $\textsl{B}^{a_1}_F$, $\textsl{B}^{a_2}_F$, $\textsl{B}^{\theta_1 a_1}_T$, $\textsl{B}^{\theta_2 a_2}_T$, $\textsl{B}^{\theta_1 a_2}_F$, $\textsl{B}^{\theta_2 a_1}_F$, $\textsl{B}^{\theta_1 \neg}_F$ and $\textsl{B}^{\theta_2 \neg}_F$. Also missing are the rules corresponding to $\textsl{B}^{\theta_1 a_2}_T$ and $\textsl{B}^{\theta_2 a_1}_T$ as they in fact coincide with $\mathcal{R}(\textsl{G}(a_1\equiv a_2))$.
\end{example}

It is worth noting that for classical logic (no separator formulas needed besides identity) our procedure will output essentially Smullyan's analytic tableaux (cf.~\cite{smu:FOL}).

Tableaux develop as usual, by applying rules and building trees that start from some root consisting of a given set of classically-labeled formulas. In practical terms, if the premises in the head of a rule~$\mathcal{R}$ are jointly matched by formulas in a certain branch of the tableau, then, through the \textsl{application} of~$\mathcal{R}$ the tableau is extended by ramifying that very branch into as many branches as those in the conclusion of~$\mathcal{R}$, each such branch comprising the labeled formulas in the original branch plus the suitably instantiated formulas from the corresponding branch in the conclusion of~$\mathcal{R}$.
A branch is said to be \textsl{closed} if it contains~$\divideontimes$, and a \textsl{closed tableau} is one whose branches are all closed.  If a branch is not closed it is called \textsl{open}; analogously, an \textsl{open tableau} is a tableau that has some open branch.
As usual, a branch of a tableau is said to be \textsl{exhausted} if 
all applicable rules have already been applied to it. An \textsl{exhausted tableau} is one whose branches are all exhausted.

From the general definition of our tableau systems, it is easy to check the following result with respect to the initially given truth-functional semantics.

\begin{proposition}[Soundness]\label{sound}
If an $n$-valued valuation in $\mathsf{Sem}$ satisfies some initial root set of classically-labeled
formulas, then it satisfies all the formulas in some open branch
of any tableau that develops from that root set.
\end{proposition}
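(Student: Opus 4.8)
The plan is to run the standard soundness induction on the construction of the tableau, using Proposition~\ref{ok} to transfer satisfaction from the premises of a rule to one disjunct of its conclusion. First I would fix a valuation $\val\in\mathsf{Sem}$ that satisfies every labeled formula in the given root set. Since $b_\val\in\mathsf{Sem}_2$ by Definition~\ref{dyadic}, Proposition~\ref{ok} guarantees that $b_\val$ satisfies every statement in $\mathcal{B}_\mathcal{T}(\mathcal{L},\overline{\theta})$; moreover $b_\val$ trivially satisfies the $\textsl{ABS\/}$ statement, because a bivaluation cannot assign both $F$ and $T$ to the same formula. Hence, for every rule $\mathcal{R}(S)$ of the system, $b_\val$ satisfies the underlying statement $S$, whose left-hand side is a conjunction of labeled formulas and whose right-hand side is in disjunctive normal form (Remarks~\ref{DNF} and~\ref{simplification}).

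The key preliminary observation I would record is that $\val$ can never satisfy $\divideontimes$; consequently any branch all of whose labeled formulas are satisfied by $\val$ is automatically open. This reduces the proposition to maintaining, throughout the construction, the invariant that there exists a branch all of whose labeled formulas are satisfied by $\val$. The base case is immediate: the initial tableau consists of the single branch given by the root set, which $\val$ satisfies by hypothesis.

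For the inductive step, suppose the tableau built so far contains a branch $\beta$ satisfied by $\val$, and consider the next application of a rule $\mathcal{R}(S)$ to a branch $\gamma$. If $\gamma\neq\beta$, then $\beta$ survives unchanged and the invariant persists. If $\gamma=\beta$, then the premises of $\mathcal{R}(S)$ are matched by labeled formulas occurring in $\beta$, so $b_\val$ satisfies the left-hand side of the relevant instance of $S$; since $b_\val$ satisfies the implication $S$, it must satisfy its right-hand side, i.e.\ at least one disjunct of the disjunctive normal form. Choosing the conclusion-branch corresponding to that disjunct, all its newly added labeled formulas are satisfied by $\val$, so the extended branch $\beta'\supseteq\beta$ is again satisfied by $\val$. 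In particular a closure rule (whose right-hand side is $\divideontimes$) can never be applied to a branch satisfied by $\val$, since that would force $b_\val$ to satisfy $\divideontimes$. The successively chosen satisfied branches thus form an increasing chain, so for a possibly infinite tableau their union is a branch of the limit tableau satisfied by $\val$, which covers that case as well.

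The argument I expect to require the most care is precisely this inductive step: I must make sure that the matching of the schematic premises $H_1,\dots,H_n$ against formulas in $\beta$ corresponds to an honest instance of the statement $S$ satisfied by $b_\val$, and that the disjunct selected on the right-hand side yields exactly one of the conclusion-branches produced by the rule application. Everything else — the base case, the persistence of an untouched branch, and the automatic openness of a satisfied branch — is routine. By this induction the invariant holds for every tableau developing from the root set, and since the surviving satisfied branch is open, the proposition follows.
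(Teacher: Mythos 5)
Your proof is correct and follows essentially the same route as the paper's own argument: both fix $\val\in\mathsf{Sem}$, invoke Prop.~\ref{ok} (together with satisfaction of $\textsl{ABS\/}$) to show that whenever $\val$ satisfies the premises of a rule it satisfies one conclusion-branch, and conclude by maintaining a $\val$-satisfied branch through the tableau construction while observing that closure rules can never fire on such a branch, so it stays open. Your write-up is simply a more explicit rendering of the paper's compressed induction (including the harmless extra care about the infinite-tableau limit), so nothing further is needed.
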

\begin{proof}
We already know from Prop.~\ref{oksem}, \ref{Sred} and~\ref{ok} that a valuation
$\val:\mathbb{S}\longrightarrow\mathbb{V}\in\mathsf{Sem}$ satisfies all the characterizing tableau-like bivalent statements associated to $\mathcal{L}$. As it is obvious that $\val$ also satisfies all instances of $\textsl{ABS\/}$, this means that if $\val$ satisfies the premises of a tableau rule then $\val$ must also satisfy one of its branches. In particular, $\val$ cannot satisfy the premises of any of the closure rules.

If $\val$ satisfies an initial root set, then, by definition of tableau and repeated application of $\mathcal{R}(\textsl{B\/}_X^{\theta_r{\odot}})$ rules (or equivalent simplified versions thereof), the result is immediate, as long as we show that the branch satisfied by $\val$ can never be closed. 
Indeed, as $\val$ cannot satisfy the premises of any closure rule, this means that no closure rule can be applied and the branch is always open.
\qed
\end{proof}

According to the latter result, if one is able to produce a closed tableau from a given root set of labeled formulas, then the root set is unsatisfiable.

\begin{example}[A closed tableau]
\label{closed-tableau}
For illustration, let us consider a well-known theorem of \L$_3$: $((p\supset\neg p)\supset p)\supset p$. A closed tableau for an attempt at falsifying this formula is depicted in Fig.~\ref{extabL3}.
In this tableau \ding{172} denotes $\mathcal{R}(\textsl{B}^{\supset}_F)$, \ding{173} denotes $\mathcal{R}(\textsl{B}^{\supset}_T)$, \ding{174} denotes $\mathcal{R}(\textsl{ABS\/})$, \ding{175} denotes $\mathcal{R}(\textsl{B}^{\theta\neg}_F)$, \ding{176} denotes $\mathcal{R}(\textsl{B}^{\theta\supset}_F)$, \ding{177} denotes $\mathcal{R}(\textsl{B}^{\theta\supset}_T)$, and \ding{178} denotes $\mathcal{R}(\textsl{U\/}\langle T,F\rangle)$.  
Note that the tableau is not exhausted: there are indeed (closed) branches containing nonbasic formulas to which \ding{175} could still have been applied.
\end{example}
\begin{figure}[htbp]\scriptsize
\begin{center}
\scalebox{.8}{
\begin{tikzpicture}[sibling distance=1.14em]
	\hspace{-5mm}
	\tikzset{every tree node/.style={align=center,anchor=north}}
	\tikzset{level 1/.style={level distance=40pt}}
	\tikzset{level 2/.style={level distance=40pt}}
	\tikzset{level 3/.style={level distance=40pt}}
	\tikzset{level 4/.style={level distance=40pt}}
	\tikzset{level 5/.style={level distance=40pt}}
	 \Tree[.{$\Fs ((p\supset\neg p)\supset p)\supset p$}
	        \edge node[auto=left]{\ding{172}};
	        [.{$\Ts (p\supset\neg p)\supset p$ \\[1mm] $\Fs p$}
	          \edge node[auto=left]{\ding{173}};
	          [.{$\Fs p\supset\neg p$ \\[1mm] $\Ts \theta(p)$}
	            \edge node[auto=left]{\ding{172}};
	            [.{$\Ts p$ \\[1mm] $\Fs \neg p$}
	              \edge node[auto=left]{\ding{174}};
	              [.{$\divideontimes$}
	              ]
	            ]
	            [.{$\Fs \theta(\neg p)$}
	              \edge node[auto=left]{\ding{175}};
	              [.{$\Ts p$}
	                \edge node[auto=left]{\ding{174}};
	                [.{$\divideontimes$}
	                ]
	              ]
	            ]
	          ]
	          [.{$\Fs \theta(p\supset\neg p)$}
	            \edge node[auto=left]{\ding{176}};
	            [.{$\Ts p$ \\[1mm] $\Fs \theta(\neg p)$}
	              \edge node[auto=left]{\ding{174}};
	              [.{$\divideontimes$}
	              ]
	            ]
	          ]
	          [.{$\Ts p$}
	            \edge node[auto=left]{\ding{174}};
	            [.{$\divideontimes$}
	            ]
	          ]
	        ]
	        [.{$\Ts \theta((p\supset\neg p)\supset p)$ \\[1mm] $\Fs \theta(p)$}
	          \edge node[auto=left]{\ding{177}};
	          [.{$\Fs p\supset\neg p$}
	            \edge node[auto=left]{\ding{172}};
	            [.{$\Ts p$ \\[1mm] $\Fs \neg p$}
	              \edge node[auto=left]{\ding{178}};
	              [.{$\divideontimes$}
	              ]
	            ]
	            [.{$\Ts \theta(p)$ \\[1mm] $\Fs \theta(\neg p)$}
	              \edge node[auto=left]{\ding{174}};
	              [.{$\divideontimes$}
	              ]
	            ]
	          ]
	          [.{$\Ts \theta(p)$}
	            \edge node[auto=left]{\ding{174}};
	            [.{$\divideontimes$}
	            ]
	          ]
	        ]
	      ] %
	\end{tikzpicture}}
\caption{A closed tableau for $\F((p\supset\neg p)\supset p)\supset p$ in ${\mathcal{T}}(\textrm{\L$_3$},\overline{\theta})$.}\label{extabL3}	
\end{center}
\end{figure}

Given some particular tableau branch, say that a binary print $\langle X_r\rangle_{r=0}^s$ \textsl{agrees with the information available in the branch} if $\XX{r}{c}\theta_r(p)$ does not occur in the branch, for each $0\leq r\leq s$.
Note that this means that either $\X{r}\theta_r(p)$ is in the branch, or else neither $\T\theta_r(p)$ nor $\F\theta_r(p)$ occur in the branch.
The following result may then be proven.

\begin{proposition}[Completeness]\label{complete}
From every open branch of an exhausted tableau one may extract a valuation in $\mathsf{Sem}$ satisfying its root set.
\end{proposition}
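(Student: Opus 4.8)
The plan is to read off, from a given open and exhausted branch $\mathcal{B}$, an $n$-valued assignment $e$ whose truth-functional extension $w=w^e$ already lies in $\mathsf{Sem}$ and satisfies every labeled formula occurring in $\mathcal{B}$, hence in particular the root set. Since $w$ will be a genuine homomorphism into $\mathbb{V}$, its membership in $\mathsf{Sem}$ is automatic and the induced bivaluation $b_w=t\circ w$ lies in $\mathsf{Sem}_2$; so the whole difficulty is to guarantee that $b_w$ \emph{agrees} with the labels present in $\mathcal{B}$.

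First I would fix $e$ on atomic variables. For each atomic $p$, let $\overline{Y}_p$ be the partial binary print recorded by $\mathcal{B}$: set $\dom(\overline{Y}_p)=\{r:\text{some }X{:}\theta_r(p)\text{ occurs in }\mathcal{B}\}$ and $(\overline{Y}_p)_r=X$ for that label $X$, which is unique because $\mathcal{B}$ is open (so $\mathcal{R}(\textsl{ABS\/})$ has not closed it). Since $\mathcal{B}$ is open and exhausted, $\overline{Y}_p$ cannot be unobtainable: otherwise it would extend a minimal unobtainable partial binary print $\overline{Y}$, the premises of $\mathcal{R}(\textsl{U\/}\overline{Y})$ would be matched, and exhaustion would force closure — this being exactly the role of the minimal $\textsl{U}$-statements by Lemma~\ref{okpartial}. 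Hence $\overline{Y}_p$ extends to an obtainable total print $\overline{\theta}(x_p)$ for some $x_p\in\mathcal{V}_n$, and I set $e(p)=x_p$. On sentential constants the value is forced, $w(a)=\widehat{a}$, and $w=w^e$ is the free extension, so $w\in\mathsf{Sem}$.

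Next I would prove, by strong induction on $\mathsf{cplx}(\varphi)$, that whenever $X{:}\varphi$ occurs in $\mathcal{B}$ one has $b_w(\varphi)=X$. For simple formulas: if $\varphi=\theta_r(p)$ with $p$ atomic, then $r\in\dom(\overline{Y}_p)$ and by construction the $r$-th component of $\overline{\theta}(x_p)$ is $X$, so $b_w(\theta_r(p))=t(\widehat{\theta_r}(x_p))=X$; if $\varphi$ is a ground basic formula $\theta_r(a)$ or an intersection formula $\iota(\theta_{r_1}{\odot}_1,\theta_{r_2}{\odot}_2)$, its value $X^\ast=t(\widehat{\varphi})$ is fixed, and the complementary closure rule $\mathcal{R}(\textsl{B}_{(X^\ast)^c}^{\theta_r a})$, respectively $\mathcal{R}(\textsl{G}(\varphi))$, ensures (since $\mathcal{B}$ is open and exhausted) that $(X^\ast)^c{:}\varphi$ is absent, forcing $X=X^\ast=b_w(\varphi)$. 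For the inductive step, let $\varphi=\theta_r({\odot}(\varphi_1,\dots,\varphi_k))$ be an analyzable proper $\theta_r{\odot}$-formula taken at its \emph{most concrete} fit, so that its generalized immediate subformulas $\theta_t(\varphi_i)$ all have strictly smaller $\mathsf{cplx}$. Because $X{:}\varphi$ lies in an open branch we have $R_X^{\theta_r{\odot}}\neq\varnothing$, so $\mathcal{R}(\textsl{B}_X^{\theta_r{\odot}})$ is a genuine rule applicable to $\varphi$; by exhaustion it has been applied, and $\mathcal{B}$ continues exactly one of its conclusion branches, the one indexed by some $\overline{x}\in R_X^{\theta_r{\odot}}$, namely the conjunction $\textsl{V}(\varphi_1,\dots,\varphi_k\,;\overline{\theta}(x_1),\dots,\overline{\theta}(x_k))$. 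Thus $\mathcal{B}$ contains $(\overline{\theta}(x_i))_t{:}\theta_t(\varphi_i)$ for all $0\leq t\leq s$ and $1\leq i\leq k$, and the induction hypothesis gives $b_w(\theta_t(\varphi_i))=(\overline{\theta}(x_i))_t$ for all $t$. Hence the full $b_w$-print of each $\varphi_i$ equals $\overline{\theta}(x_i)$, and by separability (distinct values have distinct prints) this forces $w(\varphi_i)=x_i$. Consequently $b_w(\varphi)=t(\widehat{\theta_r}(\widehat{{\odot}}(x_1,\dots,x_k)))$, which equals $X$ precisely because $\overline{x}\in R_X^{\theta_r{\odot}}$. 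Applying this to the root set, a subset of $\mathcal{B}$, completes the argument.

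The main obstacle I expect is the interplay between exhaustion and the branching $\textsl{B}$-rules in the inductive step: one must argue that the single open branch carries an \emph{entire} conclusion branch of the applied rule, thereby recording the complete binary print of every immediate subformula, which is exactly what pins down $w(\varphi_i)$ via separability. Making this watertight justifies reasoning with the unsimplified $\textsl{B}$-statements of Def.~\ref{bivstatements} (the streamlining of Remark~\ref{simplification} being immaterial here), and requires the degenerate cases — empty $R_X^{\theta_r{\odot}}$, sentential constants, and the intersection formulas of Lemma~\ref{lemma:intersect} — to be absorbed into the base case through their closure rules. A secondary point is to ensure that the atomic prints chosen in the first step are at once obtainable and compatible with $\mathcal{B}$, which is precisely what the minimal unobtainable $\textsl{U}$-statements and Lemma~\ref{okpartial} secure.
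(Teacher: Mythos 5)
Your proposal is correct and takes essentially the same approach as the paper's proof: the same extraction of the atomic assignment from the open exhausted branch (well-definedness via $\mathcal{R}(\textsl{ABS\/})$, obtainability via the minimal $\textsl{U\/}$-rules), the same induction on $\mathsf{cplx}$ with the same base cases for atoms, sentential constants and intersection formulas, and the same use of exhaustion to place one full conclusion branch of $\mathcal{R}(\textsl{B\/}_X^{\theta_r{\odot}})$ inside the open branch. The only divergence is the closing inference of the inductive step, where you pin down $\val^\eval(\varphi_i)=x_i$ by injectivity of binary prints and then compute $b_{\val^\eval}(\varphi)$ truth-functionally, whereas the paper instead invokes the disjointness of the right-hand sides of $\textsl{B\/}_X^{\theta_r{\odot}}$ and $\textsl{B\/}_{X^c}^{\theta_r{\odot}}$ to conclude $\val^\eval(\varphi)\neq X^c$ --- an equivalent, equally valid finish.
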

\begin{proof}
Let us consider an open branch of an exhausted tableau. Clearly, by definition, none of the tableau closure rules is applicable, i.e., the branch does not contain the premises of any of the closure rules.

Take any
assignment $\eval:\mathcal{A}\rightarrow\mathcal{V}_n$ such that,
for every $p\in\mathcal{A}$, the binary print
$\overline{\theta}({\eval(p)})=\langle X_r\rangle_{r=0}^s$ agrees with the information
available in that branch.
Clearly, such an assignment always exists. Just consider the (possibly partial) binary print $\overline{X}_p=\langle X_{p,r} \rangle_{r=0}^s$ where $X_{p,r}=X_r$ if $\X{r}\theta_r(p)$ is in the branch, and $X_{p,r}={\uparrow}$ otherwise. This sequence is clearly well-defined, given that the rule $\mathcal{R}(\textsl{ABS\/})$ is not applicable. Moreover, $\overline{X}_p$ is obtainable, for none of the rules $\mathcal{R}(\textsl{U\/}\overline{Y})$, with minimally unobtainable~$\overline{Y}$, is applicable. Therefore, as $\overline{\theta}({\eval(p)})$ extends $\overline{X}_p$, we are done.

We will now show that the homomorphic extension $\val^\eval:\mathbb{S}\longrightarrow\mathbb{V}\in\mathsf{Sem}$ satisfies all the signed formulas in the branch, and consequently also the root set. The proof is somewhat similar to the proof of Prop.~\ref{oksem}, but using induction on the formula complexity instead of on the formula depth.

The base case is actually the most interesting. There are three subcases.
\begin{enumerate}
\item If $\X{}\theta_r(p)$ is in the branch for some $p\in\mathcal{A}$ and $0\leq r\leq s$ then $t(\val^\eval(\theta_r(p)))=X_{p,r}=X$, by the definition of~$e$. 
\item If $\X{}\theta_r(a)$ is in the branch for some $a\in\Sigma_0$ and $0\leq r\leq s$ then we just need to note that $\mathcal{R}(\textsl{B\/}_{Y^c}^{\theta_r a})$, where $Y=t(\val^\eval(a))$, is a closure rule. As the branch is exhausted yet open, we must have $Y=X$.
\item If $\X{}\iota(\theta_{r_1}{\odot}_1,\theta_{r_2}{\odot}_2)$ for some intersection formula then we just need to note that $\mathcal{R}(\textsl{G}(\iota(\theta_{r_1}{\odot}_1,\theta_{r_2}{\odot}_2)))$ is a closure rule. Again, as the branch is exhausted yet open, we have that $t(\val(\iota(\theta_{r_1}{\odot}_1,\theta_{r_2}{\odot}_2)))=X$.
\end{enumerate}
For the induction step, let $\X{}\theta_r{\odot}(\varphi_1,\dots,\varphi_k)$ be a proper $\theta_r{\odot}$-formula appearing in the branch, where $0\leq r\leq s$, and ${\odot}\in\Sigma_k$ for $k\neq 0$. As the branch is exhausted, all the formulas in one of the conclusions of $\mathcal{R}(\textsl{B\/}_{X}^{\theta_r {\odot}})$ are also in the branch. By definition of $\textsl{B\/}_{X}^{\theta_r {\odot}}$, all these formulas are of the form $\X{ti}\theta_t(\varphi_i)$ with $0\leq t\leq s$ and $1\leq i\leq k$. Clearly, $\mathsf{cplx}(\theta_t(\varphi_i))<\mathsf{cplx}(\theta_r{\odot}(\varphi_1,\dots,\varphi_k))$. 
Therefore, by induction hypothesis, $\val^\eval$ satisfies all the formulas in the given branch of the conclusion of the rule $\mathcal{R}(\textsl{B\/}_{X}^{\theta_r {\odot}})$.  
Recall that the right-hand sides of~$\textsl{B\/}_{X}^{\theta_r {\odot}}$ and~$\textsl{B\/}_{X^c}^{\theta_r {\odot}}$ are disjoint.
So, $\val^\eval$ cannot satisfy any of the disjuncts in the right-hand side of $\textsl{B\/}_{X^c}^{\theta_r {\odot}}$.
It follows that $\val^\eval$ falsifies $\textsl{B\/}_{X^c}^{\theta_r {\odot}}$, meaning that $\val^\eval(\theta_r{\odot}(\varphi_1,\dots,\varphi_k))\neq X^c$, thus~$\val^\eval$ satisfies $\X{}\theta_r{\odot}(\varphi_1,\dots,\varphi_k)$.
\qed
\end{proof}

\begin{remark}
In view of Remark~\ref{complementaryBHV}, the reader with a proof-theoretic eye will have noticed that our tableau rules obtained as counterparts of $B$-statements are \textit{invertible}.
This feature, interesting in itself and related to the desirable reduction of nondeterminism in proof-search, has played the expected role in our completeness proof, above.
\end{remark}

\begin{example}[An infinite tableau.]
\label{infinite-tableau}
By themselves, our tableau systems do not ensure termination. Indeed, we need to go beyond the usual subformula property in order to define a terminating proof procedure. Fig.~\ref{tab-infinite} depicts a simple example of an infinite proof in the system ${\mathcal{T}}(\textrm{\L$_3$},\overline{\theta})$ of Ex.~\ref{systems-and-tableaux}.  
In this tableau, we refer to rule $\mathcal{R}(\textsl{B}^{\supset}_T)$ as~\ding{172} and refer to rule $\mathcal{R}(\textsl{B}^{\supset}_F)$ as~\ding{173}, and we prune the derivation tree in order to concentrate only on the second branch of each rule application.
It should be clear that the illustrated unwise choice of rules, alternating~\ding{172} and~\ding{173}, will indeed lead to a nonterminating tableau.
\end{example}
\begin{figure}[htbp]\scriptsize
\begin{center}
\scalebox{.8}{
\begin{tikzpicture}[sibling distance=1.14em]
\hspace{-5mm}
\tikzset{every tree node/.style={align=center,anchor=north}}
\tikzset{level 1/.style={level distance=40pt}}
\tikzset{level 2/.style={level distance=40pt}}
\tikzset{level 3/.style={level distance=40pt}}
\tikzset{level 4/.style={level distance=40pt}}
\tikzset{level 5/.style={level distance=40pt}}
 \Tree[.{$\Ts \theta(p)$ \\[1mm] $\textsl{i.e. }\Ts \neg p\supset p$}
        [.{$\Fs \neg p$ \\[1mm] $\Ts \theta(p)$ \\[1mm] $\vdots$}
        ]
	  \edge node[auto=left]{\ding{172}};
        [.{$\Fs \theta(\neg p)$ \\[1mm] $\textsl{i.e. }\Fs \neg\neg p\supset \neg p$}
          [.{$\Ts \neg\neg p$ \\[1mm] $\Fs \neg p$ \\[1mm] $\vdots$}
          ]
	    \edge node[auto=left]{\ding{173}};
          [.{$\Ts \theta(\neg \neg p)$\\[1mm] $\textsl{i.e. }\Ts \neg\neg\neg p\supset \neg \neg p$}
            [.{$\vdots$}
            ]
	      \edge node[auto=left]{\ding{172}};
            [.{$\Fs \theta(\neg\neg\neg p)$ \\[1mm] $\textsl{i.e. }\Fs \neg\neg\neg\neg p\supset \neg\neg\neg p$}
              [.{$\vdots$}
              ]
	        \edge node[auto=left]{\ding{173}};
              [.{$\Ts \theta(\neg \neg\neg \neg p)$\\[1mm] $\textsl{i.e. }\Ts\neg\neg\neg\neg\neg p\supset \neg \neg\neg \neg p$}
                [.{$\vdots$}
                ]
                [.{$\vdots$}
                ]
	          \edge node[auto=left]{\ding{172}};
                [.{$\vdots$}
                ]
              ]
            ]
            [.{$\vdots$}
            ]
          ]
        ]
        [.{$\Ts p$\\[1mm] $\vdots$}
        ]
      ] %
\end{tikzpicture}}
\caption{Infinite tableau for $\T\theta(p)$ in ${\mathcal{T}}(\textrm{\L$_3$},\overline{\theta})$.}\label{tab-infinite}
\end{center}
\end{figure}
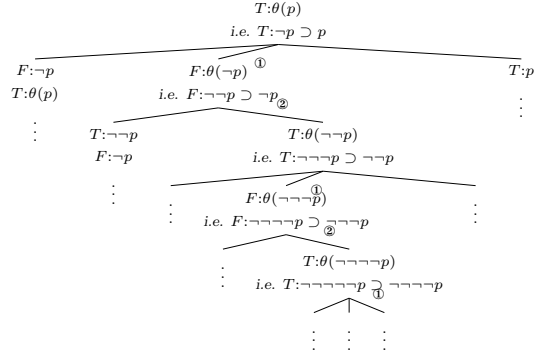

\subsection{Generalized analyticity}\label{subsec:branchcomplete}

\noindent
Though a completely unrestrained choice of rule applications in developing tableaux may be inconclusive, as illustrated in Ex.~\ref{infinite-tableau}, the very proof of the completeness result, in Prop.~\ref{complete}, suggests that we can do much better by wisely choosing the rule to be applied in each case.  To help formulating a suitable strategy, given a labeled formula~$\X{}{\varphi}$ to which a number of different rules, $\mathcal{R}_1$, $\mathcal{R}_2$, \ldots, and $\mathcal{R}_k$, might equally be applied, we will call $\mathcal{R}_j$ the \textsl{most concrete applicable rule} in case it contains the most concrete head matching~$\varphi$.  Then:

\begin{definition}[Analytic proof-strategy]\label{analytic} 
When developing a tableau in ${\mathcal{T}}(\mathcal{L},\overline{\theta})$, first:
\begin{itemize}
\item apply a closure rule, if possible; or else 
\item use $\mathsf{cplx}$ to choose the most complex~$\varphi$ such that $\X{}{\varphi}$ appears in an open branch of the tableau, where $\varphi$ is an analyzable formula whose most concrete applicable rule 
has not yet been applied, and then apply this rule.
%
%
\end{itemize}
\end{definition}

A branch of a tableau is said to be \textsl{analytic} if it is either closed or all applicable rules according to the analytic proof-strategy have already been applied to it. An \textsl{analytic tableau} is a tableau containing only analytic branches.  Globally speaking, a tableau system is called analytic if all the formulas that appear in its branches are proper generalized subformulas of the formulas that appear in the root set, and if there is an analytic proof strategy that guarantees the construction of derivation trees to be a terminating procedure.  We can here prove that:

\begin{proposition}\label{prop:analytic}
Finite analytic tableaux exist for any given finite root set.
\end{proposition}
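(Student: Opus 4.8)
The plan is to show that any tableau developed from a finite root set $\Gamma$ under the analytic proof-strategy of Def.~\ref{analytic} must terminate, producing a finite tree. I would obtain this from two independent finiteness facts: a bound on the stock of labeled formulas that can ever appear (which caps the material available on every branch) and the observation that the strategy uses each such formula at most once per branch (which caps the depth). Together with the fixed finite branching of the rules, these force the whole tree to be finite.

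First I would prove that every formula occurring anywhere in the tableau belongs to $\mathsf{gsbf}(\Gamma)=\bigcup_{\gamma\in\Gamma}\mathsf{gsbf}(\gamma)$, and that this set is finite. Confinement to generalized subformulas is read off the shape of the rules: the only non-closure rules are the $\mathcal{R}(\textsl{B}_X^{\theta_r{\odot}})$, whose conclusions are instances of the right-hand sides of the $\textsl{B}$-statements and thus mention solely labeled formulas $\X{}{\theta_t(\varphi_i)}$ with $\varphi_i$ an immediate subformula of the principal formula, i.e.\ precisely its generalized immediate subformulas; streamlining (Remark~\ref{simplification}) only deletes or recombines labeled formulas already present on those right-hand sides. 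By induction on the construction of the tableau, together with the fact that the generalized-subformula relation is a closure, no formula outside $\mathsf{gsbf}(\Gamma)$ is ever generated. Finiteness of each $\mathsf{gsbf}(\gamma)$ I would then establish by induction on $\mathsf{cplx}(\gamma)$ (Def.~\ref{cplx}): a simple formula (basic or an intersection formula) has no generalized immediate subformulas, so $\mathsf{gsbf}(\gamma)=\{\gamma\}$; and a proper $\theta_r{\odot}$-formula $\theta_r({\odot}(\varphi_1,\dots,\varphi_k))$ has only the finitely many generalized immediate subformulas $\theta_t(\varphi_i)$ with $0\le t\le s$ and $1\le i\le k$, each of strictly smaller complexity, so $\mathsf{gsbf}(\gamma)$ is a finite union of finite sets. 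As $\Gamma$ is finite, at most $2\cdot|\mathsf{gsbf}(\Gamma)|$ distinct labeled formulas (together with $\divideontimes$) can occur in the entire tableau.

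Next I would bound the length of each branch. The key point, already latent in the completeness argument (Prop.~\ref{complete}), is that when the strategy applies the \emph{most concrete} applicable rule $\mathcal{R}(\textsl{B}_X^{\theta_r{\odot}})$ to an analyzable $\X{}{\varphi}$, every labeled formula it introduces has strictly smaller $\mathsf{cplx}$ than $\varphi$; this is exactly why the most concrete fit is mandated, since $\mathsf{cplx}(\varphi)$ is computed through that fit. Because the strategy never reapplies a formula's most concrete rule once it has been applied on a branch, each of the finitely many admissible labeled formulas can trigger at most one non-closure expansion along any given branch. Hence the number of rule applications along a branch is at most $2\cdot|\mathsf{gsbf}(\Gamma)|+1$, the final unit allowing for a terminal closure step, so every branch is finite. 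Since each rule has a fixed finite number of conclusion-branches, the tree is finitely branching, and a finitely branching tree of bounded depth is finite (equivalently, K\"onig's lemma rules out an infinite branch). This delivers a finite analytic tableau.

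I expect the genuine obstacle to lie in the second ingredient rather than the first: making fully rigorous that $\mathsf{cplx}$ strictly decreases at each expansion, and that the ``most concrete applicable rule'' clause of Def.~\ref{analytic} is precisely what secures this. One must control the syntactic coincidences classified in Lemma~\ref{lemma:intersect} --- ensuring that a formula which is at once, say, a $\theta_r{\odot}$-formula and a more general fit is always decomposed through its most concrete fit (so that $\mathsf{cplx}$ drops), while intersection formulas, which have complexity $0$, are touched only by the closure rules $\mathcal{R}(\textsl{G}(\cdot))$ and never reanalyzed. Once this strict complexity-decrease is secured, the counting bound and the K\"onig-style conclusion are routine.
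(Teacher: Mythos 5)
Your proof is correct, but it takes a genuinely different route from the paper's. The paper argues termination via a well-founded measure: it attaches to each tableau a triple $(i,j,k)\in\mathbb{N}^3$, where $i$ is the maximal $\mathsf{cplx}$ of a formula occurring in an open branch whose most concrete applicable rule has not yet been applied, $j$ is the number of such formulas of complexity $i$, and $k$ is the number of open branches, and then checks that each step of the strategy of Def.~\ref{analytic} decreases $(i,j,k)$ lexicographically: closure rules decrease $k$ without increasing $i$ or $j$, while an application of $\mathcal{R}(\textsl{B}_X^{\theta_r{\odot}})$ to a proper $\theta_r{\odot}$-formula gives $i'<i$, or $i'=i$ and $j'<j$, precisely because the formulas it introduces are generalized immediate subformulas of strictly smaller $\mathsf{cplx}$. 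You instead bound the tableau outright: confinement of all formulas to the finite set $\mathsf{gsbf}(\Gamma)$, at most one expansion per labeled formula per branch, hence depth at most $2|\mathsf{gsbf}(\Gamma)|+1$, and finiteness from finite branching. Both arguments hinge on the same two facts --- strict $\mathsf{cplx}$-decrease through the most concrete fit, and the no-reapplication bookkeeping --- but yours buys two extras: an explicit bound on the depth and size of the analytic tableau, and a proof of the confinement half of the paper's global notion of analyticity (that every formula in a branch is a generalized subformula of the root set), which the measure argument leaves implicit. Your route is also more robust on a point the paper glosses over: when a branching rule is applied in a branch that still contains another undeveloped formula of the same maximal complexity $i$, that formula is duplicated into each of the newly created branches, so $j$ --- which must be counted over formula-occurrences in open branches, since the bookkeeping is per branch --- need not decrease, and the paper's lexicographic step fails as literally stated (a repair would use, e.g., a multiset of per-branch measures); your per-branch depth bound is immune to this multiplicity issue.
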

\begin{proof}
Recall that the lexicographic order on the naturals is well-founded. Let us associate the triple $(i,j,k)\in\mathbb{N}\times\mathbb{N}\times\mathbb{N}$ to each tableau, where~$i$ is the maximum complexity of a formula occurring in an open branch of the tableau whose corresponding most concrete applicable rule has not yet been applied, $j>0$ is the number of formulas in open branches of the tableau that have complexity~$i$ and whose corresponding most concrete applicable rule has not yet been applied, and $k$ is the number of open branches of the tableau. 
We just need to note that each rule application, according to the analytic proof strategy, leads to a tableau whose associated triple $(i',j',k')$ is such that either $i'<i$, or $i'=i$ but $j'<j$, or $i'=i$ and $j'=j$ but $k'<k$. 

Clearly, by applying a closure rule, we get $k'<k$. If all the undeveloped formulas of complexity $i$ are in the branch being closed then $i'<i$. Otherwise, $i'=i$ and thus $j'\leq j$.
By applying a rule $\mathcal{R}(\textsl{B\/}_X^{\theta_r{\odot}})$ to a $\theta_r{\odot}$-proper formula, we either get $i'<i$, or else $i'=i$ with $j'<j$.
\qed
\end{proof}

The following result follows from the proof of Prop.~\ref{complete}.

\begin{proposition}[Completeness by analyticity]\label{analytic-complete}
From every open branch of an analytic tableau one may extract a valuation in $\mathsf{Sem}$ satisfying its root set.
\end{proposition}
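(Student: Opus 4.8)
The plan is to piggy-back on the proof of Prop.~\ref{complete}, since for the purposes of this argument an open analytic branch is already ``exhausted enough''. First I would fix an open branch of an analytic tableau and recall the shape of the earlier argument: from the signed formulas $\X{}\theta_r(p)$ occurring in the branch one reads off, for each atomic variable $p$, a partial binary print, which is obtainable precisely because neither $\mathcal{R}(\textsl{ABS\/})$ nor any $\mathcal{R}(\textsl{U\/}\overline{Y})$ (for minimal unobtainable $\overline{Y}$) applies; one then picks any assignment $\eval$ whose prints agree with the branch and takes its homomorphic extension $\val^\eval\in\mathsf{Sem}$. The only thing I need to check is that the ingredients this construction consumes are all guaranteed by the analytic proof-strategy of Def.~\ref{analytic}, and not merely by full exhaustion.

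The second step is to verify that an open \emph{analytic} branch already furnishes those ingredients. Since the analytic strategy gives priority to closure rules, an open branch contains the premises of no closure rule; in particular $\mathcal{R}(\textsl{ABS\/})$, every $\mathcal{R}(\textsl{U\/}\overline{Y})$, and every $\mathcal{R}(\textsl{G}(\iota(\theta_{r_1}{\odot}_1,\theta_{r_2}{\odot}_2)))$ fail to apply. This is exactly what makes the agreeing assignment well-defined and obtainable, and it also settles the three base cases of the induction (a separator applied to an atomic variable, a separator applied to a sentential constant, and an intersection formula) verbatim as in Prop.~\ref{complete}, with ``exhausted yet open'' replaced throughout by ``analytic yet open''.

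For the induction step I would again argue by induction on $\mathsf{cplx}$. Given a signed proper $\theta_r{\odot}$-formula $\X{}\theta_r{\odot}(\varphi_1,\dots,\varphi_k)$ in the branch, its \emph{most concrete applicable} rule is by definition $\mathcal{R}(\textsl{B\/}_X^{\theta_r{\odot}})$. Because the branch is open and analytic, there is no analyzable formula whose most concrete applicable rule is still unapplied; hence this rule has been applied and one of its conclusion-branches lies entirely in the present branch. Every labeled formula in that conclusion has the form $\X{}\theta_t(\varphi_i)$ with strictly smaller $\mathsf{cplx}$, so the induction hypothesis applies, and the disjointness of the right-hand sides of $\textsl{B\/}_X^{\theta_r{\odot}}$ and $\textsl{B\/}_{X^c}^{\theta_r{\odot}}$ (Remark~\ref{complementaryBHV}) forces $\val^\eval(\theta_r{\odot}(\varphi_1,\dots,\varphi_k))=X$, exactly as before.

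The point I expect to be the crux --- and the only genuine difference from Prop.~\ref{complete} --- is the observation that the analytic strategy fires the \emph{most concrete} applicable rule rather than every applicable rule. A priori this could lose information, but it does not: the completeness argument was already organized around the most concrete fit of each formula (its status as a proper $\theta_r{\odot}$-formula), so precisely the rule instances it invokes are the ones the analytic strategy is guaranteed to have applied. Combining this with Prop.~\ref{prop:analytic}, which supplies finite analytic tableaux for any finite root set, then yields the desired terminating and complete decision procedure.
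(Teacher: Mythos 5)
Your proposal is correct and takes essentially the same approach as the paper: the paper's entire justification is the remark that the result ``follows from the proof of Prop.~\ref{complete}'', and the details you supply are exactly the ones left implicit there. In particular, your two key observations---that closure-rule priority in the analytic strategy makes an open analytic branch free of closure-rule premises (settling the assignment's obtainability and the base cases), and that the induction step of Prop.~\ref{complete} only ever invokes the \emph{most concrete} applicable rule, which is precisely the rule the strategy of Def.~\ref{analytic} guarantees has been applied---are what the paper's one-line proof tacitly relies on.
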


\begin{corollary}
For a given finite-valued logic $\mathcal{L}$ with semantics
$\mathsf{Sem}$, we have that
$\gamma_1,\gamma_2,\ldots,\gamma_k\models_\mathsf{Sem}\varphi$ if and only if there is a closed analytic tableau for the root set 
$\{\T\gamma_1,\T\gamma_2,\,\ldots,\,\T\gamma_k,\,\F\varphi\}$. 
Hence, the development of an analytic tableau constitutes a \textit{decision procedure} for $\mathcal{L}$.
\end{corollary}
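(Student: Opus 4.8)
The plan is to read the biconditional as a statement about the (un)satisfiability of the labeled root set $\Delta=\{\T{\gamma_1},\dots,\T{\gamma_k},\F{\varphi}\}$, and then to play soundness off against completeness-by-analyticity. First I would unwind the definition of entailment: $\gamma_1,\dots,\gamma_k\models_\mathsf{Sem}\varphi$ \emph{fails} exactly when there is some $\val\in\mathsf{Sem}$ with $\val(\gamma_i)\in\Vd$ for all $i$ and $\val(\varphi)\notin\Vd$. Passing through the total map $t$ of Def.~\ref{dyadic}, this says precisely that the associated bivaluation $b_\val$ satisfies $\T{\gamma_1}$, \dots, $\T{\gamma_k}$ and $\F{\varphi}$; that is, $\val$ (equivalently $b_\val$) satisfies the root set $\Delta$. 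Thus the entailment holds if and only if no valuation in $\mathsf{Sem}$ satisfies $\Delta$.

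For the ($\Leftarrow$) direction I would argue by contraposition using soundness. Suppose a closed analytic tableau for $\Delta$ exists but the entailment fails; then some $\val\in\mathsf{Sem}$ satisfies $\Delta$. By Prop.~\ref{sound}, $\val$ must then satisfy all the labeled formulas in some open branch of any tableau developing from $\Delta$, in particular of our closed tableau. But a closed tableau has no open branch---a contradiction. Hence no such $\val$ exists and the entailment holds.

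For the ($\Rightarrow$) direction I would combine the existence of finite analytic tableaux with completeness. Assuming the entailment, $\Delta$ is unsatisfiable in $\mathsf{Sem}$. By Prop.~\ref{prop:analytic}, since $\Delta$ is finite there is a finite analytic tableau $\mathcal{T}$ developing from $\Delta$. Were $\mathcal{T}$ to contain an open branch, then by Prop.~\ref{analytic-complete} one could extract a valuation in $\mathsf{Sem}$ satisfying $\Delta$, contradicting unsatisfiability. Hence every branch of $\mathcal{T}$ is closed, i.e.\ $\mathcal{T}$ is a closed analytic tableau for $\Delta$, as required.

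Finally, for the decision-procedure claim I would note that the analytic proof-strategy of Def.~\ref{analytic}, by Prop.~\ref{prop:analytic}, always terminates on a finite root set, producing a finite analytic tableau whose closure can be checked mechanically; by the biconditional just established, this tableau is closed precisely when $\gamma_1,\dots,\gamma_k\models_\mathsf{Sem}\varphi$, so inspecting it decides the entailment. The only delicate point---and the place where the machinery of the paper is really being spent---is the simultaneous guarantee of termination and completeness: termination rests on the well-founded lexicographic measure of Prop.~\ref{prop:analytic}, which in turn depends essentially on the generalized complexity $\mathsf{cplx}$ and the generalized subformula property, while completeness for the strategy rests on Prop.~\ref{analytic-complete}. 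Once both are in hand, the corollary is merely a matter of stitching them together.
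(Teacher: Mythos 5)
Your proposal is correct and follows exactly the route the paper intends: the paper states this as an immediate corollary of soundness (Prop.~\ref{sound}), the existence of finite analytic tableaux (Prop.~\ref{prop:analytic}), and completeness by analyticity (Prop.~\ref{analytic-complete}), which are precisely the three results you stitch together, after unwinding $\models_\mathsf{Sem}$ into (un)satisfiability of the labeled root set via the $\mathsf{S}$-reduction. Nothing is missing, and the termination/closure-checking remarks for the decision-procedure claim are the intended ones.
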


\begin{example}[Fat tableaux]
\label{fat-tableau}
Given $k\in\mathbb{N}$ and $A\subseteq\{1,\dots,k\}$, let $\varphi_A=(\bigvee_{i=1}^k\varphi_{A,i})$ with $\varphi_{A,i}=p_i$ if $i\in A$, and $\varphi_{A,i}=\neg p_i$ if $i\notin A$. For each $k\in\mathbb{N}$, let  $\Phi_k=(\bigwedge_{A\subseteq\{1,\dots,k\}}\varphi_A)$ be the \textsl{$k$-th fat formula} of~\cite{dag:handbook:99}. It is easy to check that all fat formulas are unsatisfiable not only in Classical Logic but also in the logics \L$_n$ and G$_n$, for any $n\in\mathbb{N}$. Fig.~\ref{fat2} shows a closed analytic tableau for $T{:}\Phi_2$ that could have been built not just in a tableau system for Classical Logic, but also in any of the tableau systems ${\mathcal{T}}(\textrm{\L$_n$},\overline{\theta})$ or ${\mathcal{T}}(\textrm{G$^+_n$},\overline{\theta})$. Note that the label \ding{172} indicates two subsequent applications of the rule $\mathcal{R}(\textsl{B}^{\land}_T)$ in the tableau systems ${\mathcal{T}}(\textrm{\L$_3$},\overline{\theta})$ or ${\mathcal{T}}(\textrm{G$^+_4$},\overline{\theta})$ of Ex.~\ref{systems-and-tableaux}.
Similarly, the labels \ding{173} indicate an application of the rule $\mathcal{R}(\textsl{B}^{\lor}_T)$ in the same systems.
The labels \ding{174} indicate the closure of a branch including $\Ts \varphi$ and $\Ts \neg\varphi$ for any formula $\varphi$, namely by using subsequently $\mathcal{R}(\textsl{B}^{\neg}_T)$ and $\mathcal{R}(\textsl{ABS})$.
It is a simple corollary of our soundness and completeness results that similar rules exist in the systems ${\mathcal{T}}(\textrm{\L$_n$},\overline{\theta})$ and ${\mathcal{T}}(\textrm{G$^+_n$},\overline{\theta})$ for arbitrary $n$.

Closed tableaux for other fat formulas can be similarly obtained. However, a tableau for $T{:}\Phi_3$, for instance, has already hundreds of branches. Asymptotically,~\cite{dag:handbook:99} shows that a closed tableau for $T{:}\Phi_k$ has more than $k!$ branches.
\end{example}

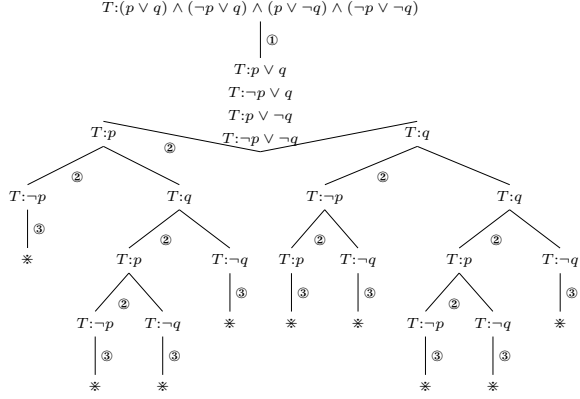
\begin{figure}[htbp]\scriptsize
\begin{center}
\scalebox{.8}{
\begin{tikzpicture}[sibling distance=1.14em]
\hspace{-5mm}
\tikzset{every tree node/.style={align=center,anchor=north}}
\tikzset{level 1/.style={level distance=40pt}}
\tikzset{level 2/.style={level distance=60pt}}
\tikzset{level 4/.style={level distance=40pt}}
\tikzset{level 5/.style={level distance=40pt}}
\Tree[.{$\Ts (p\vee q)\wedge(\neg p\vee q)\wedge(p\vee \neg q)\wedge(\neg p\vee \neg q)$}
       \edge node[auto=left]{\ding{172}};
       [.{$\Ts p\vee q$ \\[1mm] $\Ts \neg p\vee q$ \\[1mm] $\Ts p\vee \neg q$ \\[1mm] $\Ts \neg p\vee \neg q$}
         \edge node[auto=left]{\ding{173}};
         [.{$\Ts p$}
            \edge node[auto=left]{\ding{173}};
           [.{$\Ts \neg p$}
             \edge node[auto=left]{\ding{174}};
             [.$\divideontimes$
             ]
         ]
         [.{$\Ts q$}
        \edge node[auto=left]{\ding{173}};
		[.{$\Ts p$}
		  \edge node[auto=left]{\ding{173}};
		  [.{$\Ts \neg p$}
		    \edge node[auto=left]{\ding{174}};
		    [.$\divideontimes$
		    ]
         ]
         [.{$\Ts \neg q$}
           \edge node[auto=left]{\ding{174}};
           [.$\divideontimes$
           ]
         ]
         ]
         [.{$\Ts \neg q$}
           \edge node[auto=left]{\ding{174}};
           [.$\divideontimes$
           ]
         ]
         ]
         ]
         [.{$\Ts q$}
           \edge node[auto=left]{\ding{173}};
           [.{$\Ts \neg p$}
             \edge node[auto=left]{\ding{173}};
             [.{$\Ts p$}
               \edge node[auto=left]{\ding{174}};
               [.$\divideontimes$
               ]
	         ]
	         [.{$\Ts \neg q$}
	           \edge node[auto=left]{\ding{174}};
	           [.$\divideontimes$
	           ]
	         ]
         ]
         [.{$\Ts q$}
            \edge node[auto=left]{\ding{173}};
            [.{$\Ts p$}
              \edge node[auto=left]{\ding{173}};
              [.{$\Ts \neg p$}
                \edge node[auto=left]{\ding{174}};
                [.$\divideontimes$
                ]
	         ]
	         [.{$\Ts \neg q$}
	           \edge node[auto=left]{\ding{174}};
	           [.$\divideontimes$
	           ]
	         ]
	         ]
	         [.{$\Ts \neg q$}
	           \edge node[auto=left]{\ding{174}};
	           [.$\divideontimes$
	           ]
	         ]
         ]
         ]
       ] 
	 ] %
\end{tikzpicture}}
\end{center}
\caption{A branching closed tableau for $T{:}\Phi_2$.}\label{fat2}	
\end{figure}

\begin{example}[An open tableau]
\label{open-tableau}
Recall that the root of the non-analytic infinite tableau of Ex.~\ref{infinite-tableau}, Fig.~\ref{tab-infinite}, is a single formula, thus leading trivially to an exhausted tableau.
Part of a more interesting exhausted tableau with an open branch in the system ${\mathcal{T}}(\textrm{G$^+_4$},\overline{\theta})$ of Ex.~\ref{systems-and-tableaux} is depicted in Fig.~\ref{tab-open}. The open exhausted branch (the fifth branch from the left) yields a falsifying valuation for $((p\supset\neg p)\supset p)\supset p$ with $\val(p)=\frac{2}{3}$. The three rightmost unfinished branches can all be easily closed. The remaining falsifying valuation for the formula with $\val(p)=\frac{1}{3}$ will be yielded by developing the leftmost unfinished branch of the tableau.  In this tableau we use~\ding{172} to refer to rule $\mathcal{R}(\textsl{B\/}^{\supset}_F)$, \ding{173} to refer to rule $\mathcal{R}(\textsl{B\/}^{\supset}_T)$, \ding{174} to refer to rule $\mathcal{R}(\textsl{ABS\/})$, \ding{175} to refer to rule $\mathcal{R}(\textsl{B\/}^{\neg}_F)$, and \ding{176} to refer to rule $\mathcal{R}(\textsl{U\/}\langle \uparrow,T,T\rangle)$.
\end{example}

\begin{figure}[htbp]\scriptsize
\begin{center}
\scalebox{.8}{
\begin{tikzpicture}[sibling distance=1.14em]
\hspace{-5mm}
\tikzset{every tree node/.style={align=center,anchor=north}}
\tikzset{level 1/.style={level distance=40pt}}
\tikzset{level 2/.style={level distance=60pt}}
\tikzset{level 3/.style={level distance=40pt}}
\tikzset{level 4/.style={level distance=60pt}}
\tikzset{level 5/.style={level distance=40pt}}
 \Tree[.{$\Fs ((p\supset\neg p)\supset p)\supset p$}
	  \edge node[auto=left]{\ding{172}};
        [.{$\Ts (p\supset\neg p)\supset p$ \\[1mm] $\Fs p$}
	  \edge node[auto=left]{\ding{173}};
        [.{$\Ts p$}
	    \edge node[auto=left]{\ding{174}};
          [.{$\divideontimes$}
          ]
        ]
        [.{$\Fs p\supset\neg p$ \\[1mm] $\Ts \theta_2(p)$}
	    \edge node[auto=left]{\ding{172}};
          [.{$\Ts p$ \\[1mm] $\Fs \neg p$}
	      \edge node[auto=left]{\ding{174}};
            [.{$\divideontimes$}
            ]
          ]
          [.{$\Fs \neg p$ \\[1mm] $\Fs \theta_2(\neg p)$}
	      \edge node[auto=left]{\ding{175}};
            [.{$\Ts p$}
	            \edge node[auto=left]{\ding{172}};
	            [.{$\divideontimes$}
	            ]
            ]
            [.{$\Ts  \theta_1(p)$}
	            \edge node[auto=left]{\ding{176}};
	            [.{$\divideontimes$}
	            ]
            ]
            [.{$\Ts  \theta_2(p)$}
            ]
          ]
          [.{$\Ts \theta_1(p)$ \\[1mm] $\Fs \neg p$\\[1mm] $\Fs \theta_1(\neg p)$\\[1mm] $\Fs \theta_2(\neg p)$}
	      \edge node[auto=left]{\ding{176}};
            [.{$\divideontimes$}
            ]
          ]
        ]
        [.{
          \quad$\vdots$\quad}
        ]
        [.{
          \quad$\vdots$\quad}
        ]
        ]
        [.{
           \quad$\vdots$\quad}
        ]
        [.{
          \quad$\vdots$\quad}
        ]
      ] %
\end{tikzpicture}}
\caption{An open exhausted tableau in ${\mathcal{T}}(\textrm{G$^+_4$},\overline{\theta})$.}\label{tab-open}
\end{center}
\end{figure}
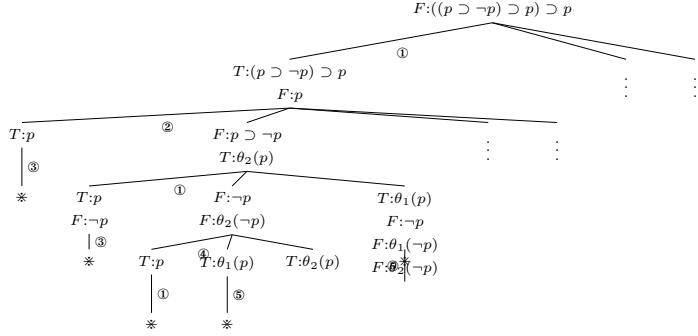

\section{Uniform Analytic Classic-Like Cut-Based Tableaux}\label{CutBased}
\noindent
The tableau systems produced using the recipe in Section~\ref{TableauExtraction} may originate very redundant and highly branching derivation trees, such as some of the tableau proofs pictured above. 
This unpleasant fact is actually a common feature of branching tableau systems, and an extreme case of the undesirable explosion that might originate from accumulating such redundancies is provided by the fat formulas of Ex.~\ref{fat-tableau}, which also show that branching tableaux cannot polynomially simulate the truth-table method. 
The key point here is that while the complexity of truth-tables for finite-valued logics depends only on the number of distinct atomic variables occurring in the formula to be decided, the size of branching tableaux, in the worst cases, might depend essentially on the length of such a formula.

To circumvent such problems, in the case of Classical Logic, D'Agostino and Mondadori~\cite{dag:mon:taming,dag:handbook:99} have introduced the cut-based `KE system' of tableaux, and shown that such system allows for much more efficient tableau proofs; namely, they have proven that KE tableaux \textit{can} polynomially simulate truth tables (while the simulation does not work the other way round) and are therefore in general more efficient than Smullyan's analytic tableaux. The main feature of the system is that it consists of linear rules for the connectives, and a unique branching rule --- the \textit{cut rule}.
We should note that an application of the KE approach to the case of many-valued logics has been proposed in~\cite[Chapter 6.1]{hah:book:94}, where a cut-based version of the sets-as-signs tableau systems is briefly described.

In this section we show how one can adapt the approach of the previous section in order to obtain, in the general case, more efficient cut-based classic-like tableau systems for finite-valued logics, in which the only branching rule is an analytic version of the cut rule. Generalizing D'Agostino and Mondadori's results, we will show that such cut-based systems allow in general to develop tableaux that polynomially simulate $n$-valued truth-tables, thus providing further evidence of their advantages over the branching systems.

\subsection{Linear bivalent statements} \label{subsec:linear-stat}

\noindent 
We start by looking for a way of replacing the $\textsl{B}$-statements of Subsection~\ref{sec:classiclike} by a collection of equivalent \textit{linear} statements --- that is, statements whose right-hand sides are just conjunctions of labeled formulas. Let us here fix an $n$-valued logic~$\mathcal{L}$ with a set of constructors~$\Sigma$ and an appropriate separating sequence $\overline{\theta}=\langle\theta_r\rangle_{r=0}^s$. We are interested in exploiting to our advantage the information carried by labeled formulas such as $\X{}\theta_r({\odot}(\varphi_1,\dots,\varphi_k))$, where $0\leq r\leq s$ and ${\odot}\in\Sigma_k$, $X\in\{F,T\}$ and $\varphi_1,\dots,\varphi_k\in\mathcal{S}$, and on that quest we will be guided by the following questions: given a certain amount of partial information about a given valuation, can we conclude that the labeled formula is satisfied?\ and if so, how much more information about that valuation can we gather?

\begin{definition}\label{partialvec}
A \textsl{vector of partial binary prints} is a finite sequence
$\overline{\overline{Y}}=\langle \overline{Y_1},\dots,\overline{Y_k}\rangle$ where $\overline{Y_i}$ is a partial binary print, for each $1\leq i\leq k$.  
We add an extra dimension to the definition of $\dom$ (Def.~\ref{partial}) and use here $\dom(\overline{\overline{Y}})$ to denote the set $\{\langle i,r\rangle:1\leq i\leq k\mbox{ and }r\in\dom(\overline{Y_i})\}$ (the context of usage will always take care that such overload of $\dom$ does not get us into trouble).
Given two $k$-long vectors~$\overline{\overline{Y}}$ and~$\overline{\overline{Z}}$, we say that~$\overline{\overline{Y}}$ \textsl{extends}~$\overline{\overline{Z}}$ if $\overline{Y_i}$ extends $\overline{Z_i}$ for all $1\leq i\leq k$.
\end{definition}

Let $\overline{\overline{Y}}=\langle \overline{Y_1},\dots,\overline{Y_k}\rangle$ be a vector of partial binary prints, and recall from Section~\ref{TFvsBiv} the definition of $R_X^{\theta_r{\odot}}$ as $\{\overline{x}\in(\mathcal{V}_n)^k: t(\widehat{\theta_r}(\widehat{{\odot}}(\overline{x})))=X\}$.  
Given $\overline{x}=\langle x_1,\ldots,x_k \rangle \in(\mathcal{V}_n)^k$, we use $\overline{\theta}(\overline{x})$ to denote the vector of binary prints $\langle \overline{\theta}(x_1),\ldots,\overline{\theta}(x_k) \rangle$ and, by slightly abusing notation, $R_X^{\theta_r{\odot}}\cap\overline{\overline{Y}}$ to denote the set $\{\overline{x}\in R_X^{\theta_r{\odot}}:\overline{\theta}(\overline{x})\textrm{ extends }\overline{\overline{Y}}\}$.
We say that $\X{}\theta_r({\odot}(\varphi_1,\dots,\varphi_k))$ is \textsl{satisfied by $\overline{\overline{Y}}$} if $R^{\theta_r{\odot}}_X\cap\overline{\overline{Y}}\neq\varnothing$.
We can characterize the situation in which $\X{}\theta_r({\odot}(\varphi_1,\dots,\varphi_k))$ is not satisfied by $\overline{\overline{Y}}$ using the following linear statement:
\begin{equation}
{\X{}\theta_r({\odot}(\varphi_1,\dots,\varphi_k)) {\;\;\&\;\;} V(\varphi_1,\dots,\varphi_k;\overline{\overline{Y}})} {\quad\Longrightarrow\quad} {\divideontimes}. \tag{$\textsl{L}_{X \overline{\overline{Y}}}^{\theta_r{\odot}}$}\label{eq:behavior-L1}
\end{equation}
When $\X{}\theta_r({\odot}(\varphi_1,\dots,\varphi_k))$ is indeed satisfied by $\overline{\overline{Y}}$ we must look for additional information. Let $\overline{\overline{Z}}$ be another $k$-long vector such that $\dom(\overline{\overline{Y}})\cap \dom(\overline{\overline{Z}})=\varnothing$. We say that $\overline{\overline{Y}}$ \textsl{entails} $\overline{\overline{Z}}$ \textsl{with respect to} $\X{}\theta_r({\odot}(\varphi_1,\dots,\varphi_k))$ just in case $R_X^{\theta_r{\odot}}\cap\overline{\overline{Y}}\subseteq R_X^{\theta_r{\odot}}\cap\overline{\overline{Z}}$.
Note that if $\overline{\overline{Y}}$ entails both $\overline{\overline{Z}}$ and $\overline{\overline{W}}$ with respect to $\X{}\theta_r({\odot}(\varphi_1,\dots,\varphi_k))$ then $\overline{\overline{Z}}$ and $\overline{\overline{W}}$ are necessarily \textsl{compatible}, in the sense that, for each $1\leq i\leq k$ and $0\leq t\leq s$, if $Z_{it}\neq{\uparrow}$ and $W_{it}\neq{\uparrow}$ then $Z_{it}=W_{it}$. This means that $\overline{\overline{Z}}$ and $\overline{\overline{W}}$ may be merged into a single vector that extends both and that is also entailed 
by $\overline{\overline{Y}}$ with respect to $\X{}\theta_r({\odot}(\varphi_1,\dots,\varphi_k))$. Hence, there is a \emph{largest} (i.e., most defined) vector entailed by $\overline{\overline{Y}}$ with respect to such given labeled formula, which we will denote by $\overline{\overline{M}}(R_X^{\theta_r{\odot}},\overline{\overline{Y}})$. 
Clearly, $\overline{\overline{M}}(R_X^{\theta_r{\odot}},\overline{\overline{Y}})$ contains all the new information (not in $\overline{\overline{Y}}$) that is invariant among the elements of $R_X^{\theta_r{\odot}}\cap\overline{\overline{Y}}$. Thus, we have that $\overline{\overline{M}}(R_X^{\theta_r{\odot}},\overline{\overline{Y}})_{it}={\uparrow}$
if $\langle i,t\rangle\in\dom(\overline{\overline{Y}})$, or if there exist $\overline{u},\overline{v}\in R_X^{\theta_r{\odot}}\cap\overline{\overline{Y}}$ such that $t(\widehat{\theta_t}(u_i))\neq t(\widehat{\theta_t}(v_i))$. Otherwise $\overline{\overline{M}}(R_X^{\theta_r{\odot}},\overline{\overline{Y}})_{it}=t(\widehat{\theta_t}(x_i))$ for any $\overline{x}\in R_X^{\theta_r{\odot}}\cap\overline{\overline{Y}}$.

We can finally describe the information extractible from $\X{}\theta_r({\odot}(\varphi_1,\dots,\varphi_k))$ satisfied by $\overline{\overline{Y}}$ by means of the following linear statement:
\begin{equation}
\hspace{-7mm}
{\X{}\theta_r({\odot}(\varphi_1,\dots,\varphi_k)) {\,\&\,} V(\varphi_1,\dots,\varphi_k;\overline{\overline{Y}})} {\,\Longrightarrow\,} V(\varphi_1,\dots,\varphi_k;\overline{\overline{M}}(R_X^{\theta_r{\odot}},\overline{\overline{Y}})).\!\!\!\!\!\!\!\!\!\! \tag{$\textsl{L}_{X \overline{\overline{Y}}}^{\theta_r{\odot}}$}\label{eq:behavior-L2}
\end{equation}

\begin{lemma}\label{okpartial-cutbased}
Let $b:\mathcal{S}\longrightarrow\{F,T\}$ be a bivaluation, $X\in\{F,T\}$, $0\leq r\leq s$, and ${\odot}\in\Sigma_k$. Then, $b$ satisfies $\textsl{B}_{X}^{\theta_r{\odot}}$ if and only if, for every $k$-long vector~$\overline{\overline{Y}}$ of partial binary prints, $b$ satisfies $\textsl{L}_{X \overline{\overline{Y}}}^{\theta_r{\odot}}$.
\end{lemma}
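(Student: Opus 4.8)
The plan is to prove the biconditional by treating each implication separately, after first fixing a dictionary between the satisfaction of $V$-statements by a bivaluation and the set-theoretic operations on $R_X^{\theta_r{\odot}}$. For a bivaluation $b$ and the fixed formulas $\varphi_1,\dots,\varphi_k$, I would write $\overline{\overline{Y}}^{b}=\langle \overline{X}_1,\dots,\overline{X}_k\rangle$ for the vector of \emph{total} binary prints induced by $b$, where $\overline{X}_i=\langle b(\theta_t(\varphi_i))\rangle_{t=0}^{s}$. The key observation to record first is the translation: $b$ satisfies $V(\varphi_1,\dots,\varphi_k;\overline{\overline{Y}})$ if and only if $\overline{\overline{Y}}^{b}$ extends $\overline{\overline{Y}}$; and, by the very form of $\textsl{B}_{X}^{\theta_r{\odot}}$ (whose right-hand side is the disjunction over $\overline{x}\in R_X^{\theta_r{\odot}}$ of the claim that the prints of the $\varphi_i$ equal $\overline{\theta}(x_1),\dots,\overline{\theta}(x_k)$), one has that $b$ satisfies that right-hand side exactly when $\overline{\overline{Y}}^{b}$ is obtainable and the tuple it names lies in $R_X^{\theta_r{\odot}}$.

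For the left-to-right implication I would assume $b$ satisfies $\textsl{B}_{X}^{\theta_r{\odot}}$ and fix an arbitrary $k$-long $\overline{\overline{Y}}$; the shape of $\textsl{L}_{X\overline{\overline{Y}}}^{\theta_r{\odot}}$ is determined by whether $R_X^{\theta_r{\odot}}\cap\overline{\overline{Y}}$ is empty. Suppose $b$ satisfies the common left-hand side, namely $b(\theta_r({\odot}(\varphi_1,\dots,\varphi_k)))=X$ together with $V(\varphi_1,\dots,\varphi_k;\overline{\overline{Y}})$. Applying $\textsl{B}_{X}^{\theta_r{\odot}}$ to the first conjunct yields $\overline{x}\in R_X^{\theta_r{\odot}}$ with $\overline{\theta}(\overline{x})=\overline{\overline{Y}}^{b}$, while the second conjunct forces $\overline{\overline{Y}}^{b}$ to extend $\overline{\overline{Y}}$, so $\overline{x}\in R_X^{\theta_r{\odot}}\cap\overline{\overline{Y}}$. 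In the empty (closure) case this already contradicts emptiness, so the left-hand side is unsatisfiable and the statement holds vacuously; in the nonempty case I would invoke the defining property of $\overline{\overline{M}}(R_X^{\theta_r{\odot}},\overline{\overline{Y}})$: since $\overline{x}$ belongs to that intersection and $\overline{\overline{M}}$ records exactly the coordinates invariant across it, the prints of $b$ agree with $\overline{\overline{M}}$ on its domain, which is what it means for $b$ to satisfy the right-hand side $V(\varphi_1,\dots,\varphi_k;\overline{\overline{M}}(R_X^{\theta_r{\odot}},\overline{\overline{Y}}))$.

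For the right-to-left implication I would assume $b$ satisfies every $\textsl{L}_{X\overline{\overline{Y}}}^{\theta_r{\odot}}$ and verify $\textsl{B}_{X}^{\theta_r{\odot}}$. There is nothing to prove unless $b(\theta_r({\odot}(\varphi_1,\dots,\varphi_k)))=X$, so I assume it. The decisive move is to instantiate $\overline{\overline{Y}}$ at the \emph{total} vector $\overline{\overline{Y}}^{b}$: then $b$ trivially satisfies $V(\varphi_1,\dots,\varphi_k;\overline{\overline{Y}}^{b})$, hence the full left-hand side of $\textsl{L}_{X\overline{\overline{Y}}^{b}}^{\theta_r{\odot}}$. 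If this statement were of the closure form, $b$ would satisfy $\divideontimes$, which is impossible; hence $R_X^{\theta_r{\odot}}\cap\overline{\overline{Y}}^{b}\neq\varnothing$. As $\overline{\overline{Y}}^{b}$ is total, any $\overline{x}$ in this intersection has $\overline{\theta}(\overline{x})=\overline{\overline{Y}}^{b}$, so the disjunct $V(\varphi_1,\dots,\varphi_k;\overline{\theta}(x_1),\dots,\overline{\theta}(x_k))$ of the right-hand side of $\textsl{B}_{X}^{\theta_r{\odot}}$ is satisfied by $b$, as required.

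The step I expect to demand the most care is the bookkeeping behind the dictionary of the first paragraph, specifically keeping straight that $R_X^{\theta_r{\odot}}\cap\overline{\overline{Y}}$ ranges only over \emph{obtainable} tuples while an arbitrary bivaluation may a priori carry unobtainable total prints. This is precisely why the closure case of the linear statements is indispensable: instantiating at $\overline{\overline{Y}}^{b}$ and invoking $\divideontimes$ is what, in the converse direction, excludes $b$ exhibiting an unobtainable print vector while still labelling the head formula with $X$, and it is also what makes both sides of the equivalence fail together for such a pathological $b$. The remaining manipulations — the extension relation on partial prints and the invariance characterization of $\overline{\overline{M}}$ — are routine once this correspondence is in place.
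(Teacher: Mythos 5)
Your proof is correct and follows essentially the same route as the paper's: both hinge on the total print vector induced by $b$ (your $\overline{\overline{Y}}^{b}$, the paper's $\overline{\overline{B}}$), use the invariance property of $\overline{\overline{M}}(R_X^{\theta_r{\odot}},\overline{\overline{Y}})$ for the forward direction, and instantiate the $\textsl{L}$-statement at that total vector for the converse. The only difference is presentational: you spell out the case split between the closure form and the informative form of $\textsl{L}_{X\overline{\overline{Y}}}^{\theta_r{\odot}}$, which the paper leaves implicit.
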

\begin{proof}
Given a bivaluation $b$ let us denote by $\overline{\overline{B}}$ the vector of (total) binary prints $\langle \langle b(\theta_t(\varphi_1))\rangle_{t=0}^s,\dots, \langle b(\theta_t(\varphi_k))\rangle_{t=0}^s\rangle$ induced by $b$.	
		
Let $b$ satisfy $\textsl{B}_{X}^{\theta_r{\odot}}$, and consider a vector $\overline{\overline{Y}}$. If $b$ satisfies the left-hand side of $\textsl{L}_{X \overline{\overline{Y}}}^{\theta_r{\odot}}$ then, in particular, $b$ satisfies the left-hand side of $\textsl{B}_{X}^{\theta_r{\odot}}$ and therefore also one of the disjuncts on the right, i.e., $b$ satisfies $\textsl{V}(\varphi_1,\dots,\varphi_k \, ;\overline{\theta}(\overline{x}))$ for some $\overline{x}\in R_X^{\theta_r{\odot}}$. Note that this means precisely that  $\overline{\overline{B}}=\overline{\theta}(\overline{x})$. As we also know that $\overline{\overline{B}}$ extends $\overline{\overline{Y}}$
we then get $\overline{x}\in R_X^{\theta_r{\odot}}\cap\overline{\overline{Y}}\subseteq R_X^{\theta_r{\odot}}\cap\overline{\overline{M}}(R_X^{\theta_r{\odot}},\overline{\overline{Y}})$.
In that case, $\overline{\overline{B}}$ also extends $\overline{\overline{M}}(R_X^{\theta_r{\odot}},\overline{\overline{Y}})$ and $b$ satisfies thus the right-hand side of $\textsl{L}_{X \overline{\overline{Y}}}^{\theta_r{\odot}}$.

For the other direction, assume that $b$ satisfies $\textsl{L}_{X \overline{\overline{Y}}}^{\theta_r{\odot}}$ for every $k$-long vector~$\overline{\overline{Y}}$. If $b$ also satisfies the left-hand side of $\textsl{B}_{X}^{\theta_r{\odot}}$ then $b$ necessarily satisfies the left-hand side of $\textsl{L}_{X \overline{\overline{B}}}^{\theta_r{\odot}}$. Thus, $b$ must also satisfy the right-hand side of $\textsl{L}_{X \overline{\overline{B}}}^{\theta_r{\odot}}$, and this implies that $R_X^{\theta_r{\odot}}\cap\overline{\overline{B}}\neq\varnothing$. Therefore, there exists $\overline{x}\in R_X^{\theta_r{\odot}}$ such that $\overline{\theta}(\overline{x})$ extends $\overline{\overline{B}}$, that is $\overline{\theta}(\overline{x})=\overline{\overline{B}}$, and so $b$ satisfies the disjunct $\textsl{V}(\varphi_1,\dots,\varphi_k \, ;\overline{\theta}(\overline{x}))$ on the right-hand side of $\textsl{B}_{X}^{\theta_r{\odot}}$.
\qed
\end{proof}

\begin{remark}\label{rem:lesscuts}
One should note that, according to the above proof, it would suffice to consider the statements $\textsl{L}_{X \overline{\overline{Y}}}^{\theta_r{\odot}}$ where $\overline{\overline{Y}}$ is a total vector. However, we shall see at the end of Subsection~\ref{Complexity} that the additional statements will allow, within our cut-based tableau systems, for the development of even more economical derivation trees.
\end{remark}

The following definition should be contrasted to the earlier Def.~\ref{tabstatements}.

\begin{definition}\label{bivstatements-cutbased}
The set $\mathcal{B}_{\mathcal{T}}^{\textsf{lin}}(\mathcal{L},\overline{\theta})$ of \textsl{linear bivalent statements} associated to~$\mathcal{L}$, fixed a separating sequence $\overline{\theta}=\langle\theta_r\rangle_{r=0}^s$, is formed by all instances of:
\begin{itemize}
\item $\textsl{U}\langle\overline{Y}\rangle$, for each minimal unobtainable partial binary print $\overline{Y}$;  
\item $\textsl{L}_{X\overline{\overline{Y}}}^{\theta_r{\odot}}$, for each $X\in\{F,T\}$, $0\leq r\leq s$, ${\odot}\in\Sigma_k$ and each $k$-long vector~$\overline{\overline{Y}}$ of partial binary prints;\hfill ($\textsl{L}$-statements)
\item $\textsl{G}(\iota(\theta_{r_1}{\odot}_1,\theta_{r_2}{\odot}_2))$, for each intersection formula $\iota(\theta_{r_1}{\odot}_1,\theta_{r_2}{\odot}_2)$ with $0\leq r_1,r_2\leq s$ and ${\odot}_1,{\odot}_2\in\Sigma$.
\end{itemize}
\end{definition}

In what follows, recall that $\mathsf{Sem}_2$ is the bivalent semantics of the $n$-valued logic $\mathcal{L}$ produced by Def.~\ref{dyadic}.

\begin{proposition}\label{ok-cutbased}
$\mathsf{Sem}_2$ is the set of all bivaluations that satisfy $\mathcal{B}_{\mathcal{T}}^{\textsf{lin}}(\mathcal{L},\overline{\theta})$.
\end{proposition}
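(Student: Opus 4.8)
The plan is to reduce this claim to Proposition~\ref{ok}, which already identifies $\mathsf{Sem}_2$ with the set of bivaluations satisfying the \emph{branching} collection $\mathcal{B}_\mathcal{T}(\mathcal{L},\overline{\theta})$, by showing that the two collections of statements cut out exactly the same class of bivaluations. The technical engine for this is Lemma~\ref{okpartial-cutbased}, which has already done the real work.

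First I would compare the two definitions component-by-component. By Definition~\ref{tabstatements}, the set $\mathcal{B}_\mathcal{T}(\mathcal{L},\overline{\theta})$ consists of the $\textsl{U\/}$-statements (one for each minimal unobtainable partial binary print), the $\textsl{B}$-statements $\textsl{B}_X^{\theta_r{\odot}}$ (one for each $X\in\{F,T\}$, $0\leq r\leq s$ and ${\odot}\in\Sigma$), and the $\textsl{G}$-statements (one for each intersection formula). By Definition~\ref{bivstatements-cutbased}, the set $\mathcal{B}_{\mathcal{T}}^{\textsf{lin}}(\mathcal{L},\overline{\theta})$ contains \emph{the very same} $\textsl{U\/}$-statements and $\textsl{G}$-statements, the sole difference being that each behaviour statement $\textsl{B}_X^{\theta_r{\odot}}$ is now replaced by the entire family of linear statements $\textsl{L}_{X\overline{\overline{Y}}}^{\theta_r{\odot}}$ ranging over all $k$-long vectors $\overline{\overline{Y}}$ of partial binary prints (where ${\odot}\in\Sigma_k$). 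So the two sets agree except on the behaviour statements.

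Next I would invoke Lemma~\ref{okpartial-cutbased} pointwise in $b$: for every fixed $X$, $r$ and ${\odot}\in\Sigma_k$, a bivaluation $b$ satisfies $\textsl{B}_X^{\theta_r{\odot}}$ if and only if it satisfies $\textsl{L}_{X\overline{\overline{Y}}}^{\theta_r{\odot}}$ for every $k$-long vector $\overline{\overline{Y}}$. Taking the conjunction over all the indexing triples $\langle X,r,{\odot}\rangle$, this yields that $b$ satisfies all the $\textsl{B}$-statements of $\mathcal{B}_\mathcal{T}(\mathcal{L},\overline{\theta})$ exactly when it satisfies all the $\textsl{L}$-statements of $\mathcal{B}_{\mathcal{T}}^{\textsf{lin}}(\mathcal{L},\overline{\theta})$. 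Since the remaining ($\textsl{U\/}$- and $\textsl{G}$-) statements are literally common to both sets, a bivaluation satisfies $\mathcal{B}_\mathcal{T}(\mathcal{L},\overline{\theta})$ if and only if it satisfies $\mathcal{B}_{\mathcal{T}}^{\textsf{lin}}(\mathcal{L},\overline{\theta})$, i.e.\ the two satisfaction classes coincide. Composing this equality with Proposition~\ref{ok} gives the desired identification with $\mathsf{Sem}_2$.

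I do not expect any genuine obstacle here, as all the content is packaged inside Lemma~\ref{okpartial-cutbased}; what remains is just the bookkeeping of matching the two definitions and quantifying the lemma uniformly over $\langle X,r,{\odot}\rangle$. The only point worth stating with care is that the equivalence in the lemma holds for each individual $b$, so it transfers directly into an equality of the two classes of models rather than merely a logical-consequence relation between the statement sets.
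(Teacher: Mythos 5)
Your proposal is correct and is essentially the paper's own proof: the paper disposes of this proposition with the single line ``Immediate from Prop.~\ref{ok} and Lemma~\ref{okpartial-cutbased}'', and your argument simply spells out the bookkeeping that word ``immediate'' compresses --- the two statement sets share their $\textsl{U\/}$- and $\textsl{G}$-statements, and Lemma~\ref{okpartial-cutbased}, applied pointwise to each bivaluation and each triple $\langle X,r,{\odot}\rangle$, exchanges the $\textsl{B}$-statements for the $\textsl{L}$-statements. Nothing further is needed.
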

\begin{proof}
Immediate from Prop.~\ref{ok} and Lemma~\ref{okpartial-cutbased}.\qed
\end{proof}

\begin{example}[Linear characterization of $\neg$ in \L$_3$]\label{lukascutsneg}
Back to the example of \L$_3$, separated by $\overline{\theta}=\langle \mathsf{id},\theta\rangle$, where $\theta=\lambda p.(\neg p\supset p)$, the bivalent statements $\textsl{L}_{F\overline{\overline{Y}}}^{\neg}$ are shown in Table~\ref{neg:L3}.
\begin{table}[htbp]\small
	\begin{center}
\scalebox{.63}{
\begin{tabular}{|c|lcl|}
\hline
($\textsl{L}_{F \langle\uparrow\uparrow\rangle}^{\neg}$) & $\F{\neg\varphi}$  & $\Longrightarrow$ & $\T{\theta(\varphi)}$\\
\hline
($\textsl{L}_{F \langle F\uparrow\rangle}^{\neg}$) & $\F{\neg\varphi} \quad \& \quad \F{\varphi}$ & $\Longrightarrow$ &  
           $\T{\theta(\varphi)}$\\
\hline
($\textsl{L}_{F \langle T\uparrow\rangle}^{\neg}$) & $\F{\neg\varphi} \quad \& \quad \T{\varphi}$ & $\Longrightarrow$ &  
           $\T{\theta(\varphi)}$\\
\hline
($\textsl{L}_{F \langle\uparrow F\rangle}^{\neg}$) & $\F{\neg\varphi} \quad \& \quad \F{\theta(\varphi)}$ & $\Longrightarrow$ &  
           $\divideontimes$\\
\hline
($\textsl{L}_{F \langle\uparrow T\rangle}^{\neg}$) & $\F{\neg\varphi} \quad \& \quad \T{\theta(\varphi)}$ & $\Longrightarrow$ &  
           $\top$\\
\hline
($\textsl{L}_{F \langle FF\rangle}^{\neg}$) & $\F{\neg\varphi} \quad \& \quad \F{\varphi} \quad \& \quad \F{\theta(\varphi)}$ & $\Longrightarrow$ &  
           $\divideontimes$\\
\hline
($\textsl{L}_{F \langle FT\rangle}^{\neg}$) & $\F{\neg\varphi} \quad \& \quad \F{\varphi} \quad \& \quad \T{\theta(\varphi)}$ & $\Longrightarrow$ &  
           $\top$\\
\hline
($\textsl{L}_{F \langle TF\rangle}^{\neg}$) & $\F{\neg\varphi} \quad \& \quad \T{\varphi} \quad \& \quad \F{\theta(\varphi)}$ & $\Longrightarrow$ &  
           $\divideontimes$\\
\hline
($\textsl{L}_{F \langle TT\rangle}^{\neg}$) & $\F{\neg\varphi} \quad \& \quad \T{\varphi} \quad \& \quad \T{\theta(\varphi)}$ & $\Longrightarrow$ &  
           $\top$\\
\hline
\end{tabular}
}
\end{center}
\caption{Some $\mathcal{B}_{\mathcal{T}}^{\textsf{lin}}(\textsl{\L$_3$},\langle \mathsf{id},\lambda p.(\neg p\supset p)\rangle)$ statements for $\neg$.}
\label{neg:L3}
\end{table}
\end{example}

\begin{remark}\label{cutstream}
The set of linear statements associated to a logic may often be substantially simplified, again with no danger of spoiling the result of Prop.~\ref{ok-cutbased}, nor any of the subsequent results. The rationale is not to simplify each rule \textit{per se}, but the collection of all statements $\textsl{L}_{X\overline{\overline{Y}}}^{\theta_r{\odot}}$ for fixed $X\in\{F,T\}$, $0\leq r\leq s$ and ${\odot}\in\Sigma$. Indeed, such rules may contain a lot of redundancies. Consider for instance the rule $\textsl{L}_{F \langle F\uparrow\rangle}^{\neg}$ above. Clearly, it is implied by the rule $\textsl{L}_{F \langle \uparrow\uparrow\rangle}^{\neg}$, as no new information is obtained by adding $\F{\varphi}$. Hence, $\textsl{L}_{F \langle F\uparrow\rangle}^{\neg}$ may be safely eliminated. Pick now the rule $\textsl{L}_{F \langle \uparrow F\rangle}^{\neg}$ above. Note that the right-hand side of $\textsl{L}_{F \langle \uparrow\uparrow\rangle}^{\neg}$ contradicts $\F{\theta(\varphi)}$. Again, rule $\textsl{L}_{F \langle \uparrow F\rangle}^{\neg}$ may be safely dispensed with.

In general, a rule $\textsl{L}_{X\overline{\overline{Z}}}^{\theta_r{\odot}}$ may be eliminated whenever there exists a distinct vector $\overline{\overline{Y}}$ such that $\overline{\overline{Z}}$ extends $\overline{\overline{Y}}$ and $\overline{\overline{M}}(R_X^{\theta_r{\odot}},\overline{\overline{Y}})$ extends $\overline{\overline{M}}(R_X^{\theta_r{\odot}},\overline{\overline{Z}})$.
Further, a rule $\textsl{L}_{X\overline{\overline{Z}}}^{\theta_r{\odot}}$ where $\X{}\theta_r({\odot}(\varphi_1,\dots,\varphi_k))$ is not satisfied by $\overline{\overline{Z}}$ may be eliminated when there exists a distinct vector $\overline{\overline{Y}}$ such that $\overline{\overline{Z}}$ extends $\overline{\overline{Y}}$ but $\overline{\overline{Z}}$ is incompatible with $\overline{\overline{M}}(R_X^{\theta_r{\odot}},\overline{\overline{Y}})$.
Such a simplification strategy may be systematically applied to reach a streamlined (shorter but equivalent) version of the set of linear statements. Again, we should note that none of the results in this paper depends on (or is affected by) performing such a simplification.

Back to the example in Table~\ref{neg:L3}, it is easy to check that the only strictly necessary rule is $\textsl{L}_{F \langle \uparrow\uparrow\rangle}^{\neg}$.
\end{remark}

\begin{example}[Linear characterization of \L$_3$]\label{lukascuts}
The (streamlined) set of linear bivalent statements characterizing \L$_3$, separated by $\overline{\theta}=\langle \mathsf{id},\theta\rangle$ where $\theta=\lambda p.(\neg p\supset p)$, includes $\textsl{U}\langle T,F\rangle$ plus the statements shown in Table~\ref{linbiv:L3}.
\end{example}

\begin{table}[bp]\small
	\begin{center}
\scalebox{.63}{
\begin{tabular}{|c|lcl|}
	\hline
	  ($\textsl{L}_{F \langle \uparrow\uparrow\rangle}^{\neg}$) & $\F{\neg\varphi}$  & $\Longrightarrow$ & $\T{\theta(\varphi)}$\\
\hline
	  ($\textsl{L}_{T \langle \uparrow\uparrow\rangle}^{\neg}$) & $\T{\neg\varphi}$ & $\Longrightarrow$ & $\F{\varphi} \quad \& \quad \F{\C(\varphi)}$\\
\hline
	  ($\textsl{L}_{F \langle\uparrow\uparrow\rangle}^{\theta\neg}$) & $\F{\C(\neg\varphi)}$ & $\Longrightarrow$ & $\T{\varphi} \quad \& \quad \T{\C(\varphi)}$\\
\hline	  
	  ($\textsl{L}_{T \langle \uparrow\uparrow\rangle}^{\theta\neg}$) & $\T{\C(\neg\varphi)}$ & $\Longrightarrow$ & $\F{\varphi}$\\
	\hline
	  ($\textsl{L}_{F \langle\uparrow\uparrow\uparrow\uparrow\rangle}^{\supset}$) & $\F{\varphi \supset \psi}$ & $\Longrightarrow$ & $\T{\C(\varphi)} \quad \& \quad \F{\psi}$\\
\hline
	  ($\textsl{L}_{F \langle\uparrow\uparrow\uparrow T\rangle}^{\supset}$) & $\F{\varphi \supset \psi} \quad \& \quad \T{\C(\psi)}$ & $\Longrightarrow$ & $\T{\varphi} \quad \& \quad \T{\C(\varphi)} \quad \& \quad \F{\psi}$\\
	  \hline\
	  ($\textsl{L}_{F \langle F\uparrow\uparrow\uparrow\rangle}^{\supset}$) & $\F{\varphi \supset \psi} \quad \& \quad \F{\varphi}$ & $\Longrightarrow$ & $\T{\C(\varphi)}\quad \& \quad \F{\psi}\quad \& \quad \F{\C(\psi)}$\\
	  \hline
	  ($\textsl{L}_{T \langle \uparrow\uparrow\uparrow F\rangle}^{\supset}$) & $\T{\varphi \supset \psi} \quad \& \quad \F{\C(\psi)}$ & $\Longrightarrow$ & $\F{\varphi}\quad \& \quad \F{\C(\varphi)}\quad \& \quad \F{\psi}$\\ 
	  \hline
	  ($\textsl{L}_{T \langle \uparrow\uparrow F \uparrow\rangle}^{\supset}$) & $\T{\varphi \supset \psi} \quad \& \quad \F{\psi}$ & $\Longrightarrow$ & $\F{\varphi}$\\ 
	  \hline
	  ($\textsl{L}_{T \langle \uparrow\uparrow T \uparrow\rangle}^{\supset}$) & $\T{\varphi \supset \psi} \quad \& \quad \T{\psi}$ & $\Longrightarrow$ & $\T{\C(\psi)}$\\
	  \hline
	  ($\textsl{L}_{T \langle \uparrow F \uparrow\uparrow\rangle}^{\supset}$) & $\T{\varphi \supset \psi} \quad \& \quad \F{\C(\varphi)}$ & $\Longrightarrow$ & $\F{\varphi}$\\
	  \hline
	  ($\textsl{L}_{T \langle \uparrow F T\uparrow\rangle}^{\supset}$) & $\T{\varphi \supset \psi} \quad \& \quad \F{\C(\varphi)} \quad \& \quad \T{\psi}$ & $\Longrightarrow$ & $\F{\varphi} \quad \& \quad \T{\C(\psi)}$\\
	  \hline
	  ($\textsl{L}_{T \langle \uparrow T \uparrow\uparrow\rangle}^{\supset}$) & $\T{\varphi \supset \psi} \quad \& \quad \T{\C(\varphi)}$ & $\Longrightarrow$ & $\T{\C(\psi)}$\\
\hline	  
	  ($\textsl{L}_{T \langle \uparrow T F \uparrow\rangle}^{\supset}$) & $\T{\varphi \supset \psi} \quad \& \quad \T{\C(\varphi)} \quad \& \quad \F{\psi}$ & $\Longrightarrow$ & $\F{\varphi} \quad \& \quad \T{\C(\psi)}$\\
\hline	  
	  ($\textsl{L}_{T \langle T\uparrow\uparrow\uparrow\rangle}^{\supset}$) & $\T{\varphi \supset \psi} \quad \& \quad \T{\varphi}$ & $\Longrightarrow$ & $\T{\C(\varphi)} \quad \& \quad \T{\psi} \quad \& \quad \T{\C(\psi)}$\\
	\hline
	  ($\textsl{L}_{F \langle \uparrow\uparrow\uparrow\uparrow\rangle}^{\theta\supset}$) & $\F{\C(\varphi \supset \psi)}$ & $\Longrightarrow$ & $\T{\varphi} \quad \& \quad \T{\C(\varphi)} \quad \& \quad \F{\psi} \quad \& \quad \F{\C(\psi)}$\\
\hline	  
	  ($\textsl{L}_{T \langle \uparrow\uparrow\uparrow F\rangle}^{\theta\supset}$) & $\T{\C(\varphi \supset \psi)} \quad \& \quad \F{\C(\psi)}$ & $\Longrightarrow$ & $\F{\varphi} \quad \& \quad \F{\psi}$\\
\hline	  
	  ($\textsl{L}_{T \langle \uparrow\uparrow T \uparrow\rangle}^{\theta\supset}$) & $\T{\C(\varphi \supset \psi)} \quad \& \quad \T{\psi}$ & $\Longrightarrow$ & $\T{\C(\psi)}$\\
\hline	  
	  ($\textsl{L}_{T \langle \uparrow F \uparrow\uparrow\rangle}^{\theta\supset}$) & $\T{\C(\varphi \supset \psi)} \quad \& \quad \F{\C(\varphi)}$ & $\Longrightarrow$ & $\F{\varphi}$\\
\hline	  
	  ($\textsl{L}_{T \langle \uparrow F T \uparrow\rangle}^{\theta\supset}$) & $\T{\C(\varphi \supset \psi)} \quad \& \quad \F{\C(\varphi)} \quad \& \quad \T{\psi}$ & $\Longrightarrow$ & $\F{\varphi} \quad \& \quad \T{\C(\psi)}$\\
\hline	  
	  ($\textsl{L}_{T \langle T\uparrow\uparrow\uparrow\rangle}^{\theta\supset}$) & $\T{\C(\varphi \supset \psi)} \quad \& \quad \T{\varphi}$ & $\Longrightarrow$ & $\T{\C(\varphi)} \quad \& \quad \T{\C(\psi)}$\\
\hline
	  ($\textsl{L}_{F \langle \uparrow\uparrow\uparrow\uparrow\rangle}^{\vee}$) & $\F{\varphi \vee \psi}$ & $\Longrightarrow$ & $\F{\varphi} \quad \& \quad \F{\psi}$\\
\hline	  
	  ($\textsl{L}_{T \langle \uparrow\uparrow\uparrow F\rangle}^{\vee}$) & $\T{\varphi \vee \psi} \quad \& \quad \F{\C(\psi)}$ & $\Longrightarrow$ & $\T{\varphi} \quad \& \quad \T{\C(\varphi)} \quad \& \quad \F{\psi}$\\
\hline	  
	  ($\textsl{L}_{T \langle \uparrow\uparrow F \uparrow\rangle}^{\vee}$) & $\T{\varphi \vee \psi} \quad \& \quad \F{\psi}$ & $\Longrightarrow$ & $\T{\varphi} \quad \& \quad \T{\C(\varphi)}$\\
\hline	  
	  ($\textsl{L}_{T \langle \uparrow\uparrow T \uparrow\rangle}^{\vee}$) & $\T{\varphi \vee \psi} \quad \& \quad \T{\psi}$ & $\Longrightarrow$ & $\T{\C(\psi)}$\\
\hline	  
	  ($\textsl{L}_{T \langle \uparrow F \uparrow \uparrow\rangle}^{\vee}$) & $\T{\varphi \vee \psi} \quad \& \quad \F{\C(\varphi)}$ & $\Longrightarrow$ & $\F{\varphi} \quad \& \quad \T{\psi} \quad \& \quad \T{\C(\psi)}$\\
\hline	  
	  ($\textsl{L}_{T \langle F \uparrow\uparrow \uparrow\rangle}^{\vee}$) & $\T{\varphi \vee \psi} \quad \& \quad \F{\varphi}$ & $\Longrightarrow$ & $\T{\psi} \quad \& \quad \T{\C(\psi)}$\\
\hline	  
	  ($\textsl{L}_{T \langle T \uparrow\uparrow \uparrow\rangle}^{\vee}$) & $\T{\varphi \vee \psi} \quad \& \quad \T{\varphi}$ & $\Longrightarrow$ & $\T{\C(\varphi)}$\\
\hline	  
	  ($\textsl{L}_{T \langle T \uparrow T \uparrow\rangle}^{\vee}$) & $\T{\varphi \vee \psi} \quad \& \quad \T{\varphi} \quad \& \quad \T{\psi}$ & $\Longrightarrow$ & $\T{\C(\varphi)} \quad \& \quad \T{\C(\psi)}$\\
\hline	  
	  ($\textsl{L}_{F \langle \uparrow\uparrow \uparrow \uparrow\rangle}^{\C\vee}$) & $\F{\C(\varphi \vee \psi)}$ & $\Longrightarrow$ & $\F{\varphi} \quad \& \quad \F{\C(\varphi)} \quad \& \quad \F{\psi} \quad \& \quad \F{\C(\psi)}$\\
\hline
	  ($\textsl{L}_{T \langle \uparrow\uparrow \uparrow F \rangle}^{\C\vee}$) & $\T{\C(\varphi \vee \psi)} \quad \& \quad \F{\C(\psi)}$ & $\Longrightarrow$ & $\T{\C(\varphi)} \quad \& \quad \F{\psi}$\\
\hline
	  ($\textsl{L}_{T \langle \uparrow\uparrow T \uparrow \rangle}^{\C\vee}$) & $\T{\C(\varphi \vee \psi)} \quad \& \quad \T{\psi}$ & $\Longrightarrow$ & $\T{\C(\psi)}$\\
\hline
	  ($\textsl{L}_{T \langle \uparrow F \uparrow \uparrow\rangle}^{\C\vee}$) & $\T{\C(\varphi \vee \psi)} \quad \& \quad \F{\C(\varphi)}$ & $\Longrightarrow$ & $\F{\varphi} \quad \& \quad \T{\C(\psi)}$\\
\hline
	  ($\textsl{L}_{T \langle T \uparrow\uparrow \uparrow\rangle}^{\C\vee}$) & $\T{\C(\varphi \vee \psi)} \quad \& \quad \T{\varphi}$ & $\Longrightarrow$ & $\T{\C(\varphi)}$\\
\hline
	  ($\textsl{L}_{T \langle T \uparrow T \uparrow\rangle}^{\C\vee}$) & $\T{\C(\varphi \vee \psi)} \quad \& \quad \T{\varphi} \quad \& \quad \T{\psi}$ & $\Longrightarrow$ & $\T{\C(\varphi)} \quad \& \quad \T{\C(\psi)}$\\
\hline
	  ($\textsl{L}_{F \langle \uparrow\uparrow \uparrow F \rangle}^{\wedge}$) & $\F{\varphi \wedge \psi} \quad \& \quad \F{\C(\psi)}$ & $\Longrightarrow$ & $\F{\psi}$\\
\hline
	  ($\textsl{L}_{F \langle \uparrow \uparrow T \uparrow \rangle}^{\wedge}$) & $\F{\varphi \wedge \psi} \quad \& \quad \T{\psi}$ & $\Longrightarrow$ & $\F{\varphi} \quad \& \quad \T{\C(\psi)}$\\
\hline
	  ($\textsl{L}_{F \langle \uparrow F \uparrow \uparrow \rangle}^{\wedge}$) & $\F{\varphi \wedge \psi} \quad \& \quad \F{\C(\varphi)}$ & $\Longrightarrow$ & $\F{\varphi}$\\
\hline 
	  ($\textsl{L}_{F \langle \uparrow F \uparrow F \rangle}^{\wedge}$) & $\F{\varphi \wedge \psi} \quad \& \quad \F{\C(\varphi)} \quad \& \quad \F{\C(\psi)}$ & $\Longrightarrow$ & $\F{\varphi} \quad \& \quad \F{\psi}$\\
\hline
	  ($\textsl{L}_{F \langle T \uparrow \uparrow \uparrow \rangle}^{\wedge}$) & $\F{\varphi \wedge \psi} \quad \& \quad \T{\varphi}$ & $\Longrightarrow$ & $\T{\C(\varphi)} \quad \& \quad \F{\psi}$\\
\hline
	  ($\textsl{L}_{T \langle \uparrow \uparrow \uparrow \uparrow \rangle}^{\wedge}$) & $\T{\varphi \wedge \psi}$ & $\Longrightarrow$ & $\T{\varphi} \quad \& \quad \T{\C(\varphi)} \quad \& \quad \T{\psi} \quad \& \quad \T{\C(\psi)}$\\
\hline
	  ($\textsl{L}_{F \langle \uparrow\uparrow \uparrow F\rangle}^{\C\wedge}$) & $\F{\C(\varphi \wedge \psi)} \quad \& \quad \F{\C(\psi)}$ & $\Longrightarrow$ & $\F{\psi}$\\
\hline
	  ($\textsl{L}_{F \langle \uparrow\uparrow \uparrow T\rangle}^{\C\wedge}$) & $\F{\C(\varphi \wedge \psi)} \quad \& \quad \T{\C(\psi)}$ & $\Longrightarrow$ & $\F{\varphi} \quad \& \quad \F{\C(\varphi)}$\\
\hline
	  ($\textsl{L}_{F \langle \uparrow\uparrow T \uparrow\rangle}^{\C\wedge}$) & $\F{\C(\varphi \wedge \psi)} \quad \& \quad \T{\psi}$ & $\Longrightarrow$ & $\F{\varphi} \quad \& \quad \F{\C(\varphi)} \quad \& \quad \T{\C(\psi)}$\\
\hline
	  ($\textsl{L}_{F \langle \uparrow F \uparrow \uparrow\rangle}^{\C\wedge}$) & $\F{\C(\varphi \wedge \psi)} \quad \& \quad \F{\C(\varphi)}$ & $\Longrightarrow$ & $\F{\varphi}$\\
\hline
	  ($\textsl{L}_{F \langle \uparrow F \uparrow F\rangle}^{\C\wedge}$) & $\F{\C(\varphi \wedge \psi)} \quad \& \quad \F{\C(\varphi)} \quad \& \quad \F{\C(\psi)}$ & $\Longrightarrow$ & $\F{\varphi} \quad \& \quad \F{\psi}$\\
\hline
	  ($\textsl{L}_{F \langle \uparrow T \uparrow \uparrow\rangle}^{\C\wedge}$) & $\F{\C(\varphi \wedge \psi)} \quad \& \quad \T{\C(\varphi)}$ & $\Longrightarrow$ & $\F{\psi} \quad \& \quad \F{\C(\psi)}$\\
\hline
	  ($\textsl{L}_{F \langle T \uparrow\uparrow \uparrow\rangle}^{\C\wedge}$) & $\F{\C(\varphi \wedge \psi)} \quad \& \quad \T{\varphi}$ & $\Longrightarrow$ & $\T{\C(\varphi)} \quad \& \quad \F{\psi} \quad \& \quad \F{\C(\psi)}$\\
\hline
	  ($\textsl{L}_{T \langle \uparrow \uparrow \uparrow \uparrow\rangle}^{\C\wedge}$) & $\T{\C(\varphi \wedge \psi)}$ & $\Longrightarrow$ & $\T{\C(\varphi)} \quad \& \quad \T{\C(\psi)}$\\
\hline
\end{tabular}
}
\end{center}
\caption{The streamlined $\textsl{L}$-statements in $\mathcal{B}_{\mathcal{T}}^{\textsf{lin}}(\textsl{\L$_3$},\langle \mathsf{id},\lambda p.(\neg p\supset p)\rangle)$.}
\label{linbiv:L3}
\end{table}

\subsection{Cut-based systems} \label{subsec:cut-based-rules}

\noindent
We still miss the basic ingredient of cut-based tableau systems, namely a statement capturing the classical principle of excluded middle (this was carefully discussed under the appellation `Principle of Bivalence' in~\cite{dag:monograph:90}):

\begin{equation}
{\quad\Longrightarrow\quad} \Fs \varphi \; \mid\mid \; \Ts \varphi. \tag{$\textsl{CUT\/}$}\label{eq:cut}
\end{equation}

\begin{definition}\label{tableau-cutbased}
The \textsl{classic-like cut-based tableau system ${\mathcal{T}_{\mathsf{cut}}}(\mathcal{L},\overline{\theta})$ associated to $\mathcal{L}$ (and~$\overline{\theta}$)} is composed of the rule $\mathcal{R}(\textsl{CUT\/})$, the rules $\mathcal{R}(S)$ for $S\in\mathcal{B}_{\mathcal{T}}^{\textsf{lin}}(\mathcal{L},\overline{\theta})$ and $\mathcal{R}(\textsl{ABS\/})$.
In such system, fixed a given branch of a given tableau, and given some formula~$\varphi$, an application of $\cut$ over~$\varphi$ in that branch is called \textsl{analytic} in case~$\varphi$ is a generalized subformula of some formula already occurring in that very branch.
\end{definition}

\begin{example}[A cut-based tableau system for $\lthree$]
A cut-based tableau system ${\mathcal{T}_{\mathsf{cut}}}(\lthree,\overline{\theta})$, for $\overline{\theta}=\langle \mathsf{id},\theta\rangle$, where $\theta=\lambda p.(\neg p\supset p)$, consists of the rules $\mathcal{R}(\textsl{CUT\/})$, $\mathcal{R}(\textsl{ABS\/})$, $\mathcal{R}(\textsl{U}\langle T,F\rangle)$ and a rule $\mathcal{R}(\textsl{L}_{X\overline{\overline{Y}}}^{\theta_r{\odot}})$ for each statement $\textsl{L}_{X\overline{\overline{Y}}}^{\theta_r{\odot}}$ in Table~\ref{linbiv:L3}.
\noindent
An example of a cut-based derivation in this system may be found in Fig.~\ref{fat3}.
\end{example}

It is worth remarking that for Classical Logic we obtain precisely the KE system of~\cite{dag:mon:taming,dag:handbook:99}.

We will now check soundness and completeness of our cut-based tableau system ${\mathcal{T}_{\mathsf{cut}}}(\mathcal{L},\overline{\theta})$. 
In particular, the completeness proof will show that it is possible to restrict the use of the rule $\cut$ to analytic applications only.

\begin{proposition}[Soundness]\label{sound-cut-based}
If an $n$-valued valuation 
satisfies some initial root set of classically-labeled
formulas, then it satisfies all the formulas in some open branch
of any tableau proof that originates from that root set in the system ${\mathcal{T}_{\mathsf{cut}}}(\mathcal{L},\overline{\theta})$.
\end{proposition}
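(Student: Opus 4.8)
The plan is to mirror the soundness argument for the branching system (Prop.~\ref{sound}), adapting it to the two structural novelties of the cut-based calculus: the linear $\textsl{L}$-rules replacing the branching $\textsl{B}$-rules, and the single branching rule $\mathcal{R}(\textsl{CUT\/})$. First I would fix an $n$-valued valuation $\val\in\mathsf{Sem}$ satisfying the root set and pass to the associated bivaluation $b_\val=t^{m,n}\circ\val$. By Prop.~\ref{Sred} we have $b_\val\in\mathsf{Sem}_2$, and by Prop.~\ref{ok-cutbased} this means $b_\val$ satisfies every statement in $\mathcal{B}_{\mathcal{T}}^{\textsf{lin}}(\mathcal{L},\overline{\theta})$, i.e.\ all the $\textsl{U\/}$-, $\textsl{L}$- and $\textsl{G}$-statements. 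It then remains to note that $b_\val$ also satisfies $\textsl{ABS\/}$ (its left-hand side $\Fs\varphi\,\&\,\Ts\varphi$ is satisfied by no bivaluation, since valuations are functional) and, crucially, that $b_\val$ satisfies $\textsl{CUT\/}$: as $b_\val(\varphi)\in\{F,T\}$ for every $\varphi$, one of the two disjuncts $\Fs\varphi\mid\mid\Ts\varphi$ on its right-hand side always holds.

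The key fact to extract is the same as in the branching case, once restated for the new rule shapes: whenever $b_\val$ satisfies the premises of any rule of ${\mathcal{T}_{\mathsf{cut}}}(\mathcal{L},\overline{\theta})$, it satisfies all the labeled formulas in at least one branch of the conclusion. For a linear $\textsl{L}$-rule this is immediate, as the conclusion is a single branch (a conjunction of labeled formulas) forced by the satisfied premises; for $\mathcal{R}(\textsl{CUT\/})$, whose premise is empty, the claim is exactly the bivalence observation above, with $b_\val$ selecting precisely the branch $\X{}\varphi$ for $X=b_\val(\varphi)$. In particular $b_\val$ can never satisfy the premises of a closure rule --- $\mathcal{R}(\textsl{ABS\/})$, the $\mathcal{R}(\textsl{U\/}\overline{Y})$, the $\mathcal{R}(\textsl{G}(\cdot))$, and any degenerate $\textsl{L}$-rule whose right-hand side is $\divideontimes$ --- since those have $\divideontimes$ as their sole conclusion and hence no satisfiable branch.

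With this in hand I would conclude by induction on the construction of the tableau, exactly as in Prop.~\ref{sound}: the root branch is satisfied by $b_\val$, and each rule application preserves the existence of a branch all of whose formulas are satisfied by $b_\val$ --- the linear rules merely extend such a branch by further satisfied formulas, while an application of $\mathcal{R}(\textsl{CUT\/})$ splits it into two, exactly one of which remains satisfied. Such a satisfied branch can never be closed, for closing it would require applying a closure rule, whose premises $b_\val$ cannot satisfy; hence it stays open throughout the development, which is what we want.

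The one genuinely new point compared with the branching soundness proof --- and therefore the step deserving most care --- is the treatment of $\mathcal{R}(\textsl{CUT\/})$, the sole branching rule here: its soundness rests not on the behavior statements of $\mathcal{L}$ but on the Principle of Bivalence itself, that is, on the totality and two-valuedness of $b_\val$. Everything else is a routine transcription of the earlier argument, now reading \emph{linear rule} for \emph{$\textsl{B}$-rule} and recording that each $\textsl{L}$-statement is a valid consequence for $b_\val$ via Lemma~\ref{okpartial-cutbased}.
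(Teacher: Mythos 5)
Your proposal is correct and follows essentially the same route as the paper's own proof: invoke Prop.~\ref{ok-cutbased} to get that the (bi)valuation satisfies all linear bivalent statements, observe that $\textsl{ABS\/}$ and $\textsl{CUT\/}$ are also satisfied (the latter by bivalence), and then iterate the observation that a satisfied set of premises forces a satisfied branch in the conclusion, so that the satisfied branch can never trigger a closure rule and thus remains open. The extra detail you supply --- the explicit passage to $b_\val$, the explicit induction on tableau construction, and the singling out of $\mathcal{R}(\textsl{CUT\/})$ as the only step resting on bivalence rather than on the $\textsl{L}$-statements --- is a faithful elaboration of what the paper leaves implicit.
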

\begin{proof}

Consider an $n$-valued valuation
$\val:\mathbb{S}\longrightarrow\mathbb{V}$ in $\mathsf{Sem}$. We know from Prop.~\ref{ok-cutbased} that such a valuation
satisfies all the linear bivalent statements associated to $\mathcal{L}$. Clearly, $\val$ also satisfies all instances of $\textsl{ABS\/}$ and $\textsl{CUT\/}$. It follows that if $\val$ satisfies the premises of a ${\mathcal{T}_{\mathsf{cut}}}(\mathcal{L},\overline{\theta})$ rule then $\val$ must also satisfy one of the branches in its conclusion. In particular, this means that~$\val$ cannot satisfy the premises of any closure rule.

By iterating the argument above, we conclude that if $\val$ satisfies an initial root set, then there exists a branch of the tableau proof where all the formulas are satisfied by $\val$. Since $\val$ cannot satisfy the premises of a closure rule, such a branch must be open.
\qed
\end{proof}

\begin{proposition}[Completeness]\label{complete-cut-based}
From every open branch of an exhausted tableau derived in the system ${\mathcal{T}_{\mathsf{cut}}}(\mathcal{L},\overline{\theta})$, where we allow only analytic applications of $\cut$, one may extract a valuation in $\mathsf{Sem}$ satisfying its root set.
\end{proposition}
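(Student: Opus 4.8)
The plan is to mirror the completeness proof for the branching systems, Prop.~\ref{complete}, replacing the role of the invertible branching $\textsl{B}$-rules by the linear $\textsl{L}$-rules together with the branching supplied by the analytic applications of $\cut$. First I would fix an open exhausted branch of a tableau developed in ${\mathcal{T}_{\mathsf{cut}}}(\mathcal{L},\overline{\theta})$ using only analytic cuts. Since the branch is open and exhausted, no closure rule is applicable to it: in particular it contains no instance of the premises of $\textsl{ABS\/}$, of any $\textsl{U\/}\overline{Y}$ with $\overline{Y}$ minimally unobtainable, of any $\textsl{G}(\iota(\theta_{r_1}{\odot}_1,\theta_{r_2}{\odot}_2))$, nor of any $\textsl{L}_{X\overline{\overline{Y}}}^{\theta_r{\odot}}$ whose right-hand side is $\divideontimes$. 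Exactly as in Prop.~\ref{complete}, the non-applicability of $\textsl{ABS\/}$ and of the $\textsl{U\/}$-rules lets me pick an assignment $\eval:\mathcal{A}\longrightarrow\mathcal{V}_n$ whose binary prints $\overline{\theta}(\eval(p))$ agree with the information available in the branch, and I take its homomorphic extension $\val^\eval:\mathbb{S}\longrightarrow\mathbb{V}\in\mathsf{Sem}$.

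The key structural observation, which is where analyticity of $\cut$ enters, is that in an exhausted branch every generalized subformula $\psi$ of a formula occurring in the branch is \emph{decided}, i.e.\ $\T\psi$ or $\F\psi$ occurs in the branch. Indeed, were $\psi$ a yet-undecided generalized subformula of some branch formula, then an analytic application of $\cut$ over $\psi$ would still be applicable, contradicting exhaustion. This is precisely the feature that allows the linear rules to do the work of the branching $\textsl{B}$-rules: the information that a $\textsl{B}$-rule would have produced by case analysis is already present in an exhausted branch, having been forced in by analytic cuts.

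I would then show, by induction on $\mathsf{cplx}$, that $\val^\eval$ satisfies every labeled formula in the branch. The base case (simple formulas) is handled as in Prop.~\ref{complete}: for $\X{}\theta_r(p)$ with $p$ atomic the claim holds by the choice of $\eval$; for $\X{}\theta_r(a)$ with $a\in\Sigma_0$ one notes that if the carried value were wrong then $R_X^{\theta_r a}=\varnothing$, so the $\textsl{L}$-statement for $a$ with the empty vector would be a closure rule with premise present in the branch, and similarly for an intersection formula $\X{}\iota(\theta_{r_1}{\odot}_1,\theta_{r_2}{\odot}_2)$ the statement $\textsl{G}(\iota(\theta_{r_1}{\odot}_1,\theta_{r_2}{\odot}_2))$ is a closure rule; openness thus forces $X$ to be the correct value. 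For the induction step, let $\X{}\theta_r({\odot}(\varphi_1,\dots,\varphi_k))$ be an analyzable proper $\theta_r{\odot}$-formula occurring in the branch. Each generalized immediate subformula $\theta_t(\varphi_i)$, for $0\leq t\leq s$ and $1\leq i\leq k$, is a generalized subformula of this branch formula, hence decided by the observation above; since $\mathsf{cplx}(\theta_t(\varphi_i))<\mathsf{cplx}(\theta_r{\odot}(\varphi_1,\dots,\varphi_k))$, the induction hypothesis yields that $\val^\eval$ satisfies these decisions. Writing $x_i=\val^\eval(\varphi_i)$ and collecting the decisions into a total vector $\overline{\overline{B}}$, I obtain $\overline{\overline{B}}=\overline{\theta}(\langle x_1,\dots,x_k\rangle)$. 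Now openness means $\textsl{L}_{X\overline{\overline{B}}}^{\theta_r{\odot}}$ cannot be an applicable closure rule, so $R_X^{\theta_r{\odot}}\cap\overline{\overline{B}}\neq\varnothing$; as $\overline{\overline{B}}$ is total, the only tuple whose binary-print vector extends it is $\langle x_1,\dots,x_k\rangle$ itself, whence $\langle x_1,\dots,x_k\rangle\in R_X^{\theta_r{\odot}}$, that is $t(\widehat{\theta_r}(\widehat{{\odot}}(x_1,\dots,x_k)))=X$, so $\val^\eval$ satisfies $\X{}\theta_r({\odot}(\varphi_1,\dots,\varphi_k))$. Since $\val^\eval\in\mathsf{Sem}$ then satisfies in particular the root set, this completes the argument.

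The hard part will be making precise and justifying the structural observation of the second paragraph, and in particular the claim that \emph{only} analytic cuts are ever needed: this rests on the fact that the labeled formulas appearing on the right-hand sides of the $\textsl{L}$-rules, namely the $\theta_t(\varphi_i)$, are exactly the generalized subformulas of the composite $\theta_r{\odot}(\varphi_1,\dots,\varphi_k)$, so the branching required to supply the missing information is always an analytic cut. Once this is pinned down, everything else is a reprise of Prop.~\ref{complete}, with Lemma~\ref{okpartial-cutbased} (the equivalence of the $\textsl{B}$- and $\textsl{L}$-statements) and Prop.~\ref{ok-cutbased} guaranteeing that the semantics captured is again $\mathsf{Sem}_2$.
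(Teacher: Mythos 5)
Your proposal is correct and follows essentially the same route as the paper's own proof: the same construction of the assignment from the open exhausted branch, the same induction on $\mathsf{cplx}$, and the same key observation that exhaustion under analytic applications of $\cut$ forces every immediate generalized subformula of a proper $\theta_r{\odot}$-formula in the branch to be decided, yielding a total vector $\overline{\overline{B}}$ whose associated $\textsl{L}$-rule cannot be a closure rule. The only cosmetic difference is the final step of the induction: the paper concludes via the complementary statement $\textsl{L}_{X^c\overline{\overline{B}}}^{\theta_r{\odot}}$ (whose right-hand side must then be $\divideontimes$) together with Prop.~\ref{ok-cutbased}, whereas you argue directly that $R_X^{\theta_r{\odot}}\cap\overline{\overline{B}}\neq\varnothing$ and invoke totality of $\overline{\overline{B}}$ and injectivity of binary prints; both steps are equally valid.
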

\begin{proof}
The following argument is similar to the one used the proof of Prop.~\ref{complete}.
Given an open branch of an exhausted tableau, we can consider an assignment $\val:\mathcal{A}\longrightarrow\mathcal{V}\in\mathsf{Sem}$ such that, for every $p\in\mathcal{A}$, the binary print
$\overline{\theta}({\eval(p)})=\langle X_r\rangle_{r=0}^s$ agrees with the information available in that branch, i.e.,~$\XX{r}{c}\theta_r(p)$ does not occur in the branch, for each $0\leq r\leq s$.
In Prop.~\ref{complete}, we proved that such an assignment always exists for tableaux derived in ${\mathcal{T}}(\mathcal{L},\overline{\theta})$, by observing that otherwise a closure rule could have been applied, thus contradicting the fact that the branch is open and exhausted. Since each closure rule of ${\mathcal{T}}(\mathcal{L},\overline{\theta})$ is also a rule of ${\mathcal{T}_{\mathsf{cut}}}(\mathcal{L},\overline{\theta})$, the same proof applies here.
\noindent
We need to show that $\val^\eval:\mathbb{S}\longrightarrow\mathbb{V}\in\mathsf{Sem}$, the homomorphic extension in $\mathsf{Sem}$ of the assignment~$\val$, satisfies all the signed formulas in the branch, and consequently also the root set. We proceed by induction on the formula complexity. For the base case, we refer the reader again to the proof of Prop.~\ref{complete}, which applies here without any modification. Now consider the induction step. Let $\X{}\theta_r{\odot}(\varphi_1,\dots,\varphi_k)$ be a proper $\theta_r{\odot}$-formula appearing in the branch, where $0\leq r\leq s$ and ${\odot}\in\Sigma_k$ for $k\neq 0$. 
As the branch is exhausted, all the immediate generalized subformulas of $\X{}\theta_r{\odot}(\varphi_1,\dots,\varphi_k)$ also occur in the branch, i.e.,~for $1 \le i \le k$, $0 \le t \le s$, either $\F{\theta_t(\varphi_i)}$ occurs in the branch or $\T{\theta_t(\varphi_i)}$ occurs in the branch (otherwise an analytic application of $\cut$ would be possible). Let us denote such formulas by $\X{ti}{\theta_t(\varphi_i)}$ and
let $\overline{\overline{Y}} = \langle\langle X_{ti} \rangle_{t=0}^s\rangle_{i=1}^k$ be the corresponding vector of (total) binary prints.
We observe that the right-hand side of $\textsl{L\/}_{X\overline{\overline{Y}}}^{\theta_r {\odot}}$ cannot be $\divideontimes$, otherwise an application of $\reg{\textsl{L\/}_{X\overline{\overline{Y}}}^{\theta_r {\odot}}}$ would close the branch. It follows that the right-hand side of $\textsl{L\/}_{X^c\overline{\overline{Y}}}^{\theta_r {\odot}}$ is necessarily $\divideontimes$. As~$\val^\eval$ satisfies $\textsl{L\/}_{X^c\overline{\overline{Y}}}^{\theta_r {\odot}}$, it must be the case that $\val^\eval$ does not satisfy its left-hand side.  
Clearly, we have $\mathsf{cplx}(\theta_t(\varphi_i))<\mathsf{cplx}(\theta_r{\odot}(\varphi_1,\dots,\varphi_k))$. Therefore, by induction hypothesis, $\val^\eval$ satisfies all the formulas $\X{ti}{\theta_t(\varphi_i)}$. As $\val^\eval$ does not satisfy the left-hand side of $\textsl{L\/}_{X^c\overline{\overline{Y}}}^{\theta_r {\odot}}$, we conclude that $\val^\eval(\theta_r{\odot}(\varphi_1,\dots,\varphi_k))\neq X^c$; thus~$\val^\eval$ satisfies $\X{}\theta_r{\odot}(\varphi_1,\dots,\varphi_k)$.
\qed
\end{proof}

As a lesson to be learned from the previous proof, one might now propose:

\begin{definition}[Analytic proof-strategy for cut-based systems]\label{analytic-cut-based}
When developing a tableau in ${\mathcal{T}_{\mathsf{cut}}}(\mathcal{L},\overline{\theta})$, first:
\begin{itemize}
\item apply a closure rule, if possible; or else 
\item use $\mathsf{cplx}$ to choose the most complex $\varphi$ such that $\X{}{\varphi}$ appears in an open branch of the tableau, where~$\varphi$ is an analyzable proper $\theta_r{\odot}$-formula to which $\textsl{L}$-rules of the form $\mathcal{R}(\textsl{L\/}_{X,\overline{\overline{Y}}}^{\theta_r{\odot}})$ have not yet been applied, and then:
\begin{itemize}
  \item apply $\cut$ to all the immediate generalized subformulas of $\varphi$;
  \item on each branch that develops from that, apply all the $\textsl{L}$-rules $\mathcal{R}(\textsl{L\/}_{X,\overline{\overline{Y}}}^{\theta_r{\odot}})$ that happen to be applicable.
\end{itemize}
\end{itemize}
\end{definition}

As in the cut-free case, also in the cut-based approach an \textsl{analytic tableau} is a tableau containing only analytic branches (those that are either closed or such that all applicable rules according to the analytic proof-strategy have already been applied to them).  The notion of analyticity for a cut-based tableau system extends the concept used in the cut-free case by commanding the exclusive use of \textsl{analytic cuts}, that is, cuts involving generalized subformulas of the formulas occurring in a given branch, following the proof-strategy explained above.  With that in mind, we can now prove that:

\begin{proposition}
Finite analytic tableaux exist in ${\mathcal{T}_{\mathsf{cut}}}(\mathcal{L},\overline{\theta})$ for any given finite root set.
\end{proposition}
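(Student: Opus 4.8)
The plan is to mirror the termination argument for the cut-free system (Prop.~\ref{prop:analytic}), adapting it to the compound ``macro-steps'' prescribed by the analytic proof-strategy of Def.~\ref{analytic-cut-based}, and to conclude by an appeal to König's lemma. First I would record that this strategy performs only \emph{analytic} cuts (cuts on generalized subformulas) and applies $\textsl{L}$-rules whose right-hand sides consist solely of labeled formulas of the form $\X{ti}{\theta_t(\varphi_i)}$, i.e.\ of immediate generalized subformulas. Consequently, every labeled formula that can ever occur in a tableau developed from a finite root set~$\Gamma_0$ lies in the finite set $\{F,T\}\times\mathsf{gsbf}(\Gamma_0)$. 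Since every rule of ${\mathcal{T}_{\mathsf{cut}}}(\mathcal{L},\overline{\theta})$ is finitely branching --- the closure rules and the $\textsl{L}$-rules are linear, and $\cut$ is binary --- the derivation tree is finitely branching, so by König's lemma it suffices to prove that no branch can be infinite.

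For the per-branch argument I would attach to each branch the multiset $\mathcal{M}$ of the values $\mathsf{cplx}(\varphi)$ ranging over all labeled formulas $\X{}{\varphi}$ currently in that branch such that~$\varphi$ is an analyzable proper $\theta_r{\odot}$-formula whose macro-step has not yet been carried out on the branch, and I would compare such multisets by the well-founded Dershowitz--Manna ordering over~$\mathbb{N}$. The crucial observation is that, by Def.~\ref{cplx}, every immediate generalized subformula $\theta_t(\varphi_i)$ of a proper $\theta_r{\odot}$-formula~$\varphi$ of complexity $i=\mathsf{cplx}(\varphi)$ satisfies $\mathsf{cplx}(\theta_t(\varphi_i))<i$. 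Hence a macro-step applied to a maximal-complexity active formula~$\varphi$ removes the entry~$i$ from~$\mathcal{M}$ (as~$\varphi$ becomes processed), while the cuts on the immediate generalized subformulas and the ensuing $\textsl{L}$-rule applications introduce only labeled formulas of complexity strictly below~$i$; thus on each of the finitely many branches produced by the macro-step the multiset $\mathcal{M}$ strictly decreases. A closure-rule application simply terminates the branch. As the multiset ordering is well-founded, each branch admits only finitely many macro-steps and is therefore finite, and the finiteness of the whole tableau follows from König's lemma.

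The main obstacle --- and the reason why a single \emph{global} triple in the exact style of Prop.~\ref{prop:analytic} is delicate here --- is precisely the branching caused by $\cut$: applying cut to all the immediate generalized subformulas of~$\varphi$ duplicates any other active formula of the current maximal complexity across the newly created branches, so a naive global count of ``(open branch, unprocessed formula)'' pairs may \emph{increase} rather than decrease. I would sidestep this by arguing \emph{locally}, on each individual branch, where duplication across sibling branches is irrelevant and the multiset measure decreases cleanly, recovering global finiteness from finite branching via König's lemma. The only remaining routine point to verify is that the cut formulas and the $\textsl{L}$-rule conclusions are genuine generalized subformulas of~$\varphi$, so that they stay within $\mathsf{gsbf}(\Gamma_0)$ and have complexity $<i$; this is immediate from Def.~\ref{bivstatements-cutbased} and the shape of the statements $\textsl{L}_{X\overline{\overline{Y}}}^{\theta_r{\odot}}$.
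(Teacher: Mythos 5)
Your proof is correct, but it follows a genuinely different route from the paper's. The paper transplants the argument of Prop.~\ref{prop:analytic} wholesale: it attaches to the \emph{whole} tableau a lexicographic triple $\langle i,j,k\rangle$, where $i$ is the maximal complexity of a formula in an open branch whose $\textsl{L}$-rules are still pending, $j$ is the number of pending formulas of complexity~$i$ in open branches, and $k$ is the number of open branches, and asserts that each closure-rule application or macro-step of Def.~\ref{analytic-cut-based} decreases $\langle i,j,k\rangle$ lexicographically. Your worry about exactly this kind of global count is well founded: if $j$ counts pending branch--formula \emph{occurrences}, then a macro-step on~$\varphi$ performed on a branch that also carries a second pending formula~$\psi$ with $\mathsf{cplx}(\psi)=i$ replicates~$\psi$ over all the branches created by the cuts, yielding $i'=i$, $j'\geq j$ and $k'>k$ --- no decrease; the paper's measure survives only under a more careful bookkeeping (e.g.\ counting distinct pending formulas, together with the convention that a macro-step processes the chosen formula on every open branch on which it is pending). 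Your localization --- a per-branch multiset of complexities under the Dershowitz--Manna ordering, which strictly decreases at each macro-step because the cut formulas and the $\textsl{L}$-rule conclusions are immediate generalized subformulas and hence of strictly smaller $\mathsf{cplx}$ by Def.~\ref{cplx}, combined with K\"onig's lemma (applicable since every rule, including $\cut$, is finitely branching) --- is immune to this duplication, since sibling branches never interact in the measure. What the paper's route buys, once its bookkeeping is made precise, is a single globally decreasing measure with no appeal to compactness of the tree; what your route buys is robustness, at the modest cost of the (entirely standard) K\"onig step. One inessential remark: the containment of all occurring labeled formulas in the finite set $\{F,T\}\times\mathsf{gsbf}(\Gamma_0)$, while true and reassuring, is not actually needed --- finite branching follows from the shape of the rules alone, and branchwise termination from the multiset decrease.
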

\begin{proof}
As in the corresponding proof of Prop.~\ref{prop:analytic} for branching tableaux, we may associate the triple $\langle i,j,k\rangle\in\mathbb{N}\times\mathbb{N}\times\mathbb{N}$ to each tableau, where~$i$ is the maximum complexity of a formula occurring in an open branch of the tableau whose corresponding $\textsl{L}$-rules have not yet been applied, $j>0$ is the number of formulas in open branches of the tableau that have complexity~$i$ and whose corresponding $\textsl{L}$-rules have not yet been applied, and $k$ is the number of open branches of the tableau. 
After each application of a closure rule or after each sequence of applications corresponding to the second option in Def.~\ref{analytic-cut-based}, the procedure leads to a tableau whose associated triple $\langle i',j',k'\rangle$ is such that either $i'<i$, or $i'=i$ but $j'<j$, or $i'=i$ and $j'=j$ but $k'<k$. 

In the application of closure rules, the same arguments used in Prop.~\ref{prop:analytic} go through. 
Otherwise, note that after applying $\cut$ on all the immediate subformulas of an analyzable proper $\theta_r{\odot}$-formula and then applying all the corresponding $\textsl{L\/}$-rules, we either get $i'<i$, or $i'=i$ but $j'<j$.
\qed
\end{proof}

The following result may be proved as in Prop.~\ref{complete-cut-based}.

\begin{proposition}[Completeness by analyticity]\label{analytic-complete-cut-based}
From every open branch of an analytic tableau of ${\mathcal{T}_{\mathsf{cut}}}(\mathcal{L},\overline{\theta})$ one may extract a valuation in $\mathsf{Sem}$ satisfying its root set.
\end{proposition}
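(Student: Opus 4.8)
The plan is to observe that the completeness argument of Prop.~\ref{complete-cut-based} rests on the analysed branch enjoying only two structural features, both of which are secured, for analytic tableaux, by the proof-strategy of Def.~\ref{analytic-cut-based}. The first feature is that no closure rule is applicable to the open branch; the second is that for every proper $\theta_r{\odot}$-formula $\X{}\theta_r{\odot}(\varphi_1,\dots,\varphi_k)$ occurring in the branch, each of its immediate generalized subformulas $\theta_t(\varphi_i)$, with $0\leq t\leq s$ and $1\leq i\leq k$, is \emph{decided} in the branch, i.e.\ one of $\F\theta_t(\varphi_i)$ or $\T\theta_t(\varphi_i)$ occurs there. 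I would therefore first check that an open branch of an analytic cut-based tableau meets these two conditions, and then invoke the transcription of the earlier proof.

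The first condition is immediate from the notion of analyticity: the first clause of the strategy commands the application of a closure rule whenever possible, so an analytic open branch contains the premises of no closure rule. For the second condition I would appeal to the second clause of Def.~\ref{analytic-cut-based}: once the strategy has been carried out to completion on an open branch, every analyzable proper $\theta_r{\odot}$-formula appearing there has had its associated $\textsl{L}$-rules applied, which by the strategy presupposes that $\cut$ was first applied to each of its immediate generalized subformulas. Each such application of $\cut$ is analytic (its principal formula being a generalized subformula of a formula already in the branch) and splits the branch into one carrying $\F\theta_t(\varphi_i)$ and one carrying $\T\theta_t(\varphi_i)$; since the branch under consideration is open, it inherits exactly one of these labels, so each $\theta_t(\varphi_i)$ is indeed decided.

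With both conditions established, the construction in the proof of Prop.~\ref{complete-cut-based} transfers verbatim. I would select an assignment $\eval:\mathcal{A}\longrightarrow\mathcal{V}_n$ whose binary prints agree with the information in the branch — such an assignment exists precisely because, no closure rule being applicable, the partial binary print built from the branch is obtainable, exactly as argued there — and then prove by induction on $\mathsf{cplx}$ that its homomorphic extension $\val^\eval\in\mathsf{Sem}$ satisfies every labeled formula in the branch. The base case reuses the three subcases of Prop.~\ref{complete} (for atomic variables, sentential constants, and intersection formulas, the last two resting on the corresponding closure rules being inapplicable), while the induction step uses that the right-hand side of $\textsl{L}_{X^c\overline{\overline{Y}}}^{\theta_r{\odot}}$ is $\divideontimes$ (since the complementary $\textsl{L}_{X\overline{\overline{Y}}}^{\theta_r{\odot}}$ does not close the branch), whence $\val^\eval$ cannot satisfy its left-hand side and therefore $\val^\eval(\theta_r{\odot}(\varphi_1,\dots,\varphi_k))\neq X^c$.

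The one point demanding genuine care, and hence the main obstacle, is the second structural condition: one must confirm that the analytic proof-strategy really forces $\cut$ to be applied to \emph{all} immediate generalized subformulas of \emph{every} proper $\theta_r{\odot}$-formula occurring in an open branch, so that the induction step has the decided labels it requires. This is exactly where restricting to analytic cuts is shown to lose no completeness; once it is secured, the rest is a routine repetition of the proof of Prop.~\ref{complete-cut-based}.
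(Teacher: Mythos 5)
Your proposal is correct and matches the paper's intent exactly: the paper proves this proposition simply by remarking that it "may be proved as in Prop.~\ref{complete-cut-based}", and your argument is precisely the faithful expansion of that remark — checking that an open analytic branch has no applicable closure rule and that, by the strategy of Def.~\ref{analytic-cut-based}, every analyzable proper $\theta_r{\odot}$-formula in it has all its immediate generalized subformulas decided by the prior analytic cuts, so that the induction on $\mathsf{cplx}$ from Prop.~\ref{complete-cut-based} transfers unchanged. Your emphasis on the second structural condition as the crux (it is exactly what the exhaustedness hypothesis provided in Prop.~\ref{complete-cut-based}, and what analyticity must replace) is the right one.
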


\begin{corollary}\label{corol:analytic}
Let $\mathcal{L}$ be a finite-valued logic separated by $\overline{\theta}$.
The analytic tableau development for ${\mathcal{T}_{\mathsf{cut}}}(\mathcal{L},\overline{\theta})$ constitutes a decision procedure for $\mathcal{L}$.
\end{corollary}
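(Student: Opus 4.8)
The plan is to reduce the statement to the soundness, completeness, and termination results already established for ${\mathcal{T}_{\mathsf{cut}}}(\mathcal{L},\overline{\theta})$, in exact parallel with the reasoning given for the branching systems right after Prop.~\ref{analytic-complete}. First I would record the underlying entailment characterization: for any $\gamma_1,\dots,\gamma_k,\varphi\in\mathcal{S}$, one has $\gamma_1,\dots,\gamma_k\models_{\mathsf{Sem}}\varphi$ if and only if every exhausted analytic tableau of ${\mathcal{T}_{\mathsf{cut}}}(\mathcal{L},\overline{\theta})$ developed from the root set $\{\T{\gamma_1},\dots,\T{\gamma_k},\F{\varphi}\}$ is closed.

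For the forward direction I would argue by contraposition: if some such analytic tableau had an open branch, then by Prop.~\ref{analytic-complete-cut-based} there would be a valuation $\val\in\mathsf{Sem}$ satisfying the whole root set, so that $\val(\gamma_i)\in\Vd$ for each $i$ while $\val(\varphi)\in\Vu$, immediately refuting the entailment. For the converse I would appeal to Prop.~\ref{sound-cut-based}: were the tableau closed (no open branch) and yet the entailment to fail, there would exist $\val\in\mathsf{Sem}$ with $\val[\{\gamma_1,\dots,\gamma_k\}]\subseteq\Vd$ and $\val(\varphi)\in\Vu$, that is, a valuation satisfying the root set; soundness would then force an open branch, a contradiction. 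Taken together these two directions also show that the verdict does not depend on which particular analytic tableau is developed: when the root set is unsatisfiable every analytic tableau closes, and when it is satisfiable none can close.

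To upgrade this characterization into a genuine \emph{decision procedure}, I would invoke the preceding Proposition guaranteeing that finite analytic tableaux exist for any finite root set. Concretely, one develops a tableau from $\{\T{\gamma_1},\dots,\T{\gamma_k},\F{\varphi}\}$ following the analytic proof-strategy of Def.~\ref{analytic-cut-based}; since each closure-rule application or each block of cut-plus-$\textsl{L}$-rule applications strictly decreases the well-founded triple $\langle i,j,k\rangle$ used in that termination proposition, the construction halts after finitely many steps in an exhausted analytic tableau. Inspecting its finitely many branches then decides, by the characterization above, whether $\gamma_1,\dots,\gamma_k\models_{\mathsf{Sem}}\varphi$.

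The one point that genuinely requires care — and which is not a routine calculation but has in fact already been discharged in Prop.~\ref{complete-cut-based} — is that the completeness argument survives the restriction to \emph{analytic} applications of $\cut$. This is precisely what makes termination and correctness compatible: unrestricted cut would destroy the well-foundedness of the measure, whereas the analytic restriction confines cuts to generalized subformulas of formulas already occurring in the branch, keeping the tableau finite while still ensuring that every open exhausted branch furnishes a model. I expect this interplay to be the only delicate step; the remainder is a direct combination of the soundness, completeness, and termination propositions.
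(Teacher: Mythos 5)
Your proposal is correct and takes essentially the same route as the paper: the corollary is stated there without explicit proof, being precisely the combination of soundness (Prop.~\ref{sound-cut-based}), completeness by analyticity (Prop.~\ref{analytic-complete-cut-based}), and the termination result guaranteeing finite analytic tableaux for finite root sets, applied to the root set $\{\T{\gamma_1},\dots,\T{\gamma_k},\F{\varphi}\}$ exactly as you spell out. Your closing observation --- that the whole argument hinges on completeness surviving the restriction to analytic cuts, which is what Prop.~\ref{complete-cut-based} discharges --- correctly identifies the one nontrivial ingredient.
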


\subsection{Proof complexity}\label{Complexity}
\noindent
We already know from Coroll.~\ref{corol:analytic} that cut-analyticity guarantees that the cut-based tableau system ${\mathcal{T}_{\mathsf{cut}}}(\mathcal{L},\overline{\theta})$ may be used as a decision procedure for $\mathcal{L}$. 
Since finite-valued logics are known to be decidable by the `brute force' truth-table method, it would seem interesting to compare the computational complexity of the two methods.
As in the case of the KE system for Classical Logic (see~\cite{dag:monograph:90}), it is expectable that our cut-based tableaux for finite-valued logics fare significantly better than conventional tableaux in terms of proof complexity, and in general not worse than the truth-table method.
We adapt from~\cite{dag:handbook:99} the definition of some typical complexity measures to be used below.

\begin{definition}
  \label{def:complexity-measures}
 The \textsl{size} of a tableau $\pi$, denoted by ${\mid}\pi{\mid}$, is the total number of labeled formulas occurring in~$\pi$.
The \textsl{$\lambda$-complexity} of a tableau $\pi$, denoted by $\lambda(\pi)$, is the number of nodes in $\pi$. 
The \textsl{$\rho$-complexity} of a tableau $\pi$, denoted by $\rho(\pi)$, is the maximum number of labeled formulas in a node of $\pi$.
\end{definition}

Clearly, the following relation holds in general: ${\mid} \pi {\mid} \le \lambda(\pi\cdot \rho(\pi))$. 
Note that in the case of a tableau~$\pi$ developed within ${\mathcal{T}_{\mathsf{cut}}}(\mathcal{L},\overline{\theta})$, the $\rho$-complexity of~$\pi$ is bounded by $\rho(\pi) \le r(s+1)$, where $s+1$ is the length of the separating sequence~$\overline{\theta}$ and~$\mathrm{r}$ is the maximum arity of any connective of~$\mathcal{L}$.

The following result shows that the cut-based tableau systems from Def.~\ref{tableau-cutbased} can polynomially simulate (p\textsl{-simulate}) the truth-table method. We use $\textsf{sz}(\varphi)$ to denote the size of the set $\textsf{sbf}(\varphi)$.

\begin{proposition}
  \label{th:truth-tables-simulation}
Given a valid labeled formula $\X{}\varphi(p_1,\dots,p_k)$ of $\mathcal{L}$ there is a closed tableau~$\pi$ of $\XX{}{c}\varphi$ in ${\mathcal{T}_{\mathsf{cut}}}(\mathcal{L},\overline{\theta})$ with $\lambda(\pi)=O(\mathsf{sz}(\varphi)\cdot (s+1)\cdot 2^{k (s+1)})$.
\end{proposition}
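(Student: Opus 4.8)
\noindent The plan is to build the refutation of $\XX{}{c}\varphi$ in two phases that mirror the two stages of the truth-table procedure: first I would fan the root out into one branch per truth-table `row', and then verify each row by a short linear computation. In the first phase, starting from the single labeled formula $\XX{}{c}\varphi$ at the root, I would repeatedly apply $\cut$ to each of the $k(s+1)$ basic formulas $\theta_r(p_i)$, for $0\le r\le s$ and $1\le i\le k$ (these are generalized subformulas of $\varphi$, so the cuts are analytic). This produces a complete binary tree of depth $k(s+1)$, hence $O(2^{k(s+1)})$ nodes, whose leaves record the $2^{k(s+1)}$ total binary prints of the tuple $\langle p_1,\dots,p_k\rangle$. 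On every leaf whose induced prints include an unobtainable one, the corresponding closure rule $\reg{\textsl{U}\overline{Y}}$ for a minimal unobtainable partial print $\overline{Y}$ fires and closes the branch (Lemma~\ref{okpartial}); the surviving leaves are exactly those fixing an assignment $e$ with each $\overline{\theta}(e(p_i))$ total and obtainable, i.e.\ a genuine tuple $\langle x_1,\dots,x_k\rangle\in(\mathcal{V}_n)^k$.

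In the second phase, on each surviving leaf I would determine the binary print of every generalized subformula of $\varphi$, processing them in order of increasing $\mathsf{cplx}$ so that immediate generalized subformulas are always fixed before the formula they build up. Basic subformulas are already decided by the Phase-one cuts, and ground intersection subformulas are handled by the $\reg{\textsl{G}(\cdot)}$ closure rules. For an analyzable proper $\theta_r{\odot}$-formula $\psi=\theta_r({\odot}(\psi_1,\dots,\psi_m))$ whose immediate generalized subformulas $\theta_t(\psi_j)$, $0\le t\le s$, already carry definite labels — assembling a total vector $\overline{\overline{Y}}$ — I would apply $\cut$ to $\psi$. By Prop.~\ref{effective} the value $\widehat{\theta_r}(\widehat{{\odot}}(x_1,\dots,x_m))$ is uniquely determined, so exactly one of $R_T^{\theta_r{\odot}}\cap\overline{\overline{Y}}$ and $R_F^{\theta_r{\odot}}\cap\overline{\overline{Y}}$ is empty (Remark~\ref{complementaryBHV}); on the branch carrying the wrong label the statement $\textsl{L}_{X^c\overline{\overline{Y}}}^{\theta_r{\odot}}$ has $\divideontimes$ on its right-hand side, so $\reg{\textsl{L}_{X^c\overline{\overline{Y}}}^{\theta_r{\odot}}}$ closes that branch at once while the sibling survives carrying the correct label. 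Thus each such cut lengthens a single `main' branch by $O(1)$ nodes and attaches one immediately-closed stub. When the procedure reaches $\varphi$ itself, its immediate generalized subformulas are all determined; since $\X{}\varphi$ is valid, the true value of $\varphi$ has $t$-image $X\neq X^c$, so the label $\XX{}{c}\varphi$ inherited from the root is refuted and the closure rule $\reg{\textsl{L}_{X^c\overline{\overline{Y}}}^{\theta_r{\odot}}}$ (for the most concrete fit of $\varphi$) closes the branch with no further cut.

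For the node count, Phase one contributes $O(2^{k(s+1)})$ nodes. In Phase two, each surviving leaf gives rise to a `caterpillar': a main branch along which we cut once per analyzable generalized subformula, each cut adding $O(1)$ nodes together with one $O(1)$ closed stub. As the generalized subformulas of $\varphi$ number at most $\mathsf{sz}(\varphi)\cdot(s+1)$, each caterpillar has $O(\mathsf{sz}(\varphi)\cdot(s+1))$ nodes, and there are at most $2^{k(s+1)}$ of them. Summing, $\lambda(\pi)=O(2^{k(s+1)})+2^{k(s+1)}\cdot O(\mathsf{sz}(\varphi)(s+1))=O(\mathsf{sz}(\varphi)\cdot(s+1)\cdot 2^{k(s+1)})$, as required, and every cut used is analytic.

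The main obstacle is the correctness of Phase two, namely verifying that every surviving branch really does close. This rests squarely on two earlier facts: that an obtainable full assignment determines, through the bivalent statements, a unique binary print for each generalized subformula (Prop.~\ref{effective}), and that $R_X^{\theta_r{\odot}}$ and $R_{X^c}^{\theta_r{\odot}}$ are complementary modulo the unobtainable prints (Remark~\ref{complementaryBHV}, exploited via Lemma~\ref{okpartial-cutbased}) — which is precisely what turns the wrong-label linear statement into a closure rule. I would also need to dispose of the degenerate cases in which $\varphi$ is itself simple and to account for ground intersection subformulas through the $\textsl{G}$-closure rules, but these affect only the bookkeeping and not the stated asymptotics.
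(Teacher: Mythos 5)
Your proposal is correct and takes essentially the same route as the paper's own proof: first cut on all $k(s+1)$ basic generalized subformulas to fan out the $2^{k(s+1)}$ ``truth-table rows'', then work bottom-up through the proper generalized subformulas of $\varphi$, cutting on each one and closing the wrongly-labeled sibling immediately, so that each row costs only $O(\mathsf{sz}(\varphi)\cdot(s+1))$ nodes and the root label $\XX{}{c}\varphi$ forces the final closure. The only differences are cosmetic: you close the wrong branches directly with the $\textsl{L}$-closure rules $\reg{\textsl{L}_{X^c\overline{\overline{Y}}}^{\theta_r{\odot}}}$ (and the unobtainable rows with $\textsl{U}$-rules), whereas the paper allows ``at most one $\textsl{L}$-rule plus one closure rule''; both give the same asymptotic count.
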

\begin{proof}
Here we follow a very simple procedure, different from the one described in Def.~\ref{analytic-cut-based}. First we apply $\mathcal{R}(\textsl{CUT\/})$ to all the basic proper generalized subformulas of $\varphi$. 
This will generate a tree with $2^{k(s+1)}$ branches. 
Then, for each such branch, we proceed by applying $\mathcal{R}(\textsl{CUT\/})$ to an immediate generalized subformula~$\varphi_i$ of~$\varphi$ such that all of its immediate generalized subformulas already occur in the branch. By construction, such a $\varphi_i$ exists. 
We note that at least one of the two branches thereby generated gives rise to a contradiction and may be closed by applying at most one $\textsl{L\/}$-rule and one closure rule. 
Indeed, by the definition of the system, either the system contains an $\textsl{L\/}$-rule for $\varphi_i$ whose application gives rise to a contradiction on one of the proper generalized subformulas of $\varphi_i$, which we close by means of $\abs$, or, as a trivial case, $\varphi_i$ is not satisfiable by any vector of partial binary prints and we can apply an $\textsl{L\/}$-closure rule, that is, either $\Fs\varphi_i \Longrightarrow \divideontimes$ or $\Ts\varphi_i \Longrightarrow \divideontimes$.
If one of the branches does not close, we can reiterate on it the same procedure, by applying $\mathcal{R}(\textsl{CUT\/})$ to a further proper generalized subformula of $\varphi$ such that all its immediate proper generalized subformulas are in the branch. 
We conclude by noticing that all the initial $2^{k(s+1)}$ branches may be closed by following the above described procedure, i.e., by applying $\mathcal{R}(\textsl{CUT\/})$ to at most all the proper generalized subformulas of $\varphi$, and so linearly in $\textsf{sz}(\varphi)\cdot(s+1)$.\qed
\end{proof}

It is worth noting, here, that the latter result shows that cut-based tableaux are able to p-simulate the truth-table method. Indeed, in general, an $n$-valued truth-table for $\varphi(p_1,\dots,p_k)$ will have $n^k$ rows and $\textsf{sz}(\varphi)$ columns, each entry containing a value in $\mathcal{V}_n$ represented by $\log_2(n)$ bits. But we have also seen in Remark~\ref{sepbounds} that, in optimal cases, the number of necessary separating formulas is $s+1=\log_2(n)$, which renders precisely the $\lambda$-complexity obtained in Prop.~\ref{th:truth-tables-simulation}.

We can further show that ${\mathcal{T}_{\mathsf{cut}}}(\mathcal{L},\overline{\theta})$ is never worse than ${\mathcal{T}}(\mathcal{L},\overline{\theta})$. Intuitively, we must be able to reproduce efficiently in ${\mathcal{T}_{\mathsf{cut}}}(\mathcal{L},\overline{\theta})$ any tableau developed within ${\mathcal{T}}(\mathcal{L},\overline{\theta})$.

\begin{proposition}\label{thm:tabsim}
For every proof $\pi$ in the system ${\mathcal{T}}(\mathcal{L},\overline{\theta})$, there exists a proof $\pi^{\mathsf{cut}}$ with the same root in the system ${\mathcal{T}_{\mathsf{cut}}}(\mathcal{L},\overline{\theta})$ such that ${\mid} \pi^{\mathsf{cut}} {\mid} \le {\mid} \pi {\mid}$.
\end{proposition}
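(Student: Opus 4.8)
The plan is to transform $\pi$ into $\pi^{\mathsf{cut}}$ by a top-down, rule-by-rule replacement, arguing by induction on the structure of the derivation tree, and to take both systems in their canonical (unsimplified) form, so that each conclusion branch of a $\reg{\textsl{B}_X^{\theta_r{\odot}}}$ rule records the \emph{full} binary print of every immediate subformula. Every rule of ${\mathcal{T}}(\mathcal{L},\overline{\theta})$ that is \emph{not} a genuinely branching $\textsl{B}$-rule is already available, essentially unchanged, in ${\mathcal{T}_{\mathsf{cut}}}(\mathcal{L},\overline{\theta})$: the closure rules $\abs$, $\reg{\textsl{U\/}\overline{Y}}$ and $\reg{\textsl{G}(\iota(\theta_{r_1}{\odot}_1,\theta_{r_2}{\odot}_2))}$ are literally shared by the two systems, while any linear $\textsl{B}$-rule is the instance $\reg{\textsl{L}_{X\overline{\overline{Y}}}^{\theta_r{\odot}}}$ with $\overline{\overline{Y}}$ totally undefined. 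Such applications are copied verbatim into $\pi^{\mathsf{cut}}$, contributing exactly the same labeled formulas. Hence the whole burden rests on simulating each application of a branching $\reg{\textsl{B}_X^{\theta_r{\odot}}}$, and it suffices to exhibit, for each such application, a local cut-based ``gadget'' that (i)~reproduces exactly the branches of the rule, so that the original immediate subtrees can be re-attached and then transformed recursively; (ii)~uses only \emph{analytic} cuts; and (iii)~introduces no more labeled formulas than the rule it replaces.

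For (i)--(ii), consider an application of $\reg{\textsl{B}_X^{\theta_r{\odot}}}$ to $\X{}{\theta_r({\odot}(\varphi_1,\dots,\varphi_k))}$ whose conclusion branches $D_1\mid\dots\mid D_m$ correspond to the tuples of $R_X^{\theta_r{\odot}}$. Each $D_i$ is a conjunction of labeled formulas $\X{ti}{\theta_t(\varphi_i)}$ with $0\le t\le s$ and $1\le i\le k$; that is, it fixes the binary-print vector $\overline{\overline{Y}}$ of the immediate subformulas. The gadget is a decision tree built by applying $\cut$ to these generalized immediate subformulas $\theta_t(\varphi_i)$ --- each of which is a generalized subformula of the premise already occurring in the branch, so every such cut is analytic. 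The cuts are organized so that a coordinate whose value is already forced is not cut upon but \emph{derived} by the appropriate rule $\reg{\textsl{L}_{X\overline{\overline{Y}}}^{\theta_r{\odot}}}$ (this is exactly the information recorded in $\overline{\overline{M}}(R_X^{\theta_r{\odot}},\overline{\overline{Y}})$), and so that as soon as the partial information accumulated on a branch makes $\X{}{\theta_r({\odot}(\varphi_1,\dots,\varphi_k))}$ unsatisfiable, i.e.\ $R_X^{\theta_r{\odot}}\cap\overline{\overline{Y}}=\varnothing$, that branch is closed immediately by the corresponding $\textsl{L}$-closure rule (whose right-hand side is $\divideontimes$). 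The surviving leaves of the gadget are then in bijection with $D_1,\dots,D_m$; below each of them we re-attach the original immediate subtree, which closes in $\pi$ and is transformed recursively, while the spurious leaves are already closed inside the gadget. Since $\mathsf{cplx}(\theta_t(\varphi_i))<\mathsf{cplx}(\theta_r({\odot}(\varphi_1,\dots,\varphi_k)))$, no circularity arises, and $\pi^{\mathsf{cut}}$ is a closed analytic tableau with the same root.

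For (iii), which is the crux, one must verify that the gadget never costs more labeled formulas than the branching rule it replaces, i.e.\ that its formula count is at most $\sum_{i=1}^{m}{\mid}D_i{\mid}$. The decisive observation is that, in canonical form, a branching $\textsl{B}$-rule is highly redundant: it spells out the entire binary print of each immediate subformula separately in each of its $m$ conclusion branches, with no sharing. The cut gadget, by contrast, (a)~shares each cut literal along the common prefix of all branches passing through it, counting it only once, and (b)~replaces each forced coordinate by a single $\textsl{L}$-rule conclusion instead of re-listing it branch by branch. The main obstacle is to show that these savings dominate the two sources of overhead of the gadget --- the ``companion'' literal produced by each cut on the side that is eventually closed, and the $\divideontimes$ closing each spurious leaf. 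I expect to handle this by a bookkeeping argument: set up a map from the labeled-formula occurrences of the gadget into the occurrences of $D_1,\dots,D_m$, charging each cut companion (together with the closure it triggers) against one of the print entries that the $\textsl{B}$-rule would have repeated across its branches, and then prove this map injective for every shape of $R_X^{\theta_r{\odot}}$. Summing the resulting per-node inequality over all replaced branching nodes, and using that the untouched rules contribute equally to both tableaux, yields ${\mid}\pi^{\mathsf{cut}}{\mid}\le{\mid}\pi{\mid}$, as required.
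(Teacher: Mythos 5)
Your decomposition is the same as the paper's: the closure rules and the nonbranching rules of ${\mathcal{T}}(\mathcal{L},\overline{\theta})$ carry over verbatim to ${\mathcal{T}_{\mathsf{cut}}}(\mathcal{L},\overline{\theta})$, and each application of a branching rule $\reg{\textsl{B}_X^{\theta_r{\odot}}}$ is replaced in place by a local tree of $\textsl{L}$-rules and analytic cuts on the generalized immediate subformulas $\theta_t(\varphi_i)$, after which the original subtrees are re-attached and handled recursively. But at exactly the point you yourself call the crux --- item (iii), the bound ${\mid}\pi^{\mathsf{cut}}{\mid}\le{\mid}\pi{\mid}$ --- there is no proof: you announce a ``charging'' map from the occurrences in the gadget to the occurrences in $D_1,\dots,D_m$ and say you \emph{expect} to prove it injective ``for every shape of $R_X^{\theta_r{\odot}}$'', but no such map is constructed and no invariant is stated from which its injectivity would follow. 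As it stands, the quantitative half of the proposition is simply not established.

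The gap is, moreover, an artifact of leaving your cut discipline underspecified, and it closes itself once the discipline is fixed as in the paper: at each node of the gadget, first apply $\reg{\textsl{L}_{X\overline{\overline{Y}}}^{\theta_r{\odot}}}$ whenever applicable, where $\overline{\overline{Y}}$ is the information accumulated so far, and only cut when no $\textsl{L}$-rule applies. By the definition of $\overline{\overline{M}}(R_X^{\theta_r{\odot}},\overline{\overline{Y}})$, once the $\textsl{L}$-rules are exhausted every still-undetermined coordinate $(i,t)$ takes \emph{both} values among the tuples of $R_X^{\theta_r{\odot}}$ compatible with $\overline{\overline{Y}}$; hence every cut splits the compatible set into two nonempty parts, and the situation $R_X^{\theta_r{\odot}}\cap\overline{\overline{Y}}=\varnothing$, which you propose to close by an $\textsl{L}$-closure rule, never arises. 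Consequently there are no spurious leaves and no wasted companion literals: the leaves of the gadget are in bijection with the conclusion branches $D_1,\dots,D_m$, and every labeled formula introduced at a node $v$ occurs in $D_j$ for \emph{every} leaf $j$ below $v$ --- this is precisely the paper's observation that each formula of the simulating tree occurs in at least one branch of $\reg{\textsl{B}_X^{\theta_r{\odot}}}$. The count is then immediate, with no injectivity argument needed: writing $S_v$ for the formulas introduced at node $v$ and $L(v)$ for the leaves below $v$, each $D_j$ consists exactly of the formulas introduced along the path to its leaf, so
$$\sum_{v}{\mid}S_v{\mid}\;\le\;\sum_{v}{\mid}S_v{\mid}\cdot{\mid}L(v){\mid}\;=\;\sum_{j=1}^{m}{\mid}D_j{\mid},$$
and summing this per-node inequality over all replaced branching applications, together with the untouched rules contributing equally to both tableaux, yields ${\mid}\pi^{\mathsf{cut}}{\mid}\le{\mid}\pi{\mid}$.
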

\begin{proof}
It is enough to show that each branching rule of ${\mathcal{T}}(\mathcal{L},\overline{\theta})$ may be efficiently derived in the cut-based system; the nonbranching rules of ${\mathcal{T}}(\mathcal{L},\overline{\theta})$ are already primitive rules of ${\mathcal{T}_{\mathsf{cut}}}(\mathcal{L},\overline{\theta})$.
Let us consider an arbitrary such a branching rule $\mathcal{R}({\textsl{B}_X^{\theta_r{\odot}}})$:
$$
{\X{}\theta_r({\odot}(\varphi_1,\dots,\varphi_k))} {\quad\Longrightarrow\quad} {\mid\mid}_{\overline{Z}\in R_X^{\theta_r{\odot}}}
{\textsl{V}(\varphi_1,\dots,\varphi_k\,;\overline{Z})}
$$

Starting with root $\X{}\theta_r({\odot}(\varphi_1,\dots,\varphi_k))$, in ${\mathcal{T}_{\mathsf{cut}}}(\mathcal{L},\overline{\theta})$ we can follow a procedure consisting in: (i) applying linear elimination rules of the form $\mathcal{R}(\textsl{L\/}_{X \overline{\overline{Y}}}^{\theta_r{\odot}})$ whenever possible; (ii)~if there is no $\overline{\overline{Y}}$ for which the rule $\mathcal{R}(\textsl{L}_{X \overline{\overline{Y}}}^{\theta_r{\odot}})$ may be applied, then there exist $1\le i \le k$ and $0\le t \le s$ such that both $\F{\theta_t(\varphi_i)}$ and $\T{\theta_t(\varphi_i)}$ are present in (at least one branch of) the conclusion of $\mathcal{R}({\textsl{B}_X^{\theta_r{\odot}}})$; then we apply $\mathcal{R}(\textsl{CUT\/})$ on $\theta_t(\varphi_i)$ and repeat the procedure. It is easy to see that, by construction, the amount of information in the simulating tree is not bigger than the one produced by the given rule, i.e.,~each formula in such a simulating tree also occurs in at least one branch of the rule $\mathcal{R}({\textsl{B}_X^{\theta_r{\odot}}})$.
\qed
\end{proof}

The decision procedure proposed in Def.~\ref{analytic-cut-based} is based on an analytic proof strategy that guarantees termination. However, in general there might be better heuristics for guiding the development of a tableau. For example, the canonical procedure given in~\cite{dag:monograph:90} for the KE system for classical logic is, in essence, a generalization of the procedure we adopt in the proof of Theorem~\ref{thm:tabsim}, where we apply linear rules as long as possible and use $\cut$ on some proper generalized subformula only when no other rule is applicable.

\begin{example}[Slim tableaux for fat formulas]\label{slim}
Recalling the fat formulas defined in Ex.~\ref{fat-tableau}, and adopting the optimal proof strategy established above, Fig.~\ref{fat3} depicts a slim closed tableau for $T{:}\Phi_3$ in the system ${\mathcal{T}_{\mathsf{cut}}}(\textrm{\L}_3,\overline{\theta})$. The label \ding{172} denotes seven consecutive applications of $\mathcal{R}(\textsl{L}_{T \langle \uparrow\uparrow\uparrow\uparrow\rangle}^{\land})$, \ding{173} an analytic application of  $\mathcal{R}(\textsl{CUT})$, and~\ding{174} the closure of the branch using 
$\mathcal{R}(\textsl{L}_{T \langle \uparrow\uparrow\uparrow\uparrow\rangle}^{\neg})$ and $\mathcal{R}(\textsl{CUT})$. Labels \ding{175} and \ding{176} correspond respectively to four and to two applications of $\mathcal{R}(\textsl{L}_{T \langle \uparrow\uparrow\rangle}^{\lor})$ .
\end{example}

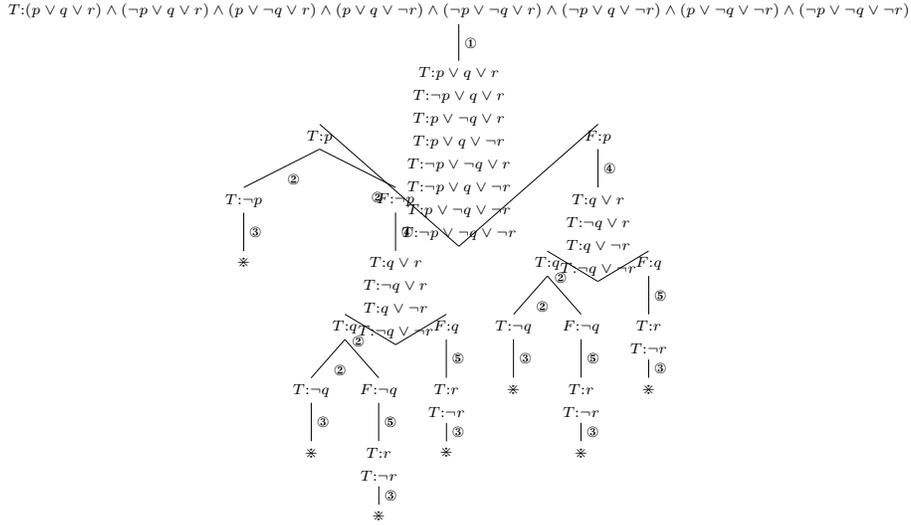
\begin{figure}[ht]\scriptsize
\begin{center}
\scalebox{.8}{
\begin{tikzpicture}[sibling distance=1.14em]
\hspace{-5mm}
\tikzset{every tree node/.style={align=center,anchor=north}}
\tikzset{level 2/.style={level distance=100pt}}
\tikzset{level 4/.style={level distance=60pt}}
\tikzset{level 5/.style={level distance=60pt}}
 \Tree[.{$\Ts (p\vee q\vee r)\wedge(\neg p\vee q\vee r)\wedge(p\vee \neg q\vee r)\wedge(p\vee q\vee \neg r)\wedge(\neg p\vee \neg q\vee r)\wedge(\neg p\vee q\vee \neg r)\wedge(p\vee \neg q\vee \neg r)\wedge(\neg p\vee \neg q\vee \neg r)$}
\edge node[auto=left]{\ding{172}};        
[.{$\Ts p\vee q\vee r$ \\[1mm] $\Ts \neg p\vee q\vee r$ \\[1mm] $\Ts p\vee \neg q\vee r$ \\[1mm] $\Ts p\vee q\vee \neg r$ \\[1mm] $\Ts \neg p\vee \neg q\vee r$ \\[1mm] $\Ts\neg p\vee q\vee \neg r$ \\[1mm] $\Ts p\vee \neg q\vee \neg r$ \\[1mm] $\Ts\neg p\vee \neg q\vee \neg r$}
        \edge node[auto=left]{\ding{173}}; 
        [.{$\Ts p$}
          \edge node[auto=left]{\ding{173}}; 
          [.{$\Ts \neg p$}
            \edge node[auto=left]{\ding{174}};
            [.$\divideontimes$
            ]
          ]
          [.{$\Fs \neg p$}
            \edge node[auto=left]{\ding{175}};
            [.{$\Ts q\vee r$\\[1mm] $\Ts \neg q\vee r$\\[1mm] $\Ts q\vee \neg r$\\[1mm] 
              $\Ts \neg q\vee \neg r$}
              \edge node[auto=left]{\ding{173}}; 
              [.{$\Ts q$}
                \edge node[auto=left]{\ding{173}};
                [.{$\Ts \neg q$}
                  \edge node[auto=left]{\ding{174}};
                  [.{$\divideontimes$}
                  ]
                ]
                [.{$\Fs \neg q$}
                  \edge node[auto=left]{\ding{176}};
                  [.{$\Ts r$\\[1mm] $\Ts \neg r$}
                    \edge node[auto=left]{\ding{174}};
                    [.{$\divideontimes$}
                    ]
                  ]
                ]
              ]
              [.{$\Fs q$}
                \edge node[auto=left]{\ding{176}};
                [.{$\Ts r$\\[1mm] $\Ts \neg r$}
                  \edge node[auto=left]{\ding{174}};
                  [.{$\divideontimes$}
                  ]
                ]
              ]
            ]
          ]
        ]
        [.{$\Fs p$}
          \edge node[auto=left]{\ding{175}};
          [.{$\Ts q\vee r$\\[1mm]$\Ts \neg q\vee r$\\[1mm]$\Ts q\vee \neg r$\\[1mm] 
            $\Ts \neg q\vee \neg r$}
            \edge node[auto=left]{\ding{173}};
            [.{$\Ts q$}
              \edge node[auto=left]{\ding{173}};
              [.{$\Ts \neg q$}
                \edge node[auto=left]{\ding{174}};
                [.{$\divideontimes$}
                ]
              ]
              [.{$\Fs \neg q$}
                \edge node[auto=left]{\ding{176}};
                [.{$\Ts r$\\[1mm] $\Ts \neg r$}
                  \edge node[auto=left]{\ding{174}};
                  [.{$\divideontimes$}
                  ]
                ]
              ]
            ]
            [.{$\Fs q$}
              \edge node[auto=left]{\ding{176}};
              [.{$\Ts r$\\[1mm] $\Ts \neg r$}
                \edge node[auto=left]{\ding{174}};
                [.{$\divideontimes$}
                ]
              ]
            ]
          ]
        ]
       ]
      ] %
\end{tikzpicture}}
\end{center}
\caption{A cut-based closed tableau for $\T{\Phi_3}$.}\label{fat3}	
\end{figure}

\section{Final remarks} \label{conc}

\noindent
The literature of the area abounds with approaches to the study of finite-valued logics based on providing general recipes for producing application-tailored proof systems that could serve as alternative, in supplying decision procedures, to the inefficient truth-table computation.  Here we have described fresh approaches to that study that aim both at being generic and at being efficient.  Our approaches are roughly based on exploiting the logical two-valuedness of the meta-theory of finite-valued logics and on describing their truth-tables in a uniform classic-like fashion, and alongside that quest we expose the non-obvious computational content from the so-called `Suszko's Thesis'.
Writing every single logic with the help of an adequate bivalent semantics or an appropriate classically-labeled (two-signed) tableau system has the obvious advantage of making it easier to compare some given logic to another.  In particular, on what concerns the comparison of different logics, once there is some agreement concerning the language of these logics, one might use our classic-like tableaux to check whether a rule of a certain logic is derivable in another logic (cf.~\cite{mar:men:tfaae4fvl}), and the task of concocting convenient proof tactics that allow for the automation of reasoning within these logics may indeed be easily implemented (cf.~\cite{mar:09d}).  Another advantage of setting up a classic-like framework for a given logic lies in the possibility of dualizing any rule or operator from this logic simply by exchanging truth for falsity, and vice-versa (cf.~\cite{mar:V2V}).  
From the proof-theoretical perspective, differently from the path trodden on early predecessors of the present study, such as~\cite{cal:car:con:mar:humbug:05}, in the present paper we have first presented canonical \textit{cut-free} tableau systems (as in~\cite{ccal:mar:09a}), and have presented the underlying results in full detail, fixing earlier shortcomings of our own approach.
Another great advantage of the present study was the detailed presentation also of an alternative approach based on analytic cut-based tableau systems that allow in general for an exponential speed-up on what concerns proof complexity --- more precisely, that allow for the p(olynomial)-simulation of the brute force truth-tabular procedure.
It might be useful to further extend our complexity-oriented study in order to account for the very cost of the axiom-extraction mechanisms, and even to extend the customary studies on proof complexity in order to measure the apparently non-negligible cost of unifying with long rule premises in the context of large collections of axioms/rules.
We shall leave such extensions, however, as matter for future research.

The received approach to the subject of representation and automation of reasoning in finite-valued logics, in standard references such as~\cite{DBLP:books/el/RV01/BaazFS01,hah:AMVL:HPL}, based on the so-called `signed logic', employs labeled proof formalisms known as `sets-as-signs', which introduce in the language of a given genuinely $n$-valued logic~$\mathcal{L}$ syntactic resources to deal with collections of signs representing the~$n$ truth-values of~$\mathcal{L}$. However, sets-as-signs tableaux seem to enjoy a narrower range of applicability than classic-like tableaux, and in particular their use in logic comparison or dualization is far from obvious.

In labeled deductive systems (cf.~\cite{Gab:LDS:96}) the role of internalizing important semantic information at the syntactical level is routinely played by the use of labels.  A similar goal is often attained by the use of negation in non-signed tableaux for classical and for several non-classical logics.  For instance, in standard references such as~\cite{smu:FOL}, a labeled bivalent statement such as
$$\Fs \varphi\land\psi \quad\Longrightarrow\quad \Fs \varphi \mid\mid \Fs \psi$$
is often replaced by a non-labeled statement like
$$\neg(\varphi\land\psi) \quad\Longrightarrow\quad \neg(\varphi) \mid\mid \neg(\psi).$$
The second statement above clearly goes counter the canonical subformula property, and~$\neg$ in this case obviously plays the role of a separator.  In our current approach we simultaneously utilize labels and separators, in an approach that presupposes generalized notions of subformula, formula complexity and analyticity.  While other recent approaches (cf.~\cite{baa:lah:zam:ijcar12}) have been based on extending the classes of rules that might be called `canonical' in order to accommodate larger sets of labels while insisting on the usual notion of analyticity, our own approach guarantees effectiveness by extending instead the reach of analyticity within a 2-signed labeled environment.

A comment is due here also on our use of the \textit{cut rule}.  On the one hand, in a many-valued setting, different realizations of cut are definable, all corresponding to the fact that a formula can only be assigned a single truth-value in a given interpretation.  While in the sets-as-signs approach, one obvious version of the cut rule will typically consist in expanding the tree with as many branches as the number of truth-values, it is worth noting that in our classic-like framework cut will always be binary branching.
On the other hand, on what concerns cut-based tableaux, some initial advances toward extending them from classical logic to finite-valued logics were sketched in~\cite{dag:monograph:90}, and in~\cite[Chapter 6.1]{hah:book:94} the sets-as-signs approach is claimed (without proof) to produce, in terms of proof complexity, the same improvements obtained in the classical case (p-simulation of truth tables).
To the best of our knowledge, however, the proof of such claim and in fact the first full-fledged approaches to the matter have been done in~\cite{ccal:mar:wollic12}, having the present paper as a sequel.

There are many directions along which the current line of research may be pursued.  For instance, as it has been remarked above, the general axiom/rule extraction mechanisms that we propounded produces statements that may often be streamlined into contracted forms (using standard tools of classical logic at the metalinguistic level).  Instead of first extracting statements in a long form and only simplifying them later, however, one may also propose mechanisms for extracting equivalent sets of rules already in contracted form.  The task of optimizing the rules produced by our mechanism is worth investigating, but we leave it to a future opportunity.  Another interesting line of research concerns other proof formalisms.  While we have chosen to concentrate on tableaux for their relatively unbureaucratic proof theory, choosing sequent systems instead would straightforwardly require us to read closure rules as sequent axioms, and rewrite our $B$-statements contrapositively, rearranging their new left-hand sides in conjunctive normal form (contrast this with Remark~\ref{DNF}).  As sequent systems allow in general for a more flexible meta-theory and a possibly wider application as a logical framework, it would seem appealing to venture an independent study of them.  An adaptation of our algorithms in order to output natural deduction systems, as witnessed by~\cite{eng:hae:per:ND4FVL}, may of course also be the subject of investigation.  The study of other proof formalisms such as reasoning mechanisms based on satisfiability checking or resolution, as it has been done for the sets-as-signs approach in~\cite{FRDP01}, would seem equally welcome.

While one might think that the present study is too limited in the sense of being applicable \textit{only} to finite-valued logics (while many important non-classical logics are known to be infinite-valued), it should here be observed that the main results from Section~\ref{TableauExtraction} also apply when the bivalent statements are not obtained through the method explained in Subsection~\ref{sec:classiclike}, as soon as these statements are based on a generalized notion of compositionality analogous to the one studied in Subsection~\ref{sec:effectiveness}.  
In that case, deductive formalisms based on a generalized form of analyticity would naturally ensue.
Having reached the current milestone, in future work we intend to explore extensions of our present mechanisms to cover other classes of non-classical logics, in particular those defined by genuinely infinite-valued logics, by nondeterministic semantics (cf.~\cite{avr:bn:kon:genFVS}) and by other semantics that presuppose broadening the notion of truth-functionality.

\subsection*{\bf Acknowledgment}

An early draft of the present paper circulated 
for some time 
under the title ``A uniform classic-like analytic deductive formalism for finite-valued logics", containing most details about the extraction of bivalent semantics and cut-free tableaux for finite-valued logics. After the third author joined the other two, the original draft was fully rewritten and extended by the cut-based approach. 
The authors are indebted to Carlos Silva for his very careful reading of several versions of this paper.  Other useful comments were contributed by two anonymous referees, to which the authors are much obliged.

\section*{References} 

 \providecommand{\url}[1]{#1} \newcommand{\noopsort}[1]{}

\end{document}